\newtheorem{theorem}{Theorem}[section]
\newtheorem{lemma}[theorem]{Lemma}
\newtheorem{coro}[theorem]{Corollary}
\newtheorem{obs}[theorem]{Observation}
\newcommand{\defi}[1]{\emph{#1}}
\newcommand{\comp}[1]{\overline{#1}}
\newcommand{\Vusk}{Vu\v{s}kovi\'c} 
\newcommand{\EH}{Erd\H{o}s-Hajnal}
\newcommand{\set}[1]{\{#1\}}
\renewcommand{\choose}[2]{\left( \genfrac{}{}{0pt}{}{#2}{#1} \right) }
\begin{document}
\title{Clique-Stable Set Separation in Perfect Graphs with no Balanced Skew-Partitions\tnoteref{stint}}

\tnotetext[stint]{This work is partially supported by ANR project \textsc{Stint} under reference ANR-13-BS02-0007.}

\author[ens]{Aur\'elie Lagoutte\corref{cor1}}
\ead{aurelie.lagoutte@ens-lyon.fr}

\author[ens]{Th\'eophile Trunck}
\ead{theophile.trunck@ens-lyon.fr}

\cortext[cor1]{Corresponding author}

\address[ens]{LIP, UMR 5668 ENS Lyon - CNRS - UCBL - INRIA, Universit\'e de Lyon, 46, all\'ee de l'Italie, 69364 Lyon France.}

\date{\today}

\begin{frontmatter}

\begin{abstract}

  Inspired by a question of Yannakakis on the Vertex Packing polytope of perfect graphs, we  study the Clique-Stable Set separation in a non-hereditary subclass of perfect graphs.
A cut $(B,W)$ of $G$ (a bipartition  of  $V(G)$) \emph{separates} a clique $K$ and a stable set $S$ if $K\subseteq B$ and $S\subseteq W$. A \emph{Clique-Stable Set separator} is a family of cuts such that for every clique $K$, and for every stable set $S$ disjoint from $K$, there exists a cut in the family that separates $K$ and $S$. Given a class of graphs, the question is to know whether every graph of the class admits a Clique-Stable Set separator containing only polynomially many cuts. It was recently proved to be false for the class of all graphs (G\"o\"os 2015), but it remains open for perfect graphs, which was Yannakakis' original question.
  Here we investigate this problem on  perfect graphs with no balanced skew-partition; the \emph{balanced skew-partition} was introduced in the decomposition theorem of Berge graphs which led to the celebrated proof of the Strong Perfect Graph Theorem. Recently, Chudnovsky, Trotignon, Trunck and Vu\v{s}kovi\'c proved that forbidding this unfriendly decomposition permits to recursively decompose Berge graphs (more precisely, Berge trigraphs) using 2-join and  complement 2-join until reaching a ``basic'' graph, and in this way, they found an efficient  combinatorial  algorithm to color those graphs.
  
  We apply their decomposition result to prove that perfect graphs with no balanced skew-partition admit a quadratic-size Clique-Stable Set separator, by taking advantage of the good behavior of 2-join with respect to this property.
   We then generalize this result and prove that 
   the Strong \EH \ property holds in this class, which means that every such graph  has a linear-size biclique or complement biclique. This  is remarkable since the property does not hold for all perfect graphs (Fox 2006), and this is motivated here by the following statement:
when the Strong \EH \ property holds in a hereditary class of graphs, then both the \EH \ property and the polynomial Clique-Stable Set separation hold.
 Finally,
  we define the generalized $k$-join and generalize both our results
  on classes of graphs admitting such a decomposition. 
\end{abstract}

\begin{keyword}
Clique-Stable Set separation \sep perfect graph \sep trigraph \sep 2-join 
\end{keyword}
\end{frontmatter}

\section{Introduction}

In 1991, Yannakakis \cite{Yannakakis91} studied the
Vertex Packing polytope of a graph (also called the Stable Set polytope),
and asked for the existence of an extended formulation, that is to say
a simpler polytope in higher dimension whose projection would be the
Vertex Packing polytope. He then focused on perfect
graphs, for which the non-negativity and the clique constraints
suffice to describe the Vertex Packing polytope. This led him to a
communication complexity problem which can be restated as follows: does
there exist a  family $F$ of polynomially many cuts (a \emph{cut} is a bipartition of the vertices of the graph)  such that, for every
clique $K$ and every stable set $S$ of the graph that do not
intersect, there exists a cut $(B,W)$ of $F$ that \emph{separates} $K$
and $S$, meaning $K\subseteq B$ and $S \subseteq W$? Such a family of
cuts separating all the cliques and the stable sets is called a
Clique-Stable Set separator (CS-separator for short). The existence of
a polynomial CS-separator (called the \emph{Clique-Stable Set separation}, or CS-separation)
is a necessary condition for the existence of an extended
formulation. Yannakakis showed that both exist for several subclasses
of perfect graphs, such as comparability graphs and their complements,
chordal graphs and their complements, and Lov\'asz proved it for
a generalization of series-parallel graphs called $t$-perfect
graphs \cite{Lovasz94}. However, the problem remains open for perfect graphs in general.

Twenty years have passed since Yannakakis introduced the problem and
several results have shed some light on the problem. First of all, a negative
result due to Fiorini et al. \cite{Fiorini11} asserts that there does not exist
an extended formulation for the Vertex Packing polytope for all
graphs. 
Furthermore on the negative side, G\"o\"os recently proved the existence of graphs for which no polynomial CS-separator exists \cite{Goos}.
This pushes us further to the study of perfect graphs, for
which great progress has been made. The most famous one is the Strong
Perfect Graph Theorem \cite{SPGT}, proving that a graph is perfect if and only if
it is Berge, that is to say it contains no odd hole and no odd antihole (as induced subgraph).
It was proved by Chudnovsky, Robertson,
Seymour and Thomas, and their proof relies on a decomposition
theorem \cite{SPGT, trigraphs}, whose statement can be summed up as follows: every Berge graph is either in some basic class, or has some
kind of decomposition (\emph{2-join}, \emph{complement 2-join} or \emph{balanced skew-partition}).
It seems natural to take advantage of this decomposition theorem to try to solve Yannakakis' question on perfect graphs. We will see that the 2-join and its complement behave well with respect to the Clique-Stable Set separation, whereas the balanced skew-partition does not.

Consequently, instead of proving the CS-separation for all perfect graphs, we would like to reach a weaker goal and prove the CS-separation for perfect graphs that can be recursively decomposed using 2-joins or complement 2-joins until reaching a basic class. 
Because of the decomposition theorem, a natural candidate is the class of Berge graphs with no balanced skew-partition, which has already been studied in 
\cite{bergefreebsp}, where
Chudnovsky, Trotignon, Trunck and \Vusk \  aimed at finding a  combinatorial polynomial-time algorithm to color perfect graphs. 
They proved that if a Berge graph is not basic and has no balanced skew-partition, then its decomposition along a 2-join gives two Berge graphs which still have no balanced skew-partition\footnote{In fact, the correct statement must be stated in terms of trigraphs instead of graphs.}.
This, together with a deeper investigation,  led them to  a combinatorial polynomial-time algorithm to compute the Maximum Weighted Stable Set in Berge graphs with no balanced skew-partition, from which they deduced a coloring algorithm.

They used a powerful concept,  called \emph{trigraph}, which is a generalization of a graph. It was introduced by  Chudnovsky in her PhD thesis \cite{TheseMaria, trigraphs}  to simplify the statement and the proof of the Strong Perfect Graph Theorem. Indeed, the original statement of the decomposition theorem provided five different outcomes, but she proved that one of them (the homogeneous pair) is not necessary. Trigraphs are also very useful in the study of bull-free graphs \cite{BullFreeI, BullFreeII-III, FPTBullFree} and claw-free graphs \cite{ClawFree}.
Using the previous study of Berge trigraphs with no balanced skew-partition from \cite{bergefreebsp}, we prove that Berge graphs with no balanced skew-partition have a polynomial CS-separator. We then observe that we can obtain the same result by relaxing 2-join to a more general kind of decomposition, which we call  \emph{generalized $k$-join}.

Besides, the Clique-Stable Set separation has been recently studied in \cite{Bousquet13}, where the authors exhibit polynomial CS-separators for several classes of graphs, namely random graphs, $H$-free graphs where $H$ is a split graph, $P_5$-free graphs, and $(P_k, \comp{P_k})$-free graphs (where $P_k$ denotes the path on $k$ vertices and $\comp{P_k}$ its complement). This last result was obtained as a consequence of \cite{Thomasse13} where the same authors prove
that the Strong \EH \ property holds in this class, which implies the Clique-Stable Set separation and the \EH \ property 
(provided that the class is  closed under taking induced subgraphs \cite{Alon05, FoxPach08}).
The \EH \ conjecture asserts that for every graph $H$, there exists $\varepsilon>0$ such that every $H$-free graph $G$ admits a clique or a stable set of size $|V(G)|^\varepsilon$. Several attempts have been made to prove this conjecture (see \cite{EHSurvey} for a survey). In particular, Fox and Pach introduced to this end the Strong \EH \ property \cite{FoxPach08}: a \emph{biclique} is a pair  of disjoint subsets of vertices $V_1, V_2$ such that $V_1$ is complete to $V_2$; the \emph{Strong \EH \ property} holds in a class $\mathcal{C}$ if there exists a constant $c>0$ such that for every $G\in \mathcal{C}$, $G$ or $\comp{G}$ admits a biclique $(V_1, V_2)$ with $|V_1|, |V_2|\geq c\cdot |V(G)|$.
In other words, Fox and Pach ask for a linear-size biclique in $G$ or in $\comp{G}$, 
instead of  a polynomial-size clique in $G$ or in $\comp{G}$, as in the definition of the \EH \ property. 
 Even though the \EH \ property is trivially true for perfect graphs with $\varepsilon=1/2$ (since $|V(G)|\leq \alpha(G)\chi(G)$ and $\chi(G)=\omega(G)$), Fox proved that a subclass of comparability graphs (and thus, of perfect graphs) does not have the Strong \EH \ property \cite{Fox06}. 
Consequently, it is worth investigating this property in the subclass of perfect graphs under study. We prove that perfect graphs with no balanced skew-partition have the Strong  \EH \ property.
 Moreover we combine both generalizations and prove that trigraphs that can be recursively decomposed with generalized $k$-join also have the Strong \EH \ property.  It should be noticed that the class of Berge graphs with no balanced skew-partition is not hereditary (\emph{i.e.} not closed under taking induced subgraphs) because removing a vertex may create a balanced skew-partition, so the CS-separation is not a consequence of the Strong \EH \ property and needs a full proof. 

The fact that the Strong \EH \ property holds in Berge graphs with no balanced skew-partition shows that this subclass is much less general than the whole class of perfect graphs. 
This observation is confirmed by another recent work by Penev \cite{2CliqueColoSansSkew} who also studied the class of Berge graphs with no balanced skew-partition and proved that they admit a 2-clique-coloring (\emph{i.e.} there exists a non-proper coloration with two colors such that every inclusion-wise maximal clique is not monochromatic). 
Perfect graphs in general are not 2-clique-colorable, but they were conjectured to be 3-clique-colorable; Charbit et al. recently disproved it by constructing perfect graphs with arbitrarily high clique-chromatic number \cite{CliqueChromaticPerfect}.

Let us now define what is a balanced skew-partition in a graph and then compare the class of perfect graphs with no balanced skew-partition to classical hereditary subclasses of perfect graphs.
A graph $G$ has a \emph{skew-partition} if $V(G)$ can be partitioned into $(A,B)$ such that neither $G[A]$ nor $\comp{G[B]}$ is connected. Moreover, the \emph{balanced} condition, although essential in the proof of the Strong Perfect Graph Theorem, is rather technical: the partition is \emph{balanced} if every path in $G$ of length at least 3, with ends in $B$ and interior in $A$, and every path in $\comp{G}$, with ends in $A$ and interior in $B$, has even length.
Observe now  for instance that $P_4$, which is a bipartite,  chordal and comparability graph, has a balanced skew-partition (take the extremities as the non-connected part $A$, and the two middle vertices as the non-anticonnected part $B$). However, $P_4$ is an induced subgraph of $C_6$, which has no skew-partition. So sometimes one can \emph{kill} all the balanced skew-partitions by adding some vertices. Trotignon and Maffray proved that given a basic graph $G$ on $n$ vertices having a balanced skew-partition, there exists a basic graph $G'$ on $\mathcal{O}(n^2)$ vertices which has no balanced skew-partition and contains $G$ as an induced subgraph \cite{TrotignonMaffray}.
Some degenerated cases are to be considered: graphs with at most 3 vertices as well as cliques and stable sets do not have a balanced skew-partition.
Moreover,
Trotignon showed \cite{Trotignon08} that every double-split graph does not have a balanced skew-partition. In addition to this, observe that any clique-cutset of size at least 2 gives rise to a balanced skew-partition: as a consequence, paths, chordal graphs and cographs always have a balanced skew-partition, up to a few degenerated cases. Table \ref{tab: BSP} compares the class of Berge graphs with no balanced skew-partition with some examples of well-known subclasses of perfect graphs. In particular, there exist two non-trivial perfect graphs lying in none of the above mentioned classes (basic graphs, chordal graphs, comparability graphs, cographs), one of them having a balanced skew-partition, the other not having any.

\begin{table}[t]

\center
{\renewcommand{\arraystretch}{1.1}\begin{tabular}{|m{5cm}|c|  c | }
\hline
 & With a BSP & With no BSP \\ \hline \hline
 Bipartite graph & $P_4$ & $C_4$ \\ \hline
  Compl. of a bipartite graph & $P_4$ & $C_4$ \\ \hline
   Line graph of a bip. graph & $P_4$ & $C_4$ \\ \hline
    Complement of a  line graph of a bip. graph  & $P_4$ & $C_4$ \\ \hline
     Double-split  & None & $C_4$ \\ \hline
      Comparability graph & $P_4$ & $C_4$ \\ \hline
     Path & $P_k$ for $k\geq 4$ & None \\ \hline
     Chordal  & All (except deg. cases) & $K_t, S_t, \comp{C_4}$ , $t\geq 4$ \\ \hline
        Cograph & All (except deg. cases) & $K_t, S_t, C_4, \comp{C_4}$, $t\geq 4$ \\ \hline
        None of the classes above \phantom{xxxxxx}  & \parbox{3.5cm}{Worst Berge Graph Known so Far} & Zambelli's graph \\ \hline
\end{tabular}}

\caption{Classical subclasses of perfect graphs compared with perfect graphs with no balanced skew-partition (BSP for short). Graphs with less than 4 vertices are not considered. See Figure \ref{fig: zambelli, WBGKSF} for a description of the Worst Berge Graph Known So Far and Zambelli's graph.}
\label{tab: BSP}
\end{table}

\begin{figure}
\center

\subfigure[The Worst Berge Graph Known So Far. (discovered by Chudnovsky and Seymour; displayed in \cite{TrotSurvey}). Red edges (resp. blue edges, green edges) go to red (resp. blue, green) vertices.]{\hspace{40pt}\includegraphics[scale=0.6]{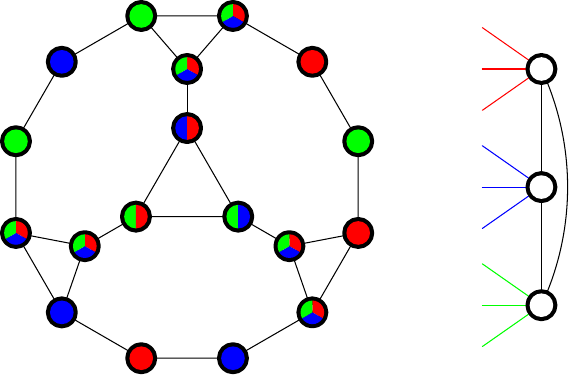} \hspace{40pt}} 
\hspace{20pt}
\subfigure[Zambelli's graph.]{\includegraphics[scale=1.25]{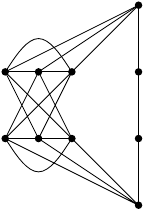}}

\caption{The two non-trivial perfect graphs dealt with in Table \ref{tab: BSP}: the first has a BSP, the second one does not.}
\label{fig: zambelli, WBGKSF}
\end{figure}

We start in Section \ref{sec: def} by introducing trigraphs and all related definitions. In Section \ref{sec: decomposition}, we state the decomposition theorem from \cite{bergefreebsp} for Berge trigraphs with no balanced skew-partition.
The results come in the last two sections: Section \ref{sec: CS-sep} is concerned with finding polynomial-size \mbox{CS-separators} in Berge trigraphs with no balanced skew-partition, and then with extending this result to  classes of trigraphs closed by generalized $k$-join, provided that the \emph{basic} class admits polynomial-size CS-separators. As for Section~\ref{sec: SEH}, it is dedicated to proving that the Strong \EH \ property holds in perfect graphs with no balanced skew-partition, and then in classes of trigraphs closed by generalized $k$-join (with a similar assumption on the \emph{basic} class).

%

\section{Definitions}
\label{sec: def}

We first need to introduce trigraphs: this is a generalization of graphs where a new kind of adjacency between vertices is defined: the \emph{semi-adjacency}. The intuitive meaning of a pair of semi-adjacent vertices, also called a \emph{switchable pair}, is that in some situations, the vertices are considered as adjacent, and in some other situations, they are considered as non-adjacent. This implies to be very careful about terminology, for example in a trigraph two vertices are said \emph{adjacent} if there is a ``real" edge between them but also if they are semi-adjacent. What if we want to speak about ``really adjacent" vertices, in the old-fashioned way? The dedicated terminology is \emph{strongly adjacent}, adapted to \emph{strong neighborhood}, \emph{strong clique} and so on. 

Because of this, we need to redefine all the usual notions on graphs to adapt them on trigraphs, which we do in the the next subsection. For example, a trigraph is not Berge if we can turn each switchable pair into a strong edge or a strong antiedge in such a way that the resulting graph has an odd hole or an odd antihole.
%
%
 Moreover, the trigraphs we are interested in come from decomposing Berge graphs along 2-joins. As we will see in the next section, this leads to the appearance of only  few switchable pairs, or at least distant switchable pairs. This property is useful both for decomposing trigraphs and for proving the CS-separation in basic classes, so we work in the following on a restricted class of Berge trigraphs, which we denote $\mathcal{F}$.
 In a nutshell\footnote{The exact definition is in fact much more precise.}, it is the class of  Berge trigraphs whose \emph{switchable components} (connected components of the graph obtained by keeping only switchable pairs) are paths of length at most 2.

Let us now give formal definitions.

\subsection{Trigraphs}

For a set $X$, we denote by $\choose{2}{X}$ the set of all subsets of
$X$ of size~2. For brevity of notation an element $\{ u,v \}$ of $\choose{2}{X}$ is also denoted by $uv$ or $vu$. A \defi{trigraph} $T$
consists of a finite set $V(T)$, called the \defi{vertex set} of $T$,
and a map $\theta : {\choose{2}{V(T)}} \longrightarrow \{ -1,0,1 \}$,
called the \defi{adjacency function}.

Two distinct vertices of $T$ are said to be \defi{strongly adjacent} if
$\theta(uv)=1$, \defi{strongly antiadjacent} if $\theta(uv)=-1$, and
\defi{semiadjacent} if $\theta(uv)=0$. We say that $u$ and $v$ are
\defi{adjacent} if they are either strongly adjacent, or semiadjacent;
and \defi{antiadjacent} if they are either strongly antiadjacent, or
semiadjacent. An \defi{edge} (\defi{antiedge}) is a pair of adjacent
(antiadjacent) vertices. If $u$ and $v$ are adjacent (antiadjacent),
we also say that $u$ is \defi{adjacent (antiadjacent) to} $v$, or that
$u$ is a \defi{neighbor (antineighbor)} of $v$. The \defi{open neighborhood} $N(u)$ of $u$ is the set of neighbors of $u$, and the \defi{closed neighborhood} $N[u]$  of $u$ is $N(u)\cup \{u\}$.
If $u$ and
$v$ are strongly adjacent (strongly antiadjacent), then $u$ is a \defi{
  strong neighbor (strong antineighbor)} of $v$. Let $\sigma(T)$ the set of all
semiadjacent pairs of $T$. Thus, a trigraph $T$ is a graph if
$\sigma(T)$ is empty. A pair $\{u, v\} \subseteq V(T)$ of distinct
vertices is a \defi{switchable pair} if $\theta(uv) = 0$, a
\defi{strong edge} if $\theta(uv) = 1$ and a \defi{strong antiedge} if
$\theta(uv) = -1$.  An edge $uv$ (antiedge, strong edge, strong
antiedge, switchable pair) is \defi{between} two sets $A \subseteq
V(T)$ and $B \subseteq V(T)$ if $u\in A$ and $v \in B$ or if $u \in B$
and $v \in A$.

Let $T$ be a trigraph. The \defi{complement} $\overline{T}$ of $T$ is a
trigraph with the same vertex set as $T$, and adjacency function
$\overline{\theta}=-\theta$. Let $A \subset V(T)$ and $b
\in V(T) \setminus A$. We say that $b$ is \defi{strongly complete} to
$A$ if $b$ is strongly adjacent to every vertex of $A$; $b$ is \defi{
 strongly anticomplete} to $A$ if $b$ is strongly antiadjacent to
every vertex of $A$; $b$ is \defi{complete} to $A$ if $b$ is adjacent
to every vertex of $A$; and $b$ is \defi{anticomplete} to $A$ if $b$ is
antiadjacent to every vertex of $A$. For two disjoint subsets $A,B$
of $V(T)$, $B$ is \defi{strongly complete (strongly anticomplete,
 complete, anticomplete)} to $A$ if every vertex of $B$ is strongly
complete (strongly anticomplete, complete, anticomplete) to $A$.

A \defi{clique} in $T$ is a set of vertices all pairwise adjacent, and
a \defi{strong clique} is a set of vertices all pairwise strongly
adjacent. A \defi{stable set} is a set of vertices all pairwise
antiadjacent, and a \defi{strong stable set} is a set of vertices all
pairwise strongly antiadjacent. For $X \subset V(T)$ the trigraph
\defi{induced by $T$ on $X$} (denoted by $T[X]$) has vertex set $X$,
and adjacency function that is the restriction of $\theta$ to $\choose{2}{X}$. Isomorphism between trigraphs is defined in the natural
way, and for two trigraphs $T$ and $H$ we say that $H$ is an \defi{
 induced subtrigraph} of $T$ (or $T$ \defi{contains $H$ as an induced
 subtrigraph}) if $H$ is isomorphic to $T[X]$ for some $X \subseteq
V(T)$. Since in this paper we are only concerned with the induced subtrigraph
containment relation, we say that \defi{$T$ contains~$H$} if $T$
contains $H$ as an induced subtrigraph. We denote by $T\setminus X$
the trigraph $T[V(T) \setminus X]$.

Let $T$ be a trigraph. A \defi{path} $P$ of $T$ is a sequence of
distinct vertices $p_1, \dots, p_k$ such that either $k=1$, or for $i,
j \in \{1, \ldots, k\}$, $p_i$ is adjacent to $p_j$ if $|i-j|=1$ and
$p_i$ is antiadjacent to $p_j$ if $|i-j|>1$. We say that $P$ is a
path \defi{from $p_1$ to $p_k$}, its \defi{interior} is the set
$\{p_2, \ldots, p_{k-1}\}$, and the \defi{length} of $P$ is
$k-1$.  Observe that, since a graph is also a
trigraph, it follows that a path in a graph, the way we have defined
it, is what is sometimes in literature called a chordless path.

A \defi{hole} in a trigraph $T$ is an induced subtrigraph $H$ of $T$
with vertices $h_1, \ldots, h_k $ such that $k \geq 4$, and for $i,j
\in \{1, \ldots, k\}$, $h_i$ is adjacent to $h_j$ if $|i-j|=1$ or
$|i-j|=k-1$; and $h_i$ is antiadjacent to $h_j$ if $1<|i-j|<k-1$. The
{\em length} of a hole is the number of vertices in it. An \defi{antipath} (\defi{antihole})
in $T$ is an induced subtrigraph of $T$ whose complement is a path
(hole) in $\overline{T}$.

A \defi{semirealization} of a trigraph $T$ is any trigraph $T'$ with
vertex set $V(T)$ that satisfies the following: for all $uv \in
{\choose{2}{V(T)}}$, if $uv$ is a strong edge in $T$, then it is also a strong edge in $T'$, and if $uv$ is a strong antiedge in $T$, then  it is also a strong antiedge in $T'$.
  Sometimes we will describe
a semirealization of $T$ as an \defi{assignment of values} to
switchable pairs of $T$, with three possible values: ``strong edge'',
``strong antiedge'' and ``switchable pair''.  A \defi{realization} of
$T$ is any graph that is semirealization of $T$ (so, any
semirealization where all switchable pairs are assigned the value
``strong edge'' or ``strong antiedge'').  The realization where all switchable pairs are assigned the value ``strong edge''  is called the \defi{full realization} of~$T$.

Let $T$ be a trigraph. For $X \subseteq V(T)$, we say that $X$ and
$T[X]$ are \defi{connected} (\defi{anticonnected}) if the 
full realization of $T[X]$ ($\comp{T[X]}$) is
connected. A \defi{connected component} (or simply \defi{component}) of
$X$ is a maximal connected subset of $X$, and an \defi{anticonnected
component} (or simply \defi{anticomponent}) of $X$ is a maximal
anticonnected subset of $X$.

A trigraph $T$ is  \defi{Berge} if it contains no odd hole and no odd
antihole. Therefore, a trigraph is Berge if and only if its complement
is. We observe that $T$ is Berge if and only if every realization
(semirealization) of $T$ is Berge.

Finally let us define the class of trigraphs we are working on.
Let $T$ be a trigraph, denote by $\Sigma(T)$ the graph with vertex set
$V(T)$ and edge set $\sigma(T)$ (the switchable pairs of $T$).  The
connected components of $\Sigma(T)$ are called the \defi{switchable
  components} of $T$.  Let \defi{$\mathcal{F}$} be the class of Berge
trigraphs such that the following hold:
\begin{itemize}
\item Every switchable component of $T$ has at most two edges (and
therefore no vertex has more than two neighbors in $\Sigma(T)$).
\item Let $v \in V(T)$ have degree two in $\Sigma(T)$, denote its
neighbors by $x$ and~$y$. Then either $v$ is strongly complete to
$V(T) \setminus \{v, x, y\}$ in $T$, and $x$ is strongly adjacent to
$y$ in $T$, or $v$ is
strongly anticomplete to $V(T) \setminus \{v, x, y\}$ in $T$, and $x$
is strongly antiadjacent to $y$ in $T$.
\end{itemize} 

Observe that $T\in \mathcal{F}$ if and only if $\overline{T}\in
\mathcal{F}$.

\subsection{Clique-Stable Set separation}

Let $T$ be a trigraph. 
A \defi{cut} is a partition of $V(T)$ into two parts $B,W\subseteq V(T)$ (hence $W=V(T)\setminus B)$. It \defi{separates} a clique $K$ and a stable set $S$
if $K \subseteq B$ and $S \subseteq W$. 
Sometimes we will call $B$ the \defi{clique side} of the cut and $W$ the \defi{stable set side} of the cut. 
In order to have a stronger assumption when applying induction hypothesis later on in the proofs, we choose to separate not only strong cliques and strong stable sets, but all cliques and all stable sets:
we say that a family $F$ 
of cuts is a \defi{CS-separator} if for every (not necessarily strong) clique $K$ and every (not necessarily strong) stable set $S$ which do not intersect, there exists a cut in $F$ that separates $K$ and $S$.
Finding a CS-separator is a self-complementary problem: suppose that there exists a CS-separator of size $k$ in $T$, then we build a CS-separator of size $k$ in $\overline{T}$ by turning  every cut $(B,W)$ into the cut $(W,B)$.

In a graph, a clique and a stable set can intersect on at most one vertex. This property is useful to prove that we only need to focus on inclusion-wise maximal cliques and inclusion-wise maximal stable sets (see \cite{Bousquet13}). This is no longer the case for trigraphs, for which a clique and a stable set can intersect on a switchable component $V'$, provided this component contains only switchable pairs, (\emph{i.e.}  for every $u,v \in V'$, $u=v$ or $uv\in \sigma(T)$). 
However, when restricted to trigraphs of $\mathcal{F}$, a clique and a stable set can intersect on at most one vertex or one switchable pair, so we can still derive a similar result:

\begin{obs}\label{clique max trigraph}
If a trigraph $T$ of $\mathcal{F}$ admits a family $F$ of cuts separating all the  inclusion-wise maximal cliques and the inclusion-wise maximal stable sets, then it admits a CS-separator of size at most $|F|+O(n^2)$.
\end{obs}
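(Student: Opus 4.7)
My plan is to reduce to the maximal case via the classical extension argument, handling the mismatch with $O(n^2)$ auxiliary cuts. The starting observation is structural: if $K$ is a clique and $S$ is a stable set of some $T\in\mathcal{F}$, then $|K\cap S|\leq 2$, and if $|K\cap S|=2$ the two vertices form a switchable pair. Indeed, any two vertices of $K\cap S$ are simultaneously adjacent (as members of $K$) and antiadjacent (as members of $S$), hence semiadjacent; three such pairwise-semiadjacent vertices would form a triangle in $\Sigma(T)$, contradicting the defining constraints of $\mathcal{F}$ (every switchable component has at most two edges, and in a three-vertex switchable path the two outer vertices are connected by a strong edge or a strong antiedge, not by a switchable pair).

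Given a clique $K$ and a stable set $S$ with $K\cap S=\emptyset$, extend them arbitrarily to an inclusion-wise maximal clique $K^*\supseteq K$ and an inclusion-wise maximal stable set $S^*\supseteq S$, and set $I:=K^*\cap S^*$. By the observation above, $|I|\leq 2$. When $I=\emptyset$, the hypothesized family $F$ contains a cut separating $K^*$ from $S^*$; that cut also separates $K$ from $S$, so nothing needs to be added.

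To cover the cases $|I|=1$ and $|I|=2$, I would enrich $F$ with auxiliary cuts. For each vertex $v\in V(T)$ and each subset $X$ of the set $W_v$ of switchable neighbors of $v$ (there are at most four such $X$ since $|W_v|\leq 2$), add the cut whose clique side is $N_s(v)\cup X$, where $N_s(v)$ denotes the strong neighbors of $v$. These $O(n)$ cuts handle $|I|=1$: if $I=\{v\}$ and, without loss of generality, $v\notin K$, then the choice $X:=K\cap W_v$ works. Every vertex of $K$ is a strong or switchable neighbor of $v$ (since $K\cup\{v\}\subseteq K^*$ is a clique), so $K$ lies on the clique side; conversely $v$ itself is not in $N(v)$ and every $s\in S\setminus\{v\}$ is antiadjacent to $v$ (since $S\cup\{v\}\subseteq S^*$ is a stable set), hence either a strong antineighbor or a switchable neighbor of $v$ outside $X$ by the disjointness $K\cap S=\emptyset$, so $S$ lies on the stable side. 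For $|I|=2$ with $I=\{u,v\}$, I would index cuts by ordered pairs of vertices together with bounded-size subsets of the switchable neighbors of $\{u,v\}$, producing $O(n^2)$ additional cuts and treating each sub-case (determined by whether each of $u,v$ lies in $K$, in $S$, or in neither) in the same spirit.

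The main source of difficulty is the $|I|=2$ case: neither $u$ nor $v$ can be dispatched to a side a priori, and several sub-cases must each be matched with one of the prepared cuts. The tight control that $\mathcal{F}$ exerts on switchable components (every vertex has at most two switchable neighbors, and the two outer vertices of a length-two switchable path are connected by a strong edge or a strong antiedge) is exactly what keeps this analysis finite and yields the overall bound $|F|+O(n^2)$.
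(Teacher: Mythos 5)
Your overall strategy is the same as the paper's: extend $K,S$ to inclusion-wise maximal $K^*,S^*$, observe that in $\mathcal{F}$ the intersection $I=K^*\cap S^*$ is empty, a single vertex, or a switchable pair, and add $O(n^2)$ neighbourhood-type auxiliary cuts to handle the last two cases. (The paper uses the cuts $(N[x],\cdot)$ and $(N(x),\cdot)$ for each vertex, and $(N[x]\cap N[y],\cdot)$, etc., for each switchable pair; your clique side $N_s(v)\cup X$ with $X\subseteq W_v$ is a finer-grained version that treats the switchable neighbours of $v$ explicitly, which is if anything more careful about where a vertex semiadjacent to $v$ ends up.)

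There are two weak points. First, in the case $|I|=1$ the claim ``without loss of generality $v\notin K$'' is not a WLOG: the two sides of a cut play asymmetric roles, and if $v\in K$ the clique side must contain $v$ itself, whereas every cut in your auxiliary family has clique side $N_s(v)\cup X\subseteq V(T)\setminus\{v\}$. So, as written, no cut in your family separates a clique through $v$ in this case. The repair is immediate --- also add the cuts with clique side $\{v\}\cup N_s(v)\cup X$, exactly as the paper keeps both a closed and an open version --- but it has to be done. Second, the case $|I|=2$, which is the only place the quadratic count is actually needed, is not carried out: you describe how the cuts are indexed and what the sub-cases are, but verify none of them. The plan is sound (for a switchable pair $uv$ each of $W_u\setminus\{v\}$ and $W_v\setminus\{u\}$ has at most one element, so the number of cuts per pair is bounded, and the verification mirrors the $|I|=1$ case using that no vertex can be semiadjacent to both $u$ and $v$, since that would create a third edge in their switchable component), but as submitted this central case is a sketch rather than a proof.
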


\begin{proof}
Start with $F'=F$ and add the following cuts to $F'$: for every $x\in V(T)$, add the cut $(N[x], V(T)\setminus N[x])$ and  the cut $(N(x),V(T)\setminus  N(x))$. For every switchable pair $xy$,
add the four cuts of type $(U_i, V(T)\setminus U_i)$ with
 
$$ U_1=N[x]\cap N[y], \ 
 U_2=N[x]\cap N(y), \
U_3=N(x)\cap N[y], \
 U_4=N(x)\cap N(y)\ . $$

Let $K$ be a clique and $S$ be a stable set disjoint from $K$, and let $K'$ (resp. $S'$) be an inclusion-wise maximal clique (resp. stable set) containing $K$ (resp. $S$).
 Three cases are to be considered. First, assume that $K'$ and $S'$ do not intersect, then there is a cut in $F$ that separates $K'$ from $S'$ (thus $K$ from $S$). Second, assume that $K'$ and $S'$ intersect on a vertex $x$ : if $x \in K$, then $K\subseteq N[x]$ and $S\subseteq V(T)\setminus N[x]$, otherwise
$K\subseteq N(x)$ and $S\subseteq V(T)\setminus N(x)$, hence $K$ and $S$ are separated by a cut of $F'$.
 Otherwise, by property of $\mathcal{F}$, $K'$ and $S'$ intersect on a switchable pair $xy$: then the same argument can be applied with $U_{i,}$ for some $i\in\set{1, 2, 3, 4}$ depending on the intersection between $\{x, y\}$ and $K$.
\end{proof}

In particular, as for the graph case, if $T\in \mathcal{F}$ has at most $\mathcal{O}(|V(T)|^c)$ maximal cliques (or stable sets) for some constant $c\geq 2$, then there is a CS-separator of size $\mathcal{O}(|V(T)|^c)$.

\section{Decomposing trigraphs of $\mathcal{F}$}
\label{sec: decomposition}
This section recalls definitions and results from \cite{bergefreebsp}
that we use in the next section. Our goal is to state the
decomposition theorem for trigraphs of $\mathcal{F}$ and to define the blocks of decomposition.
First we need some definitions.

\subsection{Basic trigraphs}

We need the counterparts of \emph{bipartite graphs} (and their complements), \emph{line graphs of bipartite graphs} (and their complements), and \emph{double-split graphs} which are the basic classes for decomposing Berge graphs. For the trigraph case, the basic classes are \emph{bipartite trigraphs} and their complements, \emph{line trigraphs} and their complements, and \emph{doubled trigraphs}.

A trigraph $T$ is \defi{bipartite} if its vertex set can be partitioned 
into two strong stable sets.
A trigraph $T$ is a \defi{line trigraph} if the full realization of $T$
is the line graph of a bipartite graph and every clique of size at
least $3$ in $T$ is a strong clique.
Let us now define the  analogue
of the double split graph, namely the
doubled trigraph.  A \defi{good partition} of a trigraph $T$ is
a partition $(X, Y)$ of $V(T)$ (possibly, $X=\emptyset$ or
$Y=\emptyset$) such that:

\begin{itemize}
\item Every  component of $T[X]$ has at most two vertices, and every
  anticomponent of $T[Y]$ has at most two vertices.
\item No switchable pair of $T$ meets both $X$ and $Y$. 
\item For every component $C_X$ of $T[X]$, every anticomponent $C_Y$ of
  $T[Y]$, and every vertex $v$ in $C_X \cup C_Y$, there exists at most
  one strong edge and at most one strong antiedge between $C_X$ and
  $C_Y$ that is incident to $v$.
\end{itemize}

A trigraph is \defi{doubled} if it has a good partition.
A trigraph is \defi{basic} if it is either a bipartite trigraph, the
complement of a bipartite trigraph, a line trigraph, the complement of
a line trigraph or a doubled trigraph.
Basic trigraphs behave well with respect to induced subtrigraphs and
complementation as stated by the following lemma.

\begin{lemma}[\cite{bergefreebsp}]
  Basic trigraphs are Berge and are closed under taking induced
  subtrigraphs, semirealizations, realizations and complementation.
\end{lemma}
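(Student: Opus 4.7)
My plan is to prove the lemma by case analysis on the five types of basic trigraph (bipartite, complement of bipartite, line, complement of line, doubled), organised around the four structural closure properties plus Berge-ness. The three complement cases can be deferred by first establishing closure under complementation: the definition of ``basic'' is manifestly self-complementary, since exchanging strong edges with strong antiadjacencies swaps bipartite with complement-of-bipartite and swaps line with complement-of-line, while a good partition $(X,Y)$ of a doubled trigraph $T$ transforms into a good partition $(Y,X)$ of $\overline{T}$: the three defining clauses are symmetric in $X,Y$ and in strong edges/antiedges. After this, it suffices to handle bipartite, line and doubled separately; the complement versions follow automatically.

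For closure under semirealizations (realizations being a special case), the arguments are short. For bipartite trigraphs, a semirealization preserves strong antiadjacencies, so the two defining strong stable sets survive. For line trigraphs, the full realization of a semirealization coincides with the full realization of $T$, hence remains $L(G)$ for the same bipartite $G$; moreover, any clique $K$ of size at least three in the semirealization is already a clique in $T$ (each adjacent pair in the semirealization is adjacent in $T$), hence strong in $T$, hence strong in the semirealization. For doubled trigraphs, the good partition $(X,Y)$ still works: since no switchable pair crosses $X$ and $Y$, cross-adjacencies are unchanged; switchable pairs inside $X$ (resp.\ $Y$) that are resolved either way can only split components (resp.\ anticomponents) into pieces of size at most two; and the cross bound of one strong edge and one strong antiedge per vertex between $C_X$ and $C_Y$ survives verbatim.

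Closure under induced subtrigraphs is a direct verification in each class: the two strong stable sets of a bipartite trigraph restrict to strong stable sets; for a line trigraph with full realization $L(G)$, the trigraph $T\setminus Z$ has full realization $L(G')$ where $G'$ is $G$ with the edges corresponding to $Z$ deleted, still bipartite, and cliques of size at least three in $T\setminus Z$ are still cliques in $T$ and hence strong; for a doubled trigraph $T$ with good partition $(X,Y)$, the pair $(X\cap V',Y\cap V')$ is a good partition of $T[V']$, because connected components of induced subtrigraphs lie inside components of the ambient trigraph (and the analogous statement holds for anticomponents by passing to the complement), so the size-at-most-two bounds carry over, the switchable-pair clause is inherited, and the counts of strong edges and strong antiedges between $C_X'$ and $C_Y'$ incident to any $v$ can only decrease upon restriction.

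For Berge-ness I use the equivalence recalled earlier that $T$ is Berge iff every realization of $T$ is Berge, so it is enough to check realizations. A realization of a bipartite trigraph is a bipartite graph, which has no odd hole and no odd antihole (an antihole of length at least seven has clique number at least three, excluded by bipartiteness, and the length-five antihole coincides with an odd hole). Doubled trigraphs realize as (generalized) double-split graphs, whose perfection, and hence Berge-ness, is a classical structural fact. The delicate case is line trigraphs: I would first show that if $uv$ is a switchable pair in a line trigraph with full realization $L(G)$, then the edges of $G$ corresponding to $u$ and $v$ share an endpoint $x$ whose degree in $G$ is exactly two, since a third edge at $x$ would yield a triangle in $L(G)$, i.e.\ a strong clique of size three in $T$ forcing $uv$ to be strong, a contradiction. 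Consequently, resolving $uv$ to a strong antiadjacency in a realization corresponds to splitting $x$ into two degree-one vertices in $G$; iterating over all switchable pairs produces a bipartite graph $G'$ whose line graph is precisely the realization, and line graphs of bipartite graphs are Berge by K\"onig's theorem. This degree-two structure forced on endpoints of switchable pairs in a line trigraph is the main obstacle, and once it is in hand the rest is bookkeeping.
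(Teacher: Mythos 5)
The paper itself offers no proof of this lemma --- it is imported wholesale from the cited reference --- so there is no in-paper argument to compare against; I am judging your proposal on its own terms. Its overall architecture (complementation first, then the three remaining classes for each closure property, then Berge-ness via realizations) is sound, and the complementation, induced-subtrigraph, bipartite and doubled cases are all correct. But there is one false step. In the semirealization paragraph you claim that for a line trigraph ``the full realization of a semirealization coincides with the full realization of $T$.'' This is wrong whenever the semirealization assigns the value ``strong antiedge'' to a switchable pair $uv$: that pair is an edge of the full realization of $T$ but a non-edge of the full realization of the semirealization, so the latter is $L(G)$ with some edges deleted, and you cannot conclude it is the line graph of the \emph{same} bipartite graph. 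As written, the argument that semirealizations of line trigraphs are line trigraphs does not go through.

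The repair is the degree-two splitting argument you only deploy later for Berge-ness: for a switchable pair $uv$ of a line trigraph, the corresponding edges of $G$ meet at a vertex $x$ of degree exactly two (a third edge at $x$ would give a triangle through $u,v$ in the full realization, hence a clique of size three in $T$, hence a strong clique, forcing $uv$ strong), and deleting the edge $uv$ from $L(G)$ amounts to splitting $x$ into two degree-one vertices, yielding $L(G')$ for another bipartite $G'$. That argument should be invoked for the semirealization closure as well, not only for Berge-ness; with it, the full realization of any semirealization is again the line graph of a bipartite graph, and your observation that cliques of size at least three in the semirealization are already strong cliques of $T$ completes that case. A secondary, milder point: Berge-ness of doubled trigraphs is dispatched by appeal to the perfection of double split graphs as ``a classical structural fact''; strictly, one should either verify that every realization of a doubled trigraph embeds in a double split graph or argue directly from the good partition, though given that the lemma is anyway a citation this is a forgivable level of appeal.
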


\subsection{Decompositions}

We now describe the decompositions that we need for the
decomposition theorem. They generalize the
decompositions used in the Strong Perfect Graph Theorem \cite{SPGT}, and in addition all the
important crossing edges and non-edges in those graph decompositions are
required to be strong edges and strong antiedges of the trigraph,
respectively.

First, a \defi{$2$-join} in a trigraph $T$ (see Figure \ref{fig: 2-join trigraphs}.(a) for an illustration) is a
partition $(X_1, X_2)$ of $V(T)$ such that there exist disjoint sets $A_1, B_1, C_1, A_2, B_2, C_2\subseteq V(T)$ satisfying:

\begin{itemize}
\item $X_1=A_1\cup B_1\cup C_1$ and $X_2=A_2\cup B_2\cup C_2$.
\item $A_1, A_2, B_1$ and $B_2$ are non-empty.
\item No switchable pair meets both $X_1$ and $X_2$.
\item Every vertex of $A_1$ is
strongly adjacent to every vertex of $A_2$, and every vertex of $B_1$ is
strongly adjacent to every vertex of $B_2$.
\item There are no other strong edges between $X_1$ and $X_2$.
\item For $i=1,2$ $|X_i| \geq 3$. 
\item For $i = 1,2$, if $|A_i| = |B_i| = 1$, then the full realization of
$T[X_i]$ is not a path of length two joining the members of $A_i$ and $B_i$.
\item For $i = 1,2$, every component of $T[X_i]$ meets both $A_i$ and
$B_i$ (this condition is usually required only for a \emph{proper} 2-join, but we will only deal with proper 2-join in the following).
\end{itemize}

 A \defi{complement $2$-join} of a
trigraph $T$ is a $2$-join in $\overline{T}$.
When proceeding by induction on the number of vertices, we sometimes want to contract one side of a 2-join into three vertices and assert that the resulting trigraph is smaller. This does not come directly from the definition (we assume only $|X_i|\geq 3$), but can be deduced from the following technical lemma:

\begin{lemma}[\cite{bergefreebsp}]
\label{2joinform}
  Let $T$ be a trigraph from $\mathcal F$ with no balanced
  skew-partition, and let $(X_1, X_2)$ be a 
  $2$-join in $T$. Then $|X_i| \geq 4$, for $i=1,2$.
\end{lemma}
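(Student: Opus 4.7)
The plan is to prove the contrapositive: assuming $|X_1|=3$ (the 2-join definition already gives $|X_1|\geq 3$), I construct a balanced skew-partition of $T$, contradicting the hypothesis; the bound $|X_2|\geq 4$ follows by the symmetry between $X_1$ and $X_2$. Using $|A_1|,|B_1|\geq 1$ and $|A_1|+|B_1|+|C_1|=3$, there are three cases up to swapping $A_1$ and $B_1$: $(|A_1|,|B_1|,|C_1|)\in\{(1,1,1),(1,2,0),(2,1,0)\}$. In each case I first pin down the structure of $T[X_1]$ using two ingredients: every component of $T[X_1]$ meets both $A_1$ and $B_1$, and (in the $(1,1,1)$ case) the full realization of $T[X_1]$ is forbidden to be the path $a_1\c c_1\c b_1$. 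In case $(1,1,1)$ these together force $a_1b_1$ to be an edge of $T$ and $c_1$ to be adjacent to at least one of $a_1,b_1$ in the full realization; in the other two cases $T[X_1]$ is connected, so $A_1\cup B_1$ already contains an edge of the full realization.

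My candidate skew-partition is $(A,B)$ with $B$ consisting of an adjacent pair of vertices inside $X_1$ (extended by at most one extra vertex of $X_1$ when needed to adjust parity, see below), and $A=V(T)\setminus B$. The anticonnectedness of $T[B]$ follows immediately from the chosen edge inside $X_1$, which becomes an isolating antiedge in $\comp{T[B]}$. Disconnectedness of $T[A]$ uses the 2-join condition crucially: no strong edge and no switchable pair crosses the 2-join outside of the pairs $A_1A_2$ and $B_1B_2$, so every vertex of $X_1\setminus B$ is strongly antiadjacent to all of $X_2$, hence forms an isolated component of $T[A]$.

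The central step is the verification of the balanced condition. Let $P$ be an induced path in the full realization of $T$ of length at least $3$ with ends in $B$ and interior in $A$. Since no interior vertex can lie in $X_1\setminus B$ (such a vertex has all its neighbors in the three-vertex set $X_1$, so it can contribute at most a detour of length $\leq 2$, not enough to produce an induced path of length $\geq 3$), the interior of $P$ lies in $C_2$ and both endpoints of $P$ lie in $A_2\cup B_2$. Closing $P$ by the short adjacency path available inside $B\cap X_1$ (the edge $a_1b_1$ in case $(1,1,1)$, or a $P_3$ through $a_1$ in case $(1,2,0)$) yields an induced cycle of the full realization, whose length is even by the Berge hypothesis, and a parity count then forces $|P|$ itself to be even. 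When the direct closing yields an off-by-one parity, the skew-partition is shifted by moving the single $C_1$-vertex (or an $A_1/B_1$-vertex in cases (b)-(c)) between $A$ and $B$; this changes the length of the closing $X_1$-subpath by $1$ and restores the required evenness. The dual condition on paths of $\comp T$ with ends in $A$ and interior in $B$ is handled by the symmetric argument with induced antiholes in place of induced holes.

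The main obstacle is the parity verification: one must show that for the chosen $(A,B)$, the closing subpath inside $B\cap X_1$ combines with every possible path $P$ through $X_2$ to produce an induced cycle of the full realization whose Berge-enforced even length translates into the desired evenness of $|P|$. This forces a specific choice of which $X_1$-vertex to include in $B$ in each of the three size-cases of $A_1$, $B_1$, $C_1$, and it is at this point that the structural analysis of $T[X_1]$ from the first paragraph is really used.
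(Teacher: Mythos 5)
First, note that the paper does not prove Lemma~\ref{2joinform} itself: it is quoted from \cite{bergefreebsp}, so there is no in-paper proof to compare against. Judged on its own merits, your argument has a fatal flaw in the central parity step: the deduction runs in the wrong direction. Take your case $(1,1,1)$, with $B=\{a,b\}$, $a\in A_1$, $b\in B_1$, $ab$ an edge, and let $P$ be a path from $a$ to $b$ of length $\ell\geq 3$ with interior in $A$. Closing $P$ with the edge $ab$ gives a hole of length $\ell+1\geq 4$; Bergeness forces $\ell+1$ to be even, hence $\ell$ to be \emph{odd}. So far from certifying balancedness, the closing argument shows that every such $P$ is an odd path of length greater than $1$ with ends in $B$ and interior in $A$, i.e.\ it witnesses that the partition is \emph{unbalanced}. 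And such paths always exist: every component of $T[X_2]$ meets both $A_2$ and $B_2$, so there is a path from $A_2$ to $B_2$ inside $X_2$, which extends to a path from $a$ to $b$ of length at least $3$ with interior in $X_2\subseteq A$ (in the simplest instance, a strong edge between $a_2\in A_2$ and $b_2\in B_2$ gives the length-$3$ path $a,a_2,b_2,b$). The ``shift'' does not rescue this: adding the $C_1$-vertex to $B$ leaves the edge $ab$ inside $B$ together with the same odd paths, while making $A=X_2$, whose disconnectedness you have no control over.

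There are secondary gaps as well. The claim that every vertex of $X_1\setminus B$ is strongly antiadjacent to all of $X_2$ holds only for vertices of $C_1$; in your cases $(1,2,0)$ and $(2,1,0)$ the leftover vertex lies in $A_1\cup B_1$ and is strongly complete to $A_2$ or $B_2$, so $T[A]$ need not be disconnected. You also work only in the full realization, whereas the trigraph notions of path, antipath and balancedness quantify over all realizations, and the hypothesis $T\in\mathcal F$ is never used. The proof in \cite{bergefreebsp} avoids verifying balancedness by hand: from $|X_1|=3$ one extracts a star cutset in $T$ or $\comp{T}$, and then invokes the nontrivial imported theorem that a Berge trigraph of $\mathcal F$ with a star cutset admits a balanced skew-partition. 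Some such black-box balancedness result seems unavoidable here; a direct construction of the kind you propose does not go through.
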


Moreover, when decomposing by a 2-join, we need to be careful about the parity of the lengths of paths from $A_i$ and $B_i$ in order not to create an odd hole. In this respect, the following lemma is useful:

\begin{lemma}[\cite{bergefreebsp}]
  Let $T$ be a Berge trigraph and $(A_1, B_1, C_1,
A_2, B_2, C_2)$ a split of a $2$-join of $T$. Then all paths
with one end in $A_i$, one end in $B_i$ and interior in $C_i$, for
$i=1, 2$, have lengths of the same parity.
\end{lemma}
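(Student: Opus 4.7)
The plan is a standard parity-counting argument: for any two paths $P$ and $P'$ on side $i$ with endpoints in $A_i$ and $B_i$ and interior in $C_i$, I would fix a single ``witness'' path $Q$ on the opposite side with endpoints $a_j\in A_j$, $b_j\in B_j$ and interior in $C_j$ (where $\{i,j\}=\{1,2\}$), and then argue that $P\cup Q$ and $P'\cup Q$ are both holes of $T$. Since $T$ is Berge, these holes have even length, which immediately forces $|P|\equiv|Q|\equiv|P'|\pmod 2$.

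The first step is to produce the witness path $Q$. The properness condition built into the definition of the $2$-join guarantees that every component of $T[X_j]$ meets both $A_j$ and $B_j$, so at least one component contains vertices of both sets. Taking a shortest path in $T[X_j]$ between $A_j$ and $B_j$ within such a component, minimality forbids any interior vertex from lying in $A_j\cup B_j$, so its interior lies in $C_j$ as required.

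Next I would check that concatenating $P = a_i\c b_i$ with $Q = a_j\c b_j$ yields a genuine hole $H$ of length $|P|+|Q|+2$. The combinatorial input is the edge structure of a $2$-join: the only strong edges between $X_1$ and $X_2$ are those of the complete bipartite patterns $A_1$--$A_2$ and $B_1$--$B_2$, every other crossing pair is a strong antiedge, and crucially there are \emph{no} switchable pairs between $X_1$ and $X_2$. Consequently $a_i a_j$ and $b_i b_j$ are strong edges, whereas every non-consecutive pair of the proposed cycle that crosses the partition (for instance $a_i$ with any interior vertex of $Q$, or $b_i$ with $a_j$) is strongly antiadjacent. Within $P$ or within $Q$, the required adjacencies between consecutive vertices and antiadjacencies between non-consecutive ones are automatic from the definition of a path. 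Hence $H$ is a hole, so by the Berge hypothesis $|P|+|Q|+2$ is even; repeating the argument with $P'$ in place of $P$ gives $|P'|+|Q|$ even, and subtracting yields $|P|\equiv|P'|\pmod 2$.

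The only slightly delicate step is producing the witness $Q$, which genuinely relies on the properness assumption of the $2$-join; without it one could imagine $T[X_j]$ splitting into components that separately meet only $A_j$ or only $B_j$, and the parity would not even be defined on that side. Once $Q$ exists, the rest of the argument is bookkeeping on the crossing edge/antiedge pattern together with a single application of the absence of odd holes.
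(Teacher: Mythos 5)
Your proof is correct and follows essentially the same route as the paper: the paper's one-line argument also glues a path on each side of the $2$-join into a hole and invokes Bergeness to force even length. The extra details you supply (extracting the witness $Q$ from the properness condition and verifying the crossing adjacencies) are exactly the bookkeeping the paper leaves implicit.
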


\begin{proof} Otherwise, for $i=1, 2$, let $P_i$ be a path with one end in
$A_i$, one end in $B_i$ and interior in $C_i$, such that $P_1$ and
$P_2$ have lengths of different parity. They form an odd hole, a
contradiction.
\end{proof}

Consequently, a $2$-join in a Berge trigraph is said \defi{odd} or \defi{even} according
to the parity of the lengths of the paths between $A_i$ and $B_i$. The
 lemma above ensures the correctness of the definition.

Our second decomposition is the balanced skew-partition. A \defi{skew-partition} is a partition $(A,B)$ of $V(T)$ such that $A$ is not
connected and $B$ is not anticonnected. 
It is moreover \defi{balanced} if there is no odd path of length greater than $1$ with
ends in $B$ and interior in $A$, and there is no odd antipath of
length greater than $1$ with ends in $A$ and interior in $B$.

\begin{figure}
\center
\subfigure[A 2-join.]{\includegraphics[scale=1, page=1]{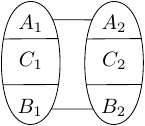}}
\hspace{30pt}
\subfigure[Block of decomposition $T_{X_1}$ for an odd 2-join.]{\hspace{20pt}\includegraphics[scale=1, page=3]{fig/BlockDecompo2Join}\hspace{20pt}}
\hspace{30pt}
\subfigure[Block of decomposition $T_{X_1}$ for an even 2-join.]{\hspace{20pt}\includegraphics[scale=1, page=2]{fig/BlockDecompo2Join}\hspace{20pt}}

\caption{Diagram for a 2-join and its blocks of decomposition. Straight lines stand for strongly complete sets, and wiggly edges stand for switchable pairs. No other edge can cross between left and right.}
\label{fig: 2-join trigraphs}
\end{figure}

We are now ready to state the decomposition theorem.

\begin{theorem}[\cite{bergefreebsp}, adapted from \cite{TheseMaria}]
  \label{structure}
  Every trigraph in $\mathcal{F}$ is either basic, or admits a
  balanced skew-partition, a $2$-join, or a complement $2$-join.
\end{theorem}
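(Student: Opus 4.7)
The plan is to reduce Theorem~\ref{structure} to the classical decomposition theorem for Berge graphs due to Chudnovsky, Robertson, Seymour and Thomas (with homogeneous pairs eliminated by Chudnovsky in her thesis), and then to lift the resulting decomposition from a graph realization of $T$ back up to the trigraph $T$, exploiting the fact that in $\mathcal{F}$ all switchable components are very small (paths of length at most two) and behave rigidly with respect to the rest of the trigraph.

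Concretely, I would first consider the full realization $G$ of $T$. Since $T$ is Berge, every realization of $T$ is Berge (so $G$ is Berge), and the classical decomposition theorem therefore gives that $G$ is basic (bipartite, its complement, line graph of a bipartite graph, its complement, or double-split), or admits a $2$-join, complement $2$-join, or balanced skew-partition. I would then lift each outcome, taking as inductive hypothesis that the same dichotomy is known for all smaller trigraphs in $\mathcal{F}$ (in practice the proof is driven by the structure of the \emph{graph} outcome rather than by induction, but induction can be used to eliminate certain degenerate positions of switchable pairs relative to the decomposition).

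For the basic outcome, I would check, case by case, that each basic class for graphs matches one of the five trigraph basic classes once the switchable pairs of $T$ are put back in. The constraints defining $\mathcal{F}$ (each switchable component being a path of length at most two, with the degree-two vertex strongly complete or strongly anticomplete to the rest) are precisely what is needed to fit such a switchable configuration into the definition of either a line trigraph, the complement of one, or a doubled trigraph (the $X,Y$-partition of a doubled trigraph tolerates exactly these local patterns). For the skew-partition outcome, a balanced skew-partition $(A,B)$ of $G$ immediately yields the disconnection/anti-disconnection in $T$; the parity clauses in the definition of \emph{balanced} transfer from $G$ to $T$ because any path in a realization of $T$ either coincides with a path of $G$ or differs by flipping some switchable pair, and the rigidity of $\mathcal{F}$'s switchable components controls the parity change.

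The main obstacle, and the most delicate step, is lifting a $2$-join (or, dually, a complement $2$-join) from $G$ to $T$, because the trigraph definition of $2$-join additionally forbids any switchable pair from crossing $(X_1,X_2)$ and requires the strong-complete/strong-anticomplete pattern of $A_i,B_i,C_i$ to hold with \emph{strong} edges and antiedges. The plan here is to show that a switchable component cannot straddle $(X_1,X_2)$: the endpoint constraints of $\mathcal{F}$ force the strong-adjacencies at the ends of a length-two switchable component to be compatible with the bipartition type ($A$-to-$A$ complete, $B$-to-$B$ complete, nothing else), so any crossing switchable pair would contradict one of these strong/strong-complement relations, or would sit inside the extension $C_i$ on one side. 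If necessary, the partition is locally adjusted by moving an entire switchable component to one side, which is possible because such components are of bounded size and essentially rigid. Once no switchable pair crosses $(X_1,X_2)$, the size and minor technical requirements (such as $|X_i|\ge3$ and the forbidden-$P_3$ condition) follow directly from those inherited from $G$, completing the lifting.
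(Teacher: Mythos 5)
First, note that the paper does not prove Theorem~\ref{structure} at all: it is imported verbatim from \cite{bergefreebsp}, where it is obtained by adapting the full trigraph-level decomposition theorem of Chudnovsky's thesis \cite{TheseMaria} (itself a complete re-derivation of the Berge-graph decomposition machinery for trigraphs). Your plan — apply the graph decomposition theorem to the full realization $G$ of $T$ and lift each outcome back to $T$ — is a genuinely different route, but it has gaps that I do not think can be repaired, which is precisely why the literature proves the trigraph statement from scratch.

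The most serious failure is the balanced skew-partition outcome. In a trigraph, $B$ being anticonnected means the full realization of $\comp{T[B]}$ is connected, i.e.\ connectivity via strong antiedges \emph{and} switchable pairs; whereas $\comp{G[B]}$ being disconnected only concerns the strong antiedges. So a skew-partition of the full realization $G$ need not be a skew-partition of $T$, and there is nothing else to fall back on in that branch of the case analysis. The ``balanced'' clause is worse still: a path of $T[A]$ is a path of \emph{some} realization, so $T[A]$ can contain odd paths that $G[A]$ does not (a chordless path obtained by switching a pair to an antiedge is not a local perturbation of a path of $G$ — it can have any length), and the rigidity of switchable components in $\mathcal{F}$ does not control this, since endpoints of switchable components (degree-one vertices of $\Sigma(T)$) are completely unconstrained by the definition of $\mathcal{F}$. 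The same unconstrained switchable pairs break the other two branches: a single switchable pair between $A_1$ and $A_2$ is perfectly compatible with $T\in\mathcal{F}$ yet violates both the strong-completeness and the no-crossing-switchable-pair requirements of a trigraph $2$-join, and your proposed local adjustment (moving a component across the partition) is not shown to land in a valid split, nor to terminate. Likewise, basicness does not lift: a line trigraph requires every clique of size at least $3$ to be a strong clique, and a doubled trigraph requires that no switchable pair meet both $X$ and $Y$ — neither condition follows from the full realization being a line graph of a bipartite graph or a double-split graph, and neither is enforced by membership in $\mathcal{F}$. Any correct proof has to carry the trigraph structure through the entire argument, which is what \cite{TheseMaria} and \cite{bergefreebsp} do.
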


We now define the
\emph{blocks of decomposition} $T_{X_1}$ and $T_{X_2}$ of a 2-join $(X_1, X_2)$ in a trigraph $T$ (an illustration of blocks of decomposition can be found in Figure \ref{fig: 2-join trigraphs}). Let $(A_1, B_1, C_1, A_2, B_2, C_2)$ be a split of $(X_1, X_2)$. Informally, the block $T_{X_1}$ is obtained from $T$ by keeping $X_1$ as it is and contracting $X_2$ into few vertices, depending on the parity of the 2-join: 2 vertices for odd 2-joins (one for $A_2$, one for $B_2$), and 3 vertices for even 2-joins (one extra-vertex for $C_2$). 

If the 2-join is odd, we build the
block of decomposition $T_{X_1}$ as follows: we start with $T [
A_1\cup B_1\cup C_1]$. We then add two new \defi{marker vertices} $a_2$
and $b_2$ such that $a_2$ is strongly complete to $A_1$, $b_2$ is strongly
complete to $B_1$, $a_2b_2$ is a switchable pair, and there are no other 
edges between $\{a_2,b_2\}$ and $X_1$.  Note that $\{a_2, b_2\}$
is a switchable component of $T_{X_1}$. The block of decomposition $T_{X_2}$ is defined similarly with marker vertices $a_1$ and $b_1$.

If the 2-join is even, we build the
block of decomposition $T_{X_1}$ as follows: once again, we start with $T[
A_1\cup B_1\cup C_1]$. We then add three new \defi{marker vertices}
$a_2$, $b_2$ and $c_2$ such that $a_2$ is strongly complete to $A_1$, $b_2$ is
strongly complete to $B_1$, $a_2c_2$ and $c_2b_2$ are switchable pairs, and
there are no other edges between $\{a_2, b_2, c_2\}$ and $X_1$. The block of decomposition $T_{X_2}$ is defined similarly with marker vertices $a_1$, $b_1$ and $c_1$.

We define the blocks of decomposition of a complement $2$-join $(X_1,X_2)$ in $T$ as the
complement of the blocks of decomposition of the $2$-join $(X_1,X_2)$
in $\comp{T}$.

The following theorem ensures that the blocks of decomposition do not leave
the class:
 
\begin{theorem}[\cite{bergefreebsp}]
 \label{l:stayBerge}
 If $(X_1, X_2)$ is a $2$-join or a complement $2$-join of a
 trigraph $T$ from $\mathcal F$ with no balanced skew-partition, then
 $T_{X_1}$ and $T_{X_2}$ are trigraphs from~$\mathcal F$ with no balanced skew-partition.
\end{theorem}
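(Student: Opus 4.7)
The plan is to prove the statement first for a 2-join $(X_1, X_2)$; the complement 2-join case then follows from the self-complementary nature of $\mathcal{F}$ and of the balanced skew-partition property, together with the fact that the blocks of decomposition of a complement 2-join are by definition the complements of the blocks of decomposition of the corresponding 2-join in $\overline{T}$. By symmetry we only analyse $T_{X_1}$, and fix a split $(A_1, B_1, C_1, A_2, B_2, C_2)$. Two independent properties must be verified: (i) $T_{X_1}\in\mathcal{F}$, and (ii) $T_{X_1}$ has no balanced skew-partition.

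For (i) one first checks that the switchable pairs introduced at the markers form a new switchable component on their own: by definition of a 2-join no switchable pair of $T$ meets both $X_1$ and $X_2$, so the markers are switchably adjacent only to each other. In the odd case this new switchable component is the single edge $a_2b_2$, while in the even case it is the two-edge path $a_2c_2$, $c_2b_2$; moreover in the even case $c_2$ has degree two in $\Sigma(T_{X_1})$, is strongly anticomplete to $V(T_{X_1})\setminus\{a_2,b_2,c_2\}$, and $a_2b_2$ is a strong antiedge by construction, so the second defining condition of $\mathcal{F}$ is verified. To get Berge-ness one pulls any odd hole or odd antihole of $T_{X_1}$ back to $T$: a hole avoiding the markers already lies in $T[X_1]$, hence in $T$; a hole using markers contains a short segment from $A_1$ to $B_1$ through the markers, which we replace by a path in $T[X_2]$ from $A_2$ to $B_2$ with interior in $C_2$, such a path existing because every component of $T[X_2]$ meets both $A_2$ and $B_2$. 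By the parity lemma on 2-joins all such replacement paths share the same parity, and a straightforward counting shows that the parity of the resulting hole in $T$ equals that of the hole in $T_{X_1}$, yielding an odd hole in $T$, a contradiction. Antiholes are handled identically in $\overline{T}$.

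For (ii), which is the main difficulty, suppose for contradiction that $T_{X_1}$ admits a balanced skew-partition $(A',B')$. We construct a balanced skew-partition of $T$ by deleting the markers from $(A',B')$ and replacing them by appropriate subsets of $X_2$: roughly, $a_2$ is replaced by $A_2$ on the side where $a_2$ sat, $b_2$ by $B_2$, and in the even case $c_2$ by $C_2$, with a careful distribution of $C_2$ when $c_2$ is absent from the partition or lies on the boundary. A case analysis on the position of the markers and the sign of the 2-join is then needed to check three things: that the part playing the role of $A$ remains disconnected in $T$, using that the only strong edges between $X_1$ and $X_2$ are the complete biadjacencies $A_1$--$A_2$ and $B_1$--$B_2$; that the part playing the role of $B$ remains anticonnected in $T$, dually; and finally that the balanced condition persists, because any odd path in $T$ with ends in the $B$-side and interior in the $A$-side contracts (through the 2-join) to an odd path in $T_{X_1}$ of the same parity thanks once again to the parity lemma, and symmetrically for odd antipaths in $\overline{T}$. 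The inequality $|X_i|\geq 4$ furnished by Lemma~\ref{2joinform}, together with the condition that every component of $T[X_i]$ meets both $A_i$ and $B_i$, is indispensable to guarantee that the lifted partition is non-trivial and that the replacement paths of the required parity exist in $X_2$. The main obstacle is precisely this case analysis: there are several configurations for the positions of the two or three markers in $(A',B')$, and in each one $C_2$ must be split in just the right way so as not to create a new connection in the $A$-side, a new anticonnection in the $B$-side, or an odd path of length at least three violating the balanced condition.
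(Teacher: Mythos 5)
This theorem is not proved in the present paper at all: it is imported verbatim from \cite{bergefreebsp}, where it is the culmination of a long sequence of lemmas on how $2$-joins interact with skew-partitions. So the only question is whether your argument stands on its own, and it does not. Your part (i) is essentially the right outline (the marker vertices form a new switchable component satisfying the second condition of $\mathcal{F}$; odd holes and antiholes of the block are pulled back through the $2$-join using the parity lemma), although even there the cases of a hole meeting only one marker, or meeting $c_2$ but not both of $a_2,b_2$, need to be written out rather than asserted.

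The genuine gap is part (ii). You propose to lift a balanced skew-partition $(A',B')$ of $T_{X_1}$ to one of $T$ by substituting $A_2$, $B_2$, $C_2$ for the markers, and you yourself flag that ``a careful distribution of $C_2$'' and ``a case analysis'' are needed --- but that case analysis \emph{is} the theorem, and it is where the construction can fail. In the odd case there is no marker for $C_2$, and $C_2$ is strongly anticomplete to $X_1$: placing $C_2$ on the non-anticonnected side can make the complement of that side connected and destroy the skew-partition, while placing it on the disconnected side can merge components through edges inside $X_2$. Similarly, your claim that an odd path of $T$ with ends in the lifted $B$-side and interior in the lifted $A$-side ``contracts to an odd path in $T_{X_1}$ of the same parity'' is not automatic: such a path may enter and leave $X_2$ only through $A_2$, or only through $B_2$, and then it corresponds to no marker path at all; the parity lemma only governs paths from $A_2$ to $B_2$ with interior in $C_2$. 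None of these configurations is resolved in your sketch, and resolving them in \cite{bergefreebsp} takes several pages and further structural lemmas about proper $2$-joins. As written, the proposal is a plausible plan, not a proof.
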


Observe that this property is essential to apply the induction hypothesis when contracting a 2-join or complement 2-join.
This is what trigraphs are useful for: putting a strong edge or a strong antiedge instead of a switchable pair in the blocks of decomposition may create a balanced skew-partition.

\section{Proving the Clique-Stable Set separation}
\label{sec: CS-sep}

\subsection{In Berge graphs with no balanced skew-partition}

This part is devoted to proving that  trigraphs of $\mathcal{F}$ with no balanced skew-partition admit a quadratic CS-separator. 
The result is proved by induction, and so there are two cases to consider: either the trigraph is basic (handled in Lemma \ref{lemma:basic}); or the trigraph, or its complement can be decomposed by a 2-join (handled in Lemma \ref{2-join}). We put the pieces together in Theorem \ref{th: CS-Sep perfect sans skew}.

We begin with the case of basic trigraphs:

\begin{lemma}\label{lemma:basic}
There exists a constant $c$ such that every basic trigraph $T$ admits a \mbox{CS-separator} of size $c|V(T)|^2$.
\end{lemma}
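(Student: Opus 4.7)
The plan is to handle each of the five basic classes separately. Since CS-Separation is self-complementary, it suffices to treat bipartite trigraphs, line trigraphs, and doubled trigraphs. In the first two classes I will bound the number of inclusion-wise maximal cliques by $O(|V(T)|^2)$ and then invoke Observation \ref{clique max trigraph} (whose proof goes through for any basic trigraph, because in each basic class a clique and a stable set still intersect on at most a switchable pair). For a bipartite trigraph with partition $V(T)=S_1\cup S_2$ into strong stable sets, every clique has at most one vertex in each side, so maximal cliques are edges or singletons, and there are $O(|V(T)|^2)$ of them. For a line trigraph whose full realization is $L(H)$ with $H$ bipartite, the constraint that every clique of size at least three is strong means maximal cliques of size $\geq 3$ correspond exactly to stars of $H$, so there are only $O(|V(T)|)$ maximal cliques (the few remaining size-$2$ maximal cliques are also handled in the same count). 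For each maximal clique $K$ I add the cut $(K, V(T)\setminus K)$, which separates $K$ from any disjoint stable set; Observation \ref{clique max trigraph} then supplies the remaining $O(|V(T)|^2)$ cuts.

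The interesting case is the doubled class. Let $(X,Y)$ be a good partition of $T$. Since components of $T[X]$ have size at most $2$, any clique $K$ of $T$ satisfies that $K\cap X$ lies in a single component of $T[X]$, so $K\cap X$ is either empty, a singleton, or one of the $O(|V(T)|)$ size-$2$ components. Dually, any stable set $S$ satisfies that $S\cap Y$ is empty, a singleton, or one of the $O(|V(T)|)$ size-$2$ anticomponents of $T[Y]$. I enumerate all $O(|V(T)|^2)$ pairs $(K_X, S_Y)$ of such candidate values. For each pair I construct the cut $(B,W)$ in which $B$ contains $K_X$, $W$ contains $S_Y$, every vertex $y\in Y\setminus S_Y$ lies in $B$ if and only if $y$ is adjacent to every vertex of $K_X$, and every vertex $x\in X\setminus K_X$ lies in $W$ if and only if $x$ is antiadjacent to every vertex of $S_Y$. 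Now if $K$ is any clique and $S$ any disjoint stable set of $T$, choose the pair $(K_X,S_Y)=(K\cap X,\, S\cap Y)$. Then $K\cap X\subseteq B$ by construction, every vertex of $K\cap Y$ is adjacent to all of $K_X$ (being in the same clique), so lies in $B$; symmetrically every vertex of $S\cap X$ is antiadjacent to all of $S_Y$ and lies in $W$, and $S_Y\subseteq W$ by construction. Thus the $O(|V(T)|^2)$ cuts form a CS-separator directly, without invoking Observation \ref{clique max trigraph}.

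The main obstacle is the doubled case: one must verify that the cut rule behaves correctly in all degenerate subcases, in particular when $|K_X|\leq 1$ or $|S_Y|\leq 1$, and when $K_X$ is a singleton whose component-mate in $T[X]$ might or might not belong to $K$. The rule above is robust to these subcases because the classification ``empty / singleton / size-$2$ component'' is exhaustive: if $K_X=\{a\}$ and the component-mate $a'$ of $a$ also happens to lie in $K$, then the matching enumeration value is the size-$2$ component $\{a,a'\}$ and the corresponding cut is the one that separates $K$ from $S$. A small bookkeeping point is that in the bipartite and line-trigraph cases I am applying Observation \ref{clique max trigraph} outside the class $\mathcal F$, but a direct inspection of its proof shows that it only uses the fact that a clique and a stable set intersect on at most one switchable pair, which in each basic class follows from the defining constraints (strong antiadjacency inside the two sides of a bipartite trigraph; no triangle containing a switchable pair in a line trigraph; the good partition forbidding switchable pairs across $X$-$Y$ and of size $\geq 3$ within a side in a doubled trigraph).
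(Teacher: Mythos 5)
Your proof is correct and follows essentially the same route as the paper: reduce to bipartite, line, and doubled trigraphs by self-complementarity, bound the number of maximal cliques and invoke Observation~\ref{clique max trigraph} for the first two, and build an explicit $O(|V(T)|^2)$ family of cuts from the good partition for the doubled case. Your doubled-trigraph cut family (enumerating the possible traces $K\cap X$ and $S\cap Y$ including the size-two components) is a mild variant of the paper's, which enumerates only singletons and handles the size-two traces via separate pair-cuts, and your explicit remark that Observation~\ref{clique max trigraph} extends to basic trigraphs outside $\mathcal F$ is a point the paper glosses over.
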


\begin{proof}
Since the problem is self-complementary, we consider only the cases of bipartite trigraphs, line trigraphs and doubled trigraphs.
A clique in a bipartite trigraph has size at most 2, thus there is at most a quadratic number of them. If $T$ is a line trigraph, then its full realization is the line graph of a bipartite graph $G$ thus $T$ has a linear number of maximal cliques (each of them corresponds to a vertex of $G$). By Observation \ref{clique max trigraph}, this implies the existence of a CS-separator of quadratic size.

If $T$ is a doubled trigraph, let $(X,Y)$ be a good partition of $T$ and consider the following family of cuts:
first, build the cut $(Y,X)$, and in the second place, for every
$Z=\{x\}$ with $x\in X$ or $Z=\emptyset$, and for every $Z'=\{y\}$ with $y\in Y$ or $Z'=\emptyset$,
build the cut $((Y \cup Z)\setminus Z', (X\cup Z') \setminus
Z)$. Finally, for every pair $x,y\in V$, build the cut $(\{x,y\},
V(T)\setminus \{x,y\})$, and $(V(T)\setminus \{x,y\}, \{x,y\})$. These cuts
form a CS-separator : let $K$ be a clique in $T$ and $S$ be a stable set disjoint from $K$, then $|K\cap X|\leq 2$ and $|S\cap Y|\leq
2$. If $|K\cap X|= 2$, then $K$ has size exactly 2 since no vertex of $Y$ has two adjacent neighbors in $X$. So the cut $(K,
V\setminus K)$ separates $K$ and $S$. By similar arguments, if $|S\cap Y|=2$ then $S$ has size 2 and $(V\setminus S, S)$ separates $K$ and $S$.
Otherwise, $|K\cap
X|\leq 1$ and $|S\cap Y|\leq 1$ and then $(Y\cup (K\cap X) \setminus
(S\cap Y), X\cup (S\cap Y) \setminus (K\cap X))$ separates $K$ and $S$.
\end{proof}

Next, we handle the case where a $2$-join appears in the trigraph and show how to reconstruct a CS-separator from the CS-separators of the blocks of decompositions.

\begin{lemma}\label{2-join}
Let $T$ be a trigraph admitting a $2$-join $(X_1, X_2)$. If the blocks of decomposition $T_{X_1}$ and $T_{X_2}$ admit a CS-separator of size respectively $k_1$ and $k_2$, then $T$ admits a CS-separator of size $k_1+k_2$.
\end{lemma}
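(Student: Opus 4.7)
The plan is to extend each cut in $F_1$ (resp.\ $F_2$) to a cut of $T$ by using the marker vertices of the block $T_{X_1}$ (resp.\ $T_{X_2}$) to decide how to distribute the vertices on the opposite side of the 2-join between the clique side and the stable set side. Concretely, let $(A_1,B_1,C_1,A_2,B_2,C_2)$ be a split of $(X_1,X_2)$; for each $(B,W)\in F_1$ I define $(B^*,W^*)$ by copying $B\cap X_1$ into $B^*$ and $W\cap X_1$ into $W^*$, placing $A_2$ (resp.\ $B_2$) on the same side as the marker $a_2$ (resp.\ $b_2$), and sending $C_2$ into $W^*$ by default (ignoring the position of the even-case marker $c_2$). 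The extension for $F_2$ is symmetric, with $C_1\subseteq W^*$ by default. This yields $k_1+k_2$ cuts of $T$.

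To show these form a CS-separator, I will case-analyse a clique $K$ and a disjoint stable set $S$ according to how they meet $(X_1,X_2)$. The key structural observation is that, since the only strong edges across the 2-join lie in $A_1\times A_2$ and $B_1\times B_2$ and no switchable pair crosses, any clique meeting both sides must satisfy $K\cap X_1\subseteq A_1$ (or $\subseteq B_1$) and $K\cap X_2\subseteq A_2$ (respectively $\subseteq B_2$); dually, a stable set cannot meet both $A_1$ and $A_2$, nor both $B_1$ and $B_2$.

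Three cases then arise. If $K\subseteq X_1$, I project $S$ to $T_{X_1}$ via $S_1:=(S\cap X_1)\cup\{a_2:S\cap A_2\neq\emptyset\}\cup\{b_2:S\cap B_2\neq\emptyset\}$ and apply $F_1$ to $(K,S_1)$. Symmetrically if $K\subseteq X_2$ I use $F_2$. Finally, if $K$ meets both sides, say $K\cap X_1\subseteq A_1$ and $K\cap X_2\subseteq A_2$, then I apply $F_1$ to $(K_1,S_1)$ with $K_1:=(K\cap X_1)\cup\{a_2\}$ when $S\cap A_2=\emptyset$, and otherwise (so $S\cap A_1=\emptyset$ by the stable set condition) I apply $F_2$ to $(K_2,S_2)$ with $K_2:=(K\cap X_2)\cup\{a_1\}$, defining $S_2$ analogously.

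The routine checks are that each $K_i$ is a clique of $T_{X_i}$, that each $S_i$ is a stable set (both relying on the fact that $a_i$ is strongly complete to $A_i$ and strongly anticomplete to $X_i\setminus A_i$, and similarly for $b_i$), and that $K_i\cap S_i=\emptyset$. The last point is the main obstacle: a naive projection may place the same marker vertex in both $K_i$ and $S_i$ whenever $K$ and $S$ both meet $A_2$ (or $B_2$); the case split above is crafted precisely to avoid this conflict, by switching to the other block exactly when it would arise. Once a separating cut $(B,W)$ is obtained in the chosen block, the extension rule above routes the remaining vertices of the opposite side (via the positions of $a_i$, $b_i$ and the default for $C_i$) so that $K\subseteq B^*$ and $S\subseteq W^*$, as required.
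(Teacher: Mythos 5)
Your proposal is correct and follows essentially the same route as the paper's proof: identical construction of the extended cuts (marker vertices dictate the placement of $A_2$, $B_2$, with $C_2$ defaulting to the stable-set side), the same three-way case analysis on how $K$ meets $(X_1,X_2)$, and the same resolution of the marker-conflict in the crossing case by using the fact that $S$ cannot meet both $A_1$ and $A_2$. The only cosmetic difference is that you spell out as an explicit subcase what the paper dispatches with a "without loss of generality."
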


\begin{proof} 
 Let $(A_1, B_1, C_1, A_2, B_2, C_2)$ be a split of $(X_1, X_2)$, $T_{X_1}$ (resp. $T_{X_2}$) be the block of decomposition with marker vertices $a_2, b_2$, and possibly $c_2$ (depending on the parity of the $2$-join) (resp. $a_1, b_1$, and possibly $c_1$). Observe that there is no need to distinguish between an odd or an even $2$-join, because $c_1$ and $c_2$ play no role. Let $F_1$ be a CS-separator of $T_{X_1}$ of size $k_1$ and $F_2$ be a CS-separator of $T_{X_2}$ of size $k_2$.
 
Let us build $F$ aiming at being a CS-separator for $T$. For each cut $(U,W)\in F_1$,
build a cut as follows: start with $U'=U\cap X_1$ and $W'=W\cap X_1$. If $a_2\in U$, add $A_2$ to $U'$, otherwise add $A_2$ to $W'$. Moreover if $b_2\in U$, add $B_2$ to $U'$, otherwise add $B_2$ to $W'$. Now build the cut $(U',W'\cup C_2)$ with the resulting sets $U'$ and $W'$.
 In other words, we put $A_2$ on the same side as $a_2$, $B_2$ on the same side as $b_2$, and $C_2$ on the stable set side. For each cut $(U,W)$ in $F_2$, we do the similar construction: start from $(U\cap X_2, W\cap X_2)$, then put $A_1$ on the same side as $a_1$, $B_1$ on the same side as $b_1$, and finally put $C_1$ on the stable set side.

$F$ is indeed a CS-separator: let $K$ be a clique and $S$ be a stable set disjoint from $K$. First, suppose that $K\subseteq X_1$. We define $S'=(S\cap X_1) \cup S_{a_2,b_2}$ where $S_{a_2,b_2}\subseteq\{a_2, b_2\}$ contains $a_2$ (resp. $b_2$) if and only if $S$ intersects $A_2$ (resp. $B_2$). $S'$ is a stable set of $T_{X_1}$, so there is a cut in $F_1$ separating the pair $K$ and $S'$. The corresponding cut in $F$ separates $K$ and $S$. The case $K\subseteq X_2$ is handled symmetrically.

Finally, suppose $K$ intersects both $X_1$ and $X_2$. Then 
$K\cap C_1=\emptyset$ and $K\subseteq A_1\cup A_2$ or $K\subseteq B_1\cup B_2$. Assume by symmetry that $K\subseteq A_1\cup A_2$.
Observe that $S$ can not intersect both $A_1$ and $A_2$ which are strongly complete to each other, so without loss of generality assume it does not intersect $A_2$.
 Let $K'=(K\cap A_1)\cup \{a_2\}$ and $S'=(S\cap X_1)\cup S_{b_2}$ where $S_{b_2}=\{b_2\}$ if $S$ intersects $B_2$, and $S_{b_2}=\emptyset$ otherwise. $K'$ is a clique and $S'$ is a stable set of $T_{X_1}$ so there exists a cut in $F_1$ separating them, and the corresponding cut in $F$ separates $K$ and $S$. Then $F$ is a CS-separator.
\end{proof}

This leads us to the main theorem of this section:

\begin{theorem}
\label{th: CS-Sep perfect sans skew}
Every trigraph $T$ of $\mathcal{F}$ with no balanced skew-partition admits a CS-separator of size $\mathcal{O}(|V(T)|^2)$.
\end{theorem}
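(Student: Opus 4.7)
The plan is to proceed by strong induction on $n=|V(T)|$, using the decomposition theorem (Theorem~\ref{structure}) as the backbone of the case analysis: every $T \in \mathcal{F}$ with no balanced skew-partition is either basic, or admits a 2-join, or admits a complement 2-join. The basic case is disposed of directly by Lemma~\ref{lemma:basic}, which already produces a CS-separator of quadratic size. For a 2-join $(X_1,X_2)$, I would apply Lemma~\ref{2-join} to reduce the construction of a CS-separator for $T$ to CS-separators for the two blocks $T_{X_1}$ and $T_{X_2}$; by Theorem~\ref{l:stayBerge}, these blocks remain in $\mathcal{F}$ with no balanced skew-partition, so the inductive hypothesis applies once we know the blocks are strictly smaller than $T$. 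This is precisely where Lemma~\ref{2joinform} is needed: it guarantees $|X_i|\ge 4$, so each block has size at most $|X_i|+3 \le n-1<n$.

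The complement 2-join case is handled by self-complementarity. The CS-separation property is invariant under complementation (one simply swaps the clique side and stable set side of each cut), the class $\mathcal F$ is closed under complementation, and the absence of balanced skew-partition is self-complementary. Hence, if $T$ has a complement 2-join, then $\comp{T}\in\mathcal F$ also has no balanced skew-partition and admits a 2-join; applying the previous case to $\comp{T}$ and transferring the resulting CS-separator back to $T$ yields the desired bound.

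The remaining step is to check that the recurrence actually produces a quadratic bound. Writing $n_i=|X_i|$ and supposing by induction that each block admits a CS-separator of size at most $A(n_i+3)^2$, Lemma~\ref{2-join} gives
\[
f(n) \;\le\; A(n_1+3)^2 + A(n_2+3)^2 \;=\; A\bigl[n^2 - 2n_1 n_2 + 6n + 18\bigr],
\]
so it suffices to verify $2n_1 n_2 \ge 6n+18$. Since $n_1,n_2\ge 4$ and $n_1+n_2=n$, we have $n_1 n_2 \ge 4(n-4)$, and the required inequality holds as soon as $n\ge 25$. For the finitely many sizes $n\le 24$ there are only finitely many trigraphs in $\mathcal{F}$, so the constant $A$ can be chosen large enough to absorb these base cases as well as the constant coming from Lemma~\ref{lemma:basic}. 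I do not anticipate a real obstacle here; all the hard work is packaged in the preceding lemmas, and the only mildly delicate point is this constant bookkeeping at small $n$, which is a routine inflation of $A$.
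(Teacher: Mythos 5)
Your proposal is correct and follows essentially the same route as the paper: induction on $|V(T)|$ using Theorem~\ref{structure}, with Lemma~\ref{lemma:basic} for the basic case, Lemmas~\ref{2joinform}, \ref{2-join} and Theorem~\ref{l:stayBerge} for the 2-join case, self-complementarity for the complement 2-join, and the same threshold $n\ge 25$ in the arithmetic (the paper phrases the final inequality via convexity of the polynomial $P(n_1)$, but the bound $n_1n_2\ge 4(n-4)$ is the same boundary computation). The handling of small trigraphs by inflating the constant matches the paper's use of the trivial $2^{24}$-cut separator.
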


\begin{proof}
Let $c'$ be the constant of Lemma \ref{lemma:basic} and $c=\max(c',2^{24})$. Let us prove by induction that every trigraph of $T$ on $n$ vertices admits a CS-separator of size $cn^2$.
The initialization is concerned with basic trigraphs, for which Lemma \ref{lemma:basic} shows that a CS-separator of size $c'n^2$ exists, and with trigraphs of size less than $24$. For them, one can consider every subset $U$ of vertices and take the cut $(U, V\setminus U)$ which form a trivial CS-separator of size at most $2^{24}n^2$.

Consequently, we can now assume that the trigraph $T$ is not basic and has at least $25$ vertices. By applying Theorem \ref{structure}, we know that $T$ has a $2$-join $(X_1, X_2)$ (or a complement $2$-join, in which case we switch to $\overline{T}$ since the problem is self-complementary). We define $n_1=|X_1|$, then by Lemma \ref{2joinform} we can assume that $4\leq n_1\leq n-4$. Applying Theorem \ref{l:stayBerge}, we can apply the induction hypothesis on the blocks of decomposition $T_{X_1}$ and $T_{X_2}$ to get a CS-separator of size respectively at most $k_1=c(n_1+3)^2$ and $k_2=c(n-n_1+3)^2$. By Lemma \ref{2-join}, $T$ admits a CS-separator of size $k_1+k_2$. The goal is to prove that $k_1+k_2\leq cn^2$.

Let $P(n_1)=c(n_1+3)^2+c(n-n_1+3)^2-cn^2$. $P$ is a degree $2$ polynomial with leading coefficient $2c>0$. Moreover, $P(4)=P(n-4)=-2c(n-25)\leq 0$ so by convexity of $P$, $P(n_1)\leq 0$ for every $4\leq n_1 \leq n-4$, which achieves the proof.
\end{proof}

\subsection{Closure by generalized $k$-join}
\label{sec: k-join}
We present here a way to extend the result of the Clique-Stable Set separation on Berge graphs with no balanced skew-partition to larger classes of graphs, based on a generalization of the $2$-join.
Let $\mathcal{C}$ be a class of graphs, which should be seen as ``basic'' graphs. For any integer $k\geq 1$, we construct the class $\mathcal{C}^{\leq k}$ of trigraphs in the following way: a trigraph $T$ belongs to $\mathcal{C}^{\leq k}$ if and only if there exists a partition $X_1, \ldots, X_r$ of $V(T)$ such that:

\begin{itemize}
\item For every $1\leq i \leq r$, $1\leq |X_i|\leq k$.
\item For every $1\leq i \leq r$, ${\choose{2}{X_i}}\subseteq \sigma(T)$.
\item For every $1\leq i\neq j\leq r$, $(X_i\times X_j) \cap \sigma(T) =\emptyset$.
\item There exists a graph $G$ in $\mathcal{C}$ such that $G$ is a realization of $T$.
\end{itemize}

In other words, starting from a graph $G$ of $\mathcal{C}$, we
partition its vertices into small parts (of size at most $k$), and change all
adjacencies inside the parts into switchable pairs.

We now define the \defi{generalized $k$-join} between two trigraphs
$T_1$ and $T_2$ (see Figure \ref{fig:k-join} for an illustration), which generalizes the $2$-join and is quite similar to the $H$-join defined in \cite{BuiXuan2010}. Let $T_1$ and $T_2$ be two trigraphs having the following properties, with $1\leq r,s\leq k$:

\begin{itemize}
\item $V(T_1)$ is partitioned into $(A_1, \ldots, A_r, B=\{b_1, \ldots, b_s\})$  and $A_j\neq \emptyset$ for every $1\leq j\leq r$.
\item $V(T_2)$ is partitioned into $(B_1, \ldots, B_s, A=\{a_1, \ldots,
  a_r\})$ and $B_i\neq \emptyset$ for every $1\leq i \leq s$.
\item ${\choose{2}{B}}\subseteq \sigma(T_1)$ and ${\choose{2}{A}}\subseteq \sigma(T_2)$, meaning that $A$ and $B$ contain only switchable pairs.
\item For every $1\leq i\leq s, 1\leq j \leq r$, $b_i$ and $a_j$ are either both strongly complete or both strongly anticomplete to respectively $A_j$ and $B_i$. In other words, there exists a bipartite graph describing the adjacency between $B$ and $(A_1, \ldots, A_r)$, and the same bipartite graph describes the adjacency between $(B_1, \ldots, B_s)$ and $A$.
\end{itemize}

Then the generalized $k$-join of $T_1$ and $T_2$ is the trigraph $T$ with vertex set
 $V(T)=A_1\cup \ldots \cup A_r\cup B_1\cup \ldots\cup B_s$. Let
$\theta_1$ and $\theta_2$ be the adjacency functions of $T_1$ and
$T_2$, respectively. As much as possible, the adjacency function
$\theta$ of $T$ follows $\theta_1$ and $\theta_2$ (meaning
$\theta(uv)=\theta_1(uv)$ for $uv\in {\choose{2}{V(T_1)\cap V(T)}}$
and $\theta(uv)=\theta_2(uv)$ for $uv\in {\choose{2}{V(T_2)\cap V(T)}}$), and for $a\in A_j$, $b\in B_i$, $\theta(ab)=1$ if $b_i$ and
$A_j$ are strongly complete in $T_1$ (or, equivalently, if $a_j$ and
$B_i$ are strongly complete in $T_2$), and $-1$ otherwise.

\begin{figure}

\centering
\subfigure[In $T_1$, $b_1b_2$ is a switchable pair, $b_1$ is strongly complete to $A_1$ and $A_2$ and strongly anticomplete to $A_3$; $b_2$ is strongly complete to $A_2$ and $A_3$ and strongly anticomplete to $A_1$. There can be any adjacency in the left part.]{
 \includegraphics[scale=0.9]{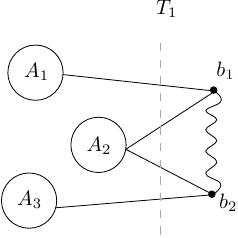}}
 \hspace{15pt}
 \subfigure[In $T_2$, $\{a_1,a_2, a_3\}$ contains only switchable pairs, $B_1$ is strongly complete to $\{a_1, a_2\}$ and strongly anticomplete to $a_3$; $B_2$ is strongly complete to $\{a_2, a_3\}$ and  strongly anticomplete to $a_1$. There can be any adjacency in the right part.]{
 \includegraphics[scale=0.9]{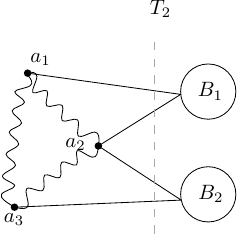}}
  \hspace{15pt}
 \subfigure[In $T$, $B_1$ is strongly complete to $A_1$ and $A_2$ and strongly anticomplete to $A_3$; $B_2$ is strongly complete to $A_2$ and $A_3$ and strongly anticomplete to $A_1$. The adjacencies inside the left part and the right part are preserved.]{
 \includegraphics[scale=0.9]{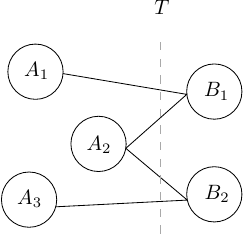}}

 \caption{Example of a generalized 3-join $T$ of $T_1$ and $T_2$ with $r=3$ and $s=2$.}
  \label{fig:k-join}
\end{figure}

We finally define $\overline{\mathcal{C}^{\leq k}}$ to be the smallest class of trigraphs containing $\mathcal{C}^{\leq k}$ and closed under generalized $k$-join.

\begin{lemma}\label{CS-sep ck intermediaire}
If every graph $G$ of $\mathcal{C}$ admits a CS-separator of size
$m$, then every trigraph $T$ of $\mathcal{C}^{\leq k}$ admits a CS-separator of size $m^{k^2}$.
\end{lemma}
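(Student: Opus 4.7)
The plan is to fix a realization $G \in \mathcal{C}$ of $T$ (which exists by definition of $\mathcal{C}^{\leq k}$) together with the given CS-separator $F_G$ of size $m$, and to reduce CS-separation in $T$ to $k^{2}$ independent CS-separations in $G$ whose outputs are then combined into a single cut of $T$; enumerating over all possible combinations yields a family of size $m^{k^{2}}$.

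First I would fix an arbitrary ordering $X_i = \{x_i^{1}, \ldots, x_i^{|X_i|}\}$ of each part, and for every $a \in \{1, \ldots, k\}$ define the \emph{$a$-th slice} $P_a := \{x_i^{a} : |X_i| \geq a\}$. These slices form a partition of $V(T)$ and, crucially, each $P_a$ contains at most one vertex per part $X_i$. The key observation is then: for any clique $K$ and any stable set $S$ in $T$ and any $a, b \in \{1, \ldots, k\}$, the set $K_a := K \cap P_a$ is a clique of $G$ and $S_b := S \cap P_b$ is a stable set of $G$. Indeed, two distinct vertices of $K_a$ (respectively $S_b$) lie in distinct parts, so their pair is not switchable (switchable pairs live inside parts); hence ``adjacent in $T$'' coincides with ``edge in $G$'' on such pairs, and analogously for antiadjacency, which transfers the clique and stable-set properties to $G$. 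Disjointness of $K$ and $S$ gives disjointness of $K_a$ and $S_b$.

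Next, for each function $\phi : \{1, \ldots, k\}^{2} \to F_G$ with $\phi(a, b) = (B^{a,b}, W^{a,b})$, I would define the cut $(B_\phi, W_\phi)$ of $T$ by
\[
B_\phi := \bigcup_{a=1}^{k} \Bigl( P_a \cap \bigcap_{b=1}^{k} B^{a,b} \Bigr), \qquad W_\phi := V(T) \setminus B_\phi.
\]
Enumerating over all such $\phi$ produces a family of $m^{k^{2}}$ cuts. To verify it is a CS-separator, fix disjoint $K$ and $S$ in $T$; by the key observation, for each $(a, b)$ the family $F_G$ contains a cut $\phi(a, b)$ separating $K_a$ from $S_b$. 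Then every $v \in K$ at position $a$ satisfies $v \in K_a \subseteq B^{a, b}$ for all $b$, hence $v \in B_\phi$; every $v \in S$ at position $a$ satisfies $v \in S_a \subseteq W^{a, a}$, hence $v \notin B^{a, a}$ and therefore $v \notin B_\phi$.

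The conceptual heart of the argument, and the main obstacle to set up correctly, is the slicing observation: by taking at most one vertex per part, the within-part switchable structure becomes invisible to the CS-separator $F_G$ and one recovers a genuine clique/stable-set pair in the graph $G$. The remaining bookkeeping rests on an asymmetry in the definition of $B_\phi$: the universal quantifier over $b$ forces every $K$-vertex onto the clique side via every choice of $b$, whereas only the diagonal entries $\phi(a, a)$ are needed to push each $S$-vertex onto the stable-set side.
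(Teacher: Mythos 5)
Your proof is correct and follows essentially the same route as the paper: both cover a clique (resp.\ stable set) of $T$ by at most $k$ cliques (resp.\ stable sets) of the realization $G$ using the fact that each part $X_i$ contributes at most one vertex to each piece, and both then combine $k^2$ cuts of the separator of $G$ into one cut of $T$, giving $m^{k^2}$ cuts in total. The only (harmless) difference is bookkeeping: the paper takes the cut $\bigl(\bigcup_a \bigcap_b U_{a,b},\ \cdot\bigr)$ without intersecting with your slices $P_a$, handling the stable-set side by the whole column rather than the diagonal.
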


\begin{proof}
  First we claim that if there exists a CS-separator $F$ of size
  $m$ then the family of cuts $F'=\{(\cap_{i=1}^kU_i,
  \cup_{i=1}^kW_i)|(U_1,W_1),\ldots , (U_k, W_k) \in F\}$ has size $m^k$ and separates every clique from
  every union of at most $k$ stable sets. Indeed if $K$ is a clique and
  $S_1,\ldots ,S_k$ are $k$ stable sets disjoint from $K$
  then there exist in $F$ $k$ partitions $(U_1,W_1), \ldots , (U_k, W_k)$ such that $(U_i, W_i)$ separates $K$ and $S_i$.
Now $(\cap_{i=1}^kU_i, \cup_{i=1}^kW_i)$ is a partition that
  separates $K$ from $\cup_{i=1}^k S_i$. Using the same argument we
  can build a family  of cuts $F''$ of size $m^{k^2}$
  that separates every union of at most $k$ cliques from every union of
  at most $k$ stable sets. Now let $T$ be a trigraph of $\mathcal{C}^{\leq k}$ and let  $G\in\mathcal{C}$ such that $G$ is a realization of $T$. Let $X_1, \ldots, X_r$ be the partition of $V(T)$ as in the definition of $\mathcal{C}^{\leq k}$. Notice that a clique $K$ (resp.~stable set $S$) in
  $T$ is a union of at most $k$ cliques (resp.~stable sets) in $G$: indeed, by taking one vertex in $K\cap X_i$  (if not empty) for each $1\leq i \leq r$, we build a clique of $G$; repeating this operation at most $k$ times covers $K$ with $k$ cliques of $G$. It follows that there exists a CS-separator of $T$ of size $m^{k^2}$.
\end{proof}

\begin{lemma}\label{k-join}
If $T_1$, $T_2$ $\in \overline{\mathcal{C}^{\leq k}}$ admit CS-separators of size respectively $m_1$ and $m_2$, then the generalized $k$-join $T$ of $T_1$ and $T_2$ admits a CS-separator of size $m_1+m_2$.
\end{lemma}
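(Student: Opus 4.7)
The plan is to mimic the proof of Lemma~\ref{2-join}: each cut of $F_1$ produces exactly one cut of $T$, and likewise for $F_2$, yielding a family $F$ of size $m_1+m_2$. Concretely, given $(U,W)\in F_1$ we form the cut $(U',W')$ of $T$ by restricting $(U,W)$ to the left part $A_1\cup\dots\cup A_r$ and then placing each block $B_i$ entirely on the same side of the cut as its marker $b_i$. Cuts from $F_2$ are defined symmetrically, placing each $A_j$ on the side of $a_j$.

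To verify that $F$ is a CS-separator of $T$, I would fix a clique $K$ and a disjoint stable set $S$ and split into cases. If $K\subseteq A_1\cup\dots\cup A_r$, set $K':=K$ (a clique of $T_1$, since adjacencies inside the left part are preserved by the generalized $k$-join) and $S':=(S\cap(A_1\cup\dots\cup A_r))\cup\{b_i: S\cap B_i\neq\emptyset\}$. The set $S'$ is a stable set of $T_1$ disjoint from $K'$: its right part lies inside the all-switchable set $B$, and if $b_i\in S'$ were strongly adjacent in $T_1$ to some $A_j$ meeting $S$, then $B_i$ would be strongly complete to $A_j$ in $T$, forcing a strong edge inside $S$. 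A cut of $F_1$ separating $K'$ from $S'$ then transfers, via the above construction, to a cut of $T$ separating $K$ from $S$. The case $K\subseteq B_1\cup\dots\cup B_s$ is symmetric via $F_2$.

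The delicate case, which I expect to be the main obstacle, is when $K$ meets both sides. Set $J:=\{j: K\cap A_j\neq\emptyset\}$ and $I:=\{i: K\cap B_i\neq\emptyset\}$. Since $K$ is a clique and the crossing adjacencies between an $A_j$ and a $B_i$ are uniform on blocks, $B_i$ must be strongly complete to $A_j$ whenever $i\in I$ and $j\in J$; this forbids $S$ from meeting both some $A_j$ with $j\in J$ and some $B_i$ with $i\in I$. The key dichotomy is therefore: \emph{either $S\cap B_i=\emptyset$ for every $i\in I$, or $S\cap A_j=\emptyset$ for every $j\in J$.} In the first horn I would apply $F_1$ to $K':=(K\cap(A_1\cup\dots\cup A_r))\cup\{b_i: i\in I\}$ --- a clique of $T_1$, since $\{b_i: i\in I\}\subseteq B$ consists of pairwise switchable (hence adjacent) vertices and is strongly complete to each $A_j$ with $j\in J$ --- together with the same $S'$ as above; disjointness of $K'$ and $S'$ is then exactly the content of the first horn of the dichotomy. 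The second horn is handled symmetrically inside $T_2$. Once the dichotomy is secured, translating the separating cut from $T_1$ (or $T_2$) back to $T$ is routine bookkeeping.
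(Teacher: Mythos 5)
Your proof is correct and follows essentially the same route as the paper: the family $F$ is built identically (one cut per cut of $F_1\cup F_2$, with each block placed on the side of its marker), and the verification lifts $K$ and $S$ to a clique and stable set of $T_1$ or $T_2$ by swapping blocks for markers, using the uniformity of the crossing adjacencies to guarantee disjointness. The only difference is cosmetic: you organize the case analysis around where $K$ lives, while the paper splits on whether some part of the partition meets both $K$ and $S$; both yield the same dichotomy deciding which block to lift into.
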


\begin{proof}
The proof is very similar to the one of Lemma \ref{2-join}. We follow the notation introduced in the definition of the generalized $k$-join. Let $F_1$ (resp. $F_2$) be a CS-separator of size $m_1$ (resp. $m_2$) on $T_1$ (resp. $T_2$). Let us build $F$ aiming at being a CS-separator on $T$. For every cut $(U,W)$ in $F_1$, build the cut $(U',W')$ with the following process: 
start with $U'=U\cap \cup_{j=0}^{r}A_j$ and $W'= W \cap \cup_{j=0}^{r}A_j$; now for every $1\leq i \leq s$, if $b_i\in U$, then add $B_i$ to $U'$, otherwise add $B_i$ to $W'$. In other words, we take a cut similar to $(U,W)$ by putting $B_i$ in the same side as $b_i$. We do the symmetric operation for every cut $(U,W)$ in $F_2$ by putting $A_j$ in the same side as $a_j$. 

$F$ is indeed a CS-separator: let $K$ be a clique and $S$ be a stable set disjoint from $K$. Suppose as a first case that one part of the partition $(A_1, \ldots, A_r, B_1, \ldots, B_s)$ intersects both $K$ and $S$. Without loss of generality, we assume that $A_1\cap K\neq \emptyset$ and $A_1 \cap S \neq \emptyset$. Since for every $i$, $A_1$ is either strongly complete or strongly anticomplete to $B_i$, $B_i$ can not intersect both $K$ and $S$. Consider the following sets in $T_1$: $K'=(K\cap V(T))\cup K_b$ and $S'=(S\cap V(T))\cup S_b$ where \mbox{$K_b=\{b_i| K\cap B_i \neq \emptyset\}$} and $S_b=\{b_i | S\cap B_i \neq \emptyset\}$. $K'$ is a clique in $T_1$, $S'$ is a stable set in $T_1$, and there is a cut separating them in $F_1$. The corresponding cut in $F$ separates $K$ and $S$.

In the case when no part of the partition intersects both $K$ and $S$, analogous argument applies.
\end{proof}

\begin{theorem}
If every graph $G\in \mathcal{C}$ admits a CS-separator of size $\mathcal{O}(|V(G)|^c)$, then every trigraph $T\in \overline{\mathcal{C}^{\leq k}}$ admits a CS-separator of size $\mathcal{O}(|V(T)|^{k^2c})$. In particular, every realization $G'$ of a trigraph of $\overline{\mathcal{C}^{\leq k}}$ admits a CS-separator of size $\mathcal{O}(|V(G')|^{k^2c})$.
\end{theorem}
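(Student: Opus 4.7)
The plan is an induction paralleling the proof of Theorem~\ref{th: CS-Sep perfect sans skew}, combining Lemma~\ref{CS-sep ck intermediaire} as the base case and Lemma~\ref{k-join} as the inductive step. Set $p = k^2 c$, denote by $c_0$ a constant such that every $G \in \mathcal{C}$ admits a CS-separator of size at most $c_0|V(G)|^c$, and let $f(T)$ denote the minimum size of a CS-separator of $T$. For $T \in \mathcal{C}^{\leq k}$, Lemma~\ref{CS-sep ck intermediaire} applied with $m = c_0|V(T)|^c$ directly yields $f(T) \leq c_0^{k^2}|V(T)|^p$; for trigraphs with $|V(T)|$ below some threshold $N_0 = \mathcal{O}(k)$, the trivial family of all $2^{|V(T)|}$ cuts gives $f(T) \leq 2^{N_0}$, a constant.

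For the inductive step, $T$ is the generalized $k$-join of two trigraphs $T_1, T_2 \in \overline{\mathcal{C}^{\leq k}}$. Writing $a = \sum_j |A_j|$ and $b = \sum_i |B_i|$, one has $|V(T)| = a+b$, $|V(T_1)| = a+s$, and $|V(T_2)| = b+r$, with $r,s \leq k$, $r \leq a$ (since each $A_j \neq \emptyset$) and $s \leq b$ (since each $B_i \neq \emptyset$). Hence $|V(T_1)|, |V(T_2)| \leq |V(T)|$ and $|V(T_1)| + |V(T_2)| \leq |V(T)| + 2k$, and a direct convexity argument on this region bounds $|V(T_1)|^p + |V(T_2)|^p$ by $|V(T)|^p + (2k)^p$. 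Combining with Lemma~\ref{k-join} and using the strengthened inductive hypothesis $f(T') \leq C|V(T')|^p - D$ with constants $C$ and $D$ jointly chosen so that $D \geq C(2k)^p$ (and so that the base-case bounds fit), the induction closes: $f(T) \leq f(T_1) + f(T_2) \leq C(|V(T)|^p + (2k)^p) - 2D \leq C|V(T)|^p - D$.

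The main obstacle is that $|V(T_i)|$ need not be strictly smaller than $|V(T)|$: when every $B_i$ is a singleton one has $|V(T_1)| = |V(T)|$, and symmetrically for the $A_j$'s. In these degenerate cases the other trigraph has at most $2k$ vertices, so a naive induction on $|V(T)|$ does not strictly decrease. I would therefore induct on the lexicographic pair $(N(T), |V(T)|)$, where $N(T)$ is the minimum number of generalized $k$-joins needed to build $T$; since $N(T_1), N(T_2) < N(T)$ in any decomposition, the inductive hypothesis always applies. Finally, the ``in particular'' clause is immediate: for any realization $G'$ of $T \in \overline{\mathcal{C}^{\leq k}}$, $V(G') = V(T)$ and every clique (resp.\ stable set) of $G'$ is a clique (resp.\ stable set) of $T$, so any CS-separator of $T$ is automatically a CS-separator of $G'$ of the same size.
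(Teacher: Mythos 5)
Your architecture is the same as the paper's (Lemma~\ref{CS-sep ck intermediaire} for the base case, Lemma~\ref{k-join} for the inductive step, convexity of $x\mapsto x^{k^2c}$ to close the arithmetic), and you are right to flag the degenerate joins in which every $B_i$ (or every $A_j$) is a singleton: there $|V(T_1)|=|V(T)|$, and the inequality $n_1,n_2\le n-1$ that the paper's own write-up uses is not justified. Your lexicographic induction on $(N(T),|V(T)|)$ does restore well-foundedness, provided you take a decomposition witnessing the minimality of $N(T)$ so that $N(T_1),N(T_2)<N(T)$; this is not true of ``any decomposition'' as you claim, but it is a harmless fix.

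The genuine gap is in the arithmetic: the strengthened hypothesis $f(T')\le C|V(T')|^{p}-D$ with $D\ge C(2k)^{p}$ is unsatisfiable. One side of a generalized $k$-join may have as few as $r+s\le 2k$ vertices (take $r=s=1$ and $B_1$ a singleton to get $|V(T_2)|=2$), and such trigraphs are also base cases of your induction; for them $C|V(T')|^{p}-D\le C(2k)^{p}-D\le 0<f(T')$, so no choice of $C,D$ makes ``the base-case bounds fit.'' Restricting the strengthened bound to $|V(T')|\ge N_0$ does not help: in the degenerate case you would get $f(T)\le f(T_1)+f(T_2)\le \bigl(C|V(T)|^{p}-D\bigr)+2^{2k}$, which exceeds $C|V(T)|^{p}-D$. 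Indeed no bound depending only on $|V(T)|$ can close an induction that charges a positive cost for $T_2$ while $|V(T_1)|=|V(T)|$. You need an extra idea making the degenerate case cheap. Two options: (i) the paper's route, which proves the unsubtracted bound $f(T)\le p\,n^{k^2c}$ and, for $n\ge p_0$ with $p_0$ chosen so that $n^{k^2c}-(n-1)^{k^2c}\ge(2k+1)^{k^2c}$, uses $n_1,n_2\le n-1$ together with $n_1+n_2=n+r+s$ --- but this still requires disposing of degenerate joins separately; (ii) observe that when every $B_i$ is a singleton, $T$ is a semirealization of $T_1$ (only switchable pairs inside $B$ are reassigned), so every clique and every stable set of $T$ is already one of $T_1$, and $f(T)\le f(T_1)$ outright with no contribution from $T_2$. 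Either observation, combined with your induction, completes the proof; as written, the proposal does not close. Your ``in particular'' clause is correct.
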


\begin{proof}
Let $p'$ be
  a constant such that every $G\in \mathcal{C}$ admits a   \mbox{CS-separator} of size $p'|V(G)|^c$,
  and let $p_0$  be a large constant to be defined
  later.
  We prove by induction that there exists a
  CS-separator of size $pn^{k^2c}$ with $p=\max(p',2^{p_0})$. The base case is divided into two cases: the trigraphs of
  $\mathcal{C}^{\leq k}$, for which the property is verified according to Lemma \ref{CS-sep ck intermediaire};
  and the trigraphs of size at most $p_0$, for which one can consider
  every subset $U$ of vertices and take the cut $(U, V\setminus U)$
  which form a trivial CS-separator of size at most $2^{p_0}n^{k^2c}$.

  Consequently, we can now assume that $T$ is the generalized $k$-join
  of $T_1$ and $T_2$ with at least $p_0$ vertices. Let $n_1=|T_1|$ and
  $n_2=|T_2|$ with $n_1+n_2=n+r+s$ and $r+s+1\leq n_1, n_2, \leq
  n-1$. By induction, there exists a CS-separator of size
  $pn_1^{k^2c}$ on $T_1$ and one of size $pn_2^{k^2c}$ on $T_2$. By
  Lemma \ref{k-join}, there exists a CS-separator on $T$ of size
  $pn_1^{k^2c}+pn_2^{k^2c}$. The goal is to prove
  $pn_1^{k^2c}+pn_2^{k^2c}\leq pn^{k^2c}$.

Notice that $n_1+n_2=n-1+r+s+1$ so by convexity of $x \mapsto x^c$ on $\mathbb{R}^+$,
$n_1^{k^2c}+n_2^{k^2c}\leq (n-1)^{k^2c}+(r+s+1)^{k^2c}$. Moreover, $r+s+1\leq
2k+1$. Now we can define $p_0$ large enough such that for every $n\geq
p_0$, $n^{k^2c}-(n-1)^{k^2c}\geq (2k+1)^{k^2c}$. Then $n_1^{k^2c}+n_2^{k^2c}\leq
n^{k^2c}$, which concludes the proof.
\end{proof}

\section{Strong \EH \ property}
\label{sec: SEH}

As mentioned in the introduction, a \emph{biclique} in $T$ is a pair $(V_1, V_2)$ of disjoint subsets of vertices such that $V_1$ is strongly complete to $V_2$. Observe that we do not care about the inside of $V_1$ and $V_2$. The size of the biclique is $\min(|V_1|, |V_2|)$.
A \emph{complement biclique} in $T$ is a biclique in $\comp{T}$.
Let $\mathcal{C}$ be a class of trigraphs, then we say that $\mathcal{C}$ has the \emph{Strong \EH \ property} if there exists $c>0$ such that for every $T\in \mathcal{C}$, $T$ admits a biclique or a complement biclique of size at least $c|V(T)|$. This notion was introduced by Fox and Pach \cite{FoxPach08}, and they proved that if a hereditary class of graphs $\mathcal{C}$  has the Strong \EH \ property, then it has the \EH \ property. Moreover, it was proved in \cite{Bousquet13} that, under the same assumption, there exists $c>0$ such that every graph $G\in \mathcal{C}$ admits a CS-separator of size $\mathcal{O}(|V(G)|^c)$. However, the class of trigraphs of $\mathcal{F}$ with no balanced skew-partition is not hereditary so we can not apply this here. The goal of Subsection \ref{subsec: SEH no BSP} is to prove the following theorem, showing that the Strong \EH \ property holds for the class of trigraphs under study:

\begin{theorem}
\label{th: SEH berge no BSP}
Let $T$ be a trigraph of $\mathcal{F}$ with no balanced skew-partition. If $|V(T)|\geq 3$, then $T$ admits a biclique or a complement biclique of size at least $|V(T)|/55$.
\end{theorem}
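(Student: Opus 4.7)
The plan is to proceed by induction on $|V(T)|$ using the decomposition theorem (Theorem~\ref{structure}). Since the conclusion is self-complementary (a biclique of $T$ is a complement biclique of $\overline{T}$) and since membership in $\mathcal{F}$ together with the absence of a balanced skew-partition is closed under complementation, we may assume that either $T$ is basic, or $T$ admits a 2-join (rather than a complement 2-join). The base case covers both small trigraphs (with $|V(T)| \leq 55$, for which size $\geq 1$ suffices and any edge or non-edge of $T$ does the job) and basic trigraphs.

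For the basic case, a direct argument is given for each type of basic trigraph. For bipartite trigraphs, $V(T)$ partitions into two strong stable sets, the larger of which has size at least $n/2$; splitting it into halves yields a complement biclique of size $\geq \lfloor n/4 \rfloor$. For doubled trigraphs with good partition $(X,Y)$, assume $|X| \geq n/2$: picking one vertex from each component of $T[X]$ gives a strong stable set of size $\geq |X|/2 \geq n/4$ (vertices from distinct components of the full realization are strongly antiadjacent), and splitting this set yields a complement biclique of size $\geq n/8$; the case $|Y| \geq n/2$ is symmetric via the complement. For line trigraphs, we exploit the underlying bipartite graph $G$ with $n$ edges: either $G$ has a vertex of degree linear in $n$ (giving a star, and hence a biclique of linear size by splitting it in half), or a balanced vertex partition of $V(G)$ provides two disjoint vertex subsets each spanning many edges, yielding a linear complement biclique. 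The complement-basic classes are handled by duality.

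For the 2-join case, let $(X_1, X_2)$ be a 2-join with split $(A_1, B_1, C_1, A_2, B_2, C_2)$, and write $a_i=|A_i|$, $b_i=|B_i|$, $c_i=|C_i|$, $n_i=|X_i|$. The strong complete/anticomplete relations between $X_1$ and $X_2$ provide several direct candidates: bicliques $(A_1,A_2)$ and $(B_1,B_2)$, and complement bicliques such as $(A_1\cup C_1, B_2\cup C_2)$, $(B_1\cup C_1, A_2\cup C_2)$, $(C_1, X_2)$ and $(X_1, C_2)$. A case analysis on the relative magnitudes of $a_i, b_i, c_i$ shows that when the six parts are sufficiently balanced, one of these direct candidates already has size $\geq n/55$. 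In the remaining \emph{unbalanced} regime, one side, say $X_1$, contains the bulk of the vertices; we then apply the induction hypothesis to the block $T_{X_1}$, which by Theorem~\ref{l:stayBerge} lies in $\mathcal{F}$ with no balanced skew-partition and has $n_1+2$ or $n_1+3$ vertices. The resulting biclique or complement biclique of $T_{X_1}$ lifts to $T$ by replacing any marker vertex $a_2$ or $b_2$ appearing in it by the corresponding sets $A_2$ or $B_2$ (which are strongly complete to the appropriate side by the definition of a 2-join); the extra marker $c_2$ of an even 2-join never occurs in a biclique, since its only neighbors in $T_{X_1}$ are its switchable partners $a_2, b_2$, to which it is not strongly adjacent. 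The lift can only enlarge the smaller side, and combined with the smallness of $n_2$ in this regime, yields a biclique or complement biclique of size $\geq n/55$ in $T$.

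The main obstacle is the interplay, in the 2-join step, between the direct candidates (bounded by the individual part sizes $a_i, b_i, c_i$) and the inductive bound (of the form $(n_1+3)/55$, which on its own gives only about $n/110$ when $n_1 \approx n/2$). Neither suffices alone, so the proof requires a careful case analysis that, for every relative configuration of the six part sizes, identifies a candidate of size $\geq n/55$. The specific value $55$ emerges as the worst-case ratio in this analysis and is compatible with the constants obtained in the basic cases.
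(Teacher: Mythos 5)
Your basic cases are fine and close to the paper's, but the 2-join step has a genuine gap that I do not believe your deferred case analysis can repair. The inductive bound on the block $T_{X_1}$ is $(n_1+2)/55$ or $(n_1+3)/55$, and by Lemma~\ref{2joinform} we always have $n_2\geq 4$, so this is always strictly less than $n/55$ \emph{regardless of how small $n_2$ is}; your claim that ``the smallness of $n_2$ in this regime'' closes the gap is therefore false on its face. The lift compensates only if a marker vertex happens to lie in the biclique returned by the induction hypothesis, and nothing forces that. Meanwhile, all six direct candidates $(A_1,A_2)$, $(B_1,B_2)$, $(A_1\cup C_1,B_2\cup C_2)$, etc.\ can be simultaneously tiny: in a deeply nested sequence of 2-joins in which every $A_i$, $B_i$, $C_i$ has bounded size at every level, each candidate has bounded size, each contraction loses a bounded number of vertices while the lift gains nothing, and after $\Theta(n)$ levels one reaches a basic trigraph of bounded size whose biclique is of bounded size, not linear. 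A one-level cardinality induction cannot see that a marker vertex ``represents'' many original vertices, and for an odd 2-join the set $C_2$ is not represented in $T_{X_1}$ at all, so a linear fraction of $V(T)$ can simply vanish from the recursion.

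This is precisely why the paper does not induct on cardinality. It proves a weighted statement (Theorem~\ref{th: SEH trigraph sans BSP}): it equips $T_0$ with a balanced weight, repeatedly contracts 2-joins while transferring the cardinalities of $A_2$, $B_2$, $C_2$ onto the marker vertices and, in the odd case, onto the switchable pair $a_2b_2$ as ``extra-anticomplete'' weight, and maintains a \emph{model} (a partition map back to $V(T_0)$ with adjacency and weight conditions). The biclique is extracted only at the end, either in the weighted basic trigraph (Lemma~\ref{lem: basic SEH}) and then pulled back through the teams, or earlier, in Lemma~\ref{lem: contraction preserve model}, whenever some team or the accumulated extra weight exceeds the $1/55$ (resp.\ $7/55$) threshold --- at which point strong (anti)completeness between heavy teams directly yields a linear biclique or complement biclique in $T_0$. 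To fix your proof you would need to reintroduce this bookkeeping in some form; without weights (or an equivalent multiplicity argument) the inductive step does not close for any constant.
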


 \subsection{In Berge trigraphs with no balanced skew-partition}
 \label{subsec: SEH no BSP}

\renewcommand{\wr}{w_r} 
\newcommand{\wc}{w_c} 
\newcommand{\wt}{w_t} 
\newcommand{\wac}{w_{\comp{c}}} 
\newcommand{\wrp}{w'_r} 
\newcommand{\wcp}{w'_c} 
\newcommand{\wtp}{w'_t} 
\newcommand{\wacp}{w'_{\comp{c}}} 
\newcommand{\br}{\beta_r} 
\newcommand{\bc}{\beta_c} 
\newcommand{\bt}{\beta_t} 
\newcommand{\bac}{\beta_{\comp{c}}} 
\newcommand{\brp}{\beta'_r} 
\newcommand{\bcp}{\beta'_c} 
\newcommand{\btp}{\beta'_t} 
\newcommand{\bacp}{\beta'_{\comp{c}}} 
\newcommand{\D}[1][1]{\frac{#1}{55} \cdot }

We need a weighted version in order for the proof to work. When one faces a 2-join with split $(A_1, B_1, C_1, A_2, B_2, C_2)$, the idea is to contract $A_i$, $B_i$, and $C_i$ for $i=1$ or $2$, with the help of the blocks of decomposition, until we reach a basic trigraph. 
The weight is meant for keeping track of the contracted vertices.
We then find a  biclique (or complement biclique) of large weight in the basic trigraph, because it is well-structured,  and we prove that we can backtrack and transform it into a biclique (or complement biclique) in the original trigraph.

However, this sketch of proof is too good to be true:
 in case of an odd 2-join or odd complement 2-join $(X_1, X_2)$  with split $(A_1, B_1, C_1, A_2, B_2, C_2)$, the block of decomposition $T_{X_1}$ does not contain any vertex that stands for $C_2$. Thus we have to put the weight of $C_2$ on the switchable pair $a_2b_2$, and remember whether $C_2$ was strongly anticomplete (in case of a 2-join) or strongly complete (in case of a complement 2-join) to $X_1$. This may propagate if we further contract $a_2b_2$.

 Let us now introduce some formal notation. A \defi{weighted trigraph} is a pair $(T,w)$ where $T$ is a trigraph and $w$ is a weight function which assigns:
\begin{itemize}
 \item to every vertex $v\in V(T)$,  a triple $w(v)=(\wr(v), \wc(v), \wac(v))$. 
\item to every switchable pair $uv\in \sigma(T)$, a pair $w(uv)=(\wc(uv), \wac(uv))$. 
 \end{itemize}
 In both cases, each coordinate has to be a non-negative integer. For $v\in V$, $\wr(v)$ is called the \defi{real weight} of $v$, and for $x\in V$ or $x\in \sigma(T)$,  $\wc(x)$ (resp. $\wac(x)$) is called the \defi{extra-complete} (resp. \defi{extra-anticomplete}) weight of $x$. 
 The extra-anticomplete (resp. extra-complete) weight will stand for vertices that have been deleted during the decomposition of an odd 2-join (resp. odd complement 2-join) - the $C_2$ in the discussion above - and thus which were strongly anticomplete (resp. strongly complete) to the other side of the 2-join. 
 
 Let us mention the some further notation: given a set of vertices $U\subseteq V(T)$, the weight of $U$ is $w(U) =(\wr(U), \wc(U), \wac(U))$ where $\wr(U)$ is the sum of $\wr(v)$ over all $v\in U$, and
 $$\wc(U)  = \sum_{\substack{u,v\in U \\ uv\in \sigma(T)}} \wc(uv)+\sum_{v\in U} \wc(v)  \quad \text{and} \quad
 \wac(U)  = \sum_{\substack{u,v\in U \\ uv\in \sigma(T)}} \wac(uv)+\sum_{v\in U} \wac(v) \ . $$
 
The \defi{total weight} of $U$ is $\wt(U)=\wr(U)+\wc(U)+\wac(U)$.  
 We abuse  notation and write $w(T)$ instead of $w(V(T))$, and in particular the total weight of $T$ will be denoted $\wt(T)$.
 Given two disjoint sets of vertices $A$ and $B$, the \defi{crossing weight} $w(A,B)$ is defined as the weight of the switchable pairs with one endpoint in $A$ and the other in $B$, namely $w(A,B)=(\wc(A,B), \wac(A,B))$ where $\wc(A,B)$ (resp. $\wac(A,B)$) is the sum of $\wc(ab)$ (resp. $\wac(ab)$) over all $a\in A$, $b\in B$ such that $ab\in \sigma(T)$.

  An unfriendly behavior for a weight function is to  concentrate all the weight at the same place, or to have a too heavy extra-complete and extra-anticomplete weight, this is why we introduce the following. A weight function $w$ is \defi{balanced} if the following conditions hold:
  
  \begin{itemize}
  \item For every $v\in V(T)$, $\wr(v)\leq \D \wt(T)$.
  \item For every $x\in V(T)$ or $x\in \sigma(T)$, $\max(\wc(x), \wac(x))\leq \D \wt(T)$.
  \item $\wc(T)+\wac(T)\leq \D[7] \wt(T)$.
  \end{itemize}

 A \defi{virgin weight} on $T$ is a weight  $w$ such that ${w}_c(T)={w}_{\comp{c}}(T)=0$. In such a case, we will drop the subscript and simply denote $w(v)$ for ${w}_r(v)$.
 The \emph{weight} of a biclique (or complement biclique) $(X,Y)$ is $\min(\wr(X), \wr(Y))$.
From now on, the goal is to find a biclique or a complement biclique of large weight, that is to say a constant fraction of $\wt(T)$.
  
  We need a few more definitions, concerning in particular 
how to adapt the blocks of decomposition to the weighted setting.  
 Let $(T,w)$ be a weighted trigraph
such that $T$ admits a $2$-join or complement 2-join $(X_1, X_2)$ with split $(A_1, B_1, C_1,
A_2, B_2, C_2)$. Without loss of generality, we can assume that $X_1$ is the ``heavier'' part, \emph{i.e.} $\wt(X_1)\geq \wt(T)/2$. The \defi{contraction of $(T,w)$} (with respect to this split) is the weighted trigraph
$(T',w')$, where $T'$ is the block of decomposition $T_{X_1}$ and where $w'$ is defined as follows:
\begin{itemize}
\item For every vertex $v\in X_1$, we define $w'(v)=w(v)$.
\item For marker vertices $a_2$ and $b_2$, we set $w'(a_2)=w(A_2)$ and $w'(b_2)=w(B_2)$.
\item In case of an even (complement or not) 2-join, we have $w'(c_2)=w(C_2)$, $w'(a_2c_2)=w(A_2,C_2)$ and $w'(b_2c_2)=w(B_2,C_2)$.
\item In case of an odd 2-join, the marker vertex $c_2$ does not exist so things become slightly more complicated: since we want to preserve the total weight, the switchable pair $a_2b_2$ has to take a lot of weight, including the real weight of $C_2$; $\wr(C_2)$ is thus given as an extra-anticomplete weight to $a_2b_2$ because $C_2$ is strongly anticomplete to every other vertex outside of $A_2\cup B_2$.
For this reason, we define $w'(a_2b_2) =(\wcp(a_2b_2),\wacp(a_2b_2))$  where 
\begin{align*}
\wcp(a_2b_2) & = \wc(C_2)  + \wc(A_2,B_2)+ \wc(A_2,C_2)+\wc(B_2,C_2)  \ \text{ and}\\
\wacp(a_2b_2) & =  \wac(C_2) +\wac(A_2,B_2)+\wac(A_2,C_2)+\wac(B_2,C_2) + \wr(C_2) \ .
\end{align*}
 
  \item In case of an odd complement 2-join,  we proceed symmetrically and give the real weight $\wr(C_2)$ as an extra-complete weight to $a_2b_2$. We thus define  $w'(a_2b_2) =(\wcp(a_2b_2),\wacp(a_2b_2))$ where:
  \begin{align*}
\wcp(a_2b_2) & =\wc(C_2)  + \wc(A_2,B_2)+ \wc(A_2,C_2)+\wc(B_2,C_2) + \wr(C_2) \text{ and}\\
\wacp(a_2b_2) & =  \wac(C_2) +\wac(A_2,B_2)+\wac(A_2,C_2)+\wac(B_2,C_2) \ .
\end{align*}

\end{itemize}
  
In order to recover information about the original trigraph after several steps of contraction, we need to introduce the notion of model. Intuitively, imagine that a weighted trigraph $(T,w)$ is obtained from an initial weighted trigraph $(T_0, w_0)$ by successive contractions, then we can partition the vertices of the original trigraph $T_0$ into subsets of vertices that have been contracted to the same vertex $v\in V(T)$ or the same switchable pair $uv\in \sigma(T)$. Moreover, the real weight of  a vertex $v\in V(T)$ is supposed to be the weight of the set of vertices that have been contracted to $v$. We also want  the strong adjacency and strong antiadjacency in $T$ to reflect the strong adjacency and strong antiadjacency in $T_0$. Finally, we want the extra-complete (resp. extra-anticomplete) weight in $T$ to stand for subsets of vertices of $T_0$ that have been deleted, but which were strongly complete (resp. strongly anticomplete) to (almost) all the rest of $T_0$.

Formally,
given a trigraph $T_0$ equipped with a virgin weight $w_0$, a weighted trigraph $(T, w)$ is a \defi{model} of $(T_0, w_0)$ if the following conditions are fulfilled (see Figure \ref{fig: partition mapping} for an example): 

\begin{figure}[t]
\center
\subfigure[A weighted trigraph $(T,w)$.]{\includegraphics[scale=0.9, page=3]{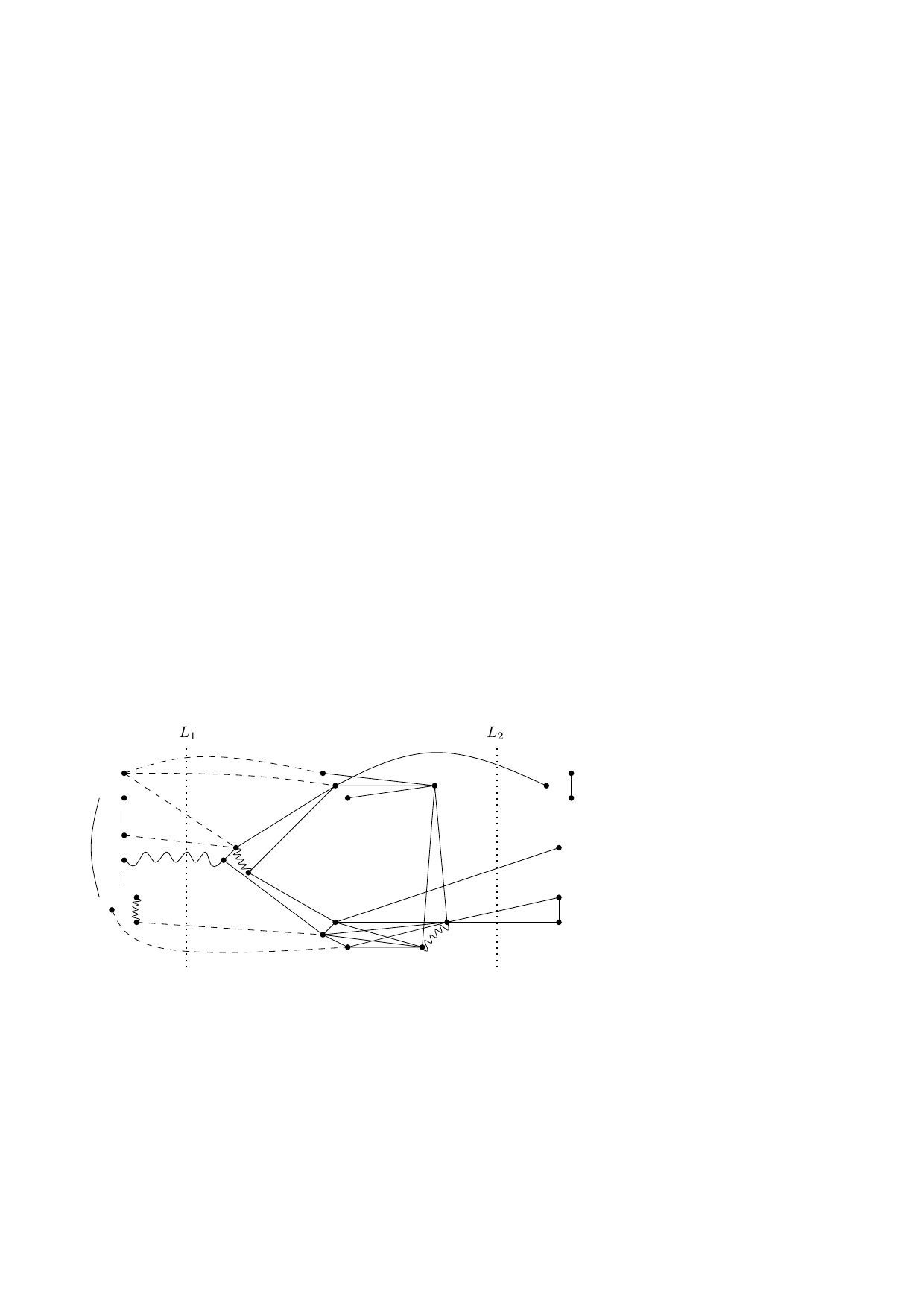}}\\
\subfigure[A weighted trigraph $(T_0, w_0)$ with virgin weight $w_0$ defined by $w_0(v)=(1,0,0)$ for every vertex $v\in V(T_0)$ and $w_0(uv)=(0,0)$ for every $uv\in \sigma(T_0)$.]{\includegraphics[scale=0.95, page=2]{fig/PartitionMapping}}
\caption{Illustration for the definition of a model: the weighted trigraph $(T,w)$ depicted in (a) is a model of the weighted trigraph $(T_0, w_0)$ depicted in (b), as witnessed by the partition map $\beta$ (empty teams are not depicted). Each vertex at the left-hand-side of dotted line $L_1$ is assumed to be strongly adjacent to every other vertex except if a non-edge is explicitly drawn (with a dashed edge for strong antiedge and with a wiggly edge for a switchable pair). Similarly, each vertex at the right-hand-side of dotted line $L_2$ is assumed to be strongly antiadjacent to every other vertex except if an edge is explicitly drawn.}
\label{fig: partition mapping}
\end{figure}
  
  \begin{itemize}
  \item \emph{The partition condition:} there exists a \defi{partition map} $\beta$ which,  to every vertex $v\in V(T)$ (resp. switchable pair $uv\in \sigma(T)$), assigns a triple $\beta(v)=(\br(v), \bc(v), \bac(v))$ (resp. a pair $\beta(uv)=(\bc(uv), \bac(uv))$)
  	 of (possibly empty) disjoint subsets of vertices of $T_0$.
  	 We define the \defi{team} of $v\in V(T)$ (resp. $uv\in \sigma(T)$) as $\bt(v)=\br(v)\cup \bc(v)\cup \bac(v)$ (resp. $\bt(uv)=\bc(uv)\cup \bac(uv)$). For convenience, $\br(v)$ is called the \defi{real team} of $v$ and for $x\in V(T)$ or $x\in \sigma(T)$,  $\bc(x)$ (resp. $\bac(x)$) is called the \defi{extra-complete team} (resp. \defi{extra-anticomplete team}) of $x$.
Moreover, any two teams must be disjoint and the union of all teams is $V(T_0)$. In other words, $V(T_0)$ is partitioned into teams, each team being itself divided into two or three disjoint parts.
Similarly to the weight function, for a subset of vertices $U\subseteq V(T)$ we define  $\beta(U) =(\br(U),\bc(U),\bac(U))$ where $\br(U)$ is the union of $\br(v)$ over all $v\in U$, and $\bc(U)$ (resp. $\bac(U)$) is the union of $\bc(x)$ (resp. $\bac(x)$) over all $x\in U$ and all $x=uv\in \sigma(T)$, where $u,v\in U$. Moreover, for two disjoint subsets of vertices $A,B\subseteq V(T)$,
$\beta(A,B) = (\bc(A,B), \bac(A,B)) $ where $\bc(A,B)$ (resp. $\bac(A,B)$) is the union of $\bc(ab)$ (resp. $\bac(ab)$) over all $a\in A$, $b\in B$ such that $ab\in \sigma(T)$.

  \item \emph{The weight condition:} 
 the total weight is preserved, \emph{i.e.} $\wt(T)=w_0(T_0)$ and for every  $v\in V(T)$, its real (resp. extra-complete, extra-anticomplete) weight  stands for the original weight of its real (resp. extra-complete, extra-anticomplete) team, and similarly for the switchable pairs. Formally,
for $v\in V(T)$, $\wr(v)=w_0(\br(v))$ and for $x\in V(T)$  or $ x\in \sigma(T)$, 
$\wc(x)= w_0(\bc(x))$ and $\wac(x)= w_0(\bac(x))$.
 
  \item \emph{The strong adjacency condition:} if $u,v\in V(T)$ are strongly adjacent (resp. strongly antiadjacent) in $T$, then $\br(u)$ and $\br(v)$ are strongly complete (resp. strongly anticomplete) in $T_0$. 
  \item \emph{The extra-condition:} Informally, the extra-complete team of $v$ (resp. $uv$) is strongly complete to every other extra-complete team, and is also strongly complete to every real team, except maybe the real team of $v$ (resp. of $u$ and $v$). The symmetric holds for the extra-anticomplete teams.
  Formally:  for every vertex $v$, $\bc(v)$ is strongly complete to every $\bc(x)$  for $x\in V(T)$, $x\neq v $ or $x\in \sigma(T)$, and to every $\br(y)$ for $y\neq v$.  For every switchable pair $uv\in \sigma(T)$, $\bc(uv)$ is strongly complete to every $\bc(x)$  for $x\in V(T)$ or $x\in \sigma(T)$, $x\neq uv$, and to every $\br(y)$ for $y\neq u, v$. For every vertex $v$, $\bac(v)$ is strongly anticomplete to every $\bac(x)$  for $x\in V(T)$, $x\neq v$ or $x\in \sigma(T)$, and to every $\br(y)$ for $y\neq v$.  For every switchable pair $uv\in \sigma(T)$, $\bac(uv)$ is strongly anticomplete to every $\bac(x)$  for $x\in V(T)$ or $x\in \sigma(T)$, $x\neq uv$, and to every $\br(y)$ for $y\neq u, v$.
  \end{itemize}

We are now ready for the proof, let us first provide a sketch: start from a trigraph $T_0$ with a balanced virgin weight $w_0$, in which we want to find a biclique or complement biclique of large weight. Iteratively contract it, and prove that at each step, the contraction is still a model of $(T_0,w_0)$ with a balanced weight. Stop either when the teams provide a  biclique or complement biclique of large weight in $(T_0,w_0)$, or when we reach a basic trigraph $(T,w)$. In the latter case, delete the extra-complete and extra-anticomplete weight, find a biclique or complement biclique of large weight in $(T,w)$, and convert it into a  biclique or complement biclique  of large weight in $T_0$.

\begin{lemma}
\label{lem: contraction preserve model}
Let $(T_0,w_0)$ be a weighted trigraph such that  $T_0\in \mathcal{F}$ has no balanced skew-partition and $w_0$ is a balanced virgin weight. Let $(T,w)$ be a model of $(T_0,w_0)$ such that $w$ is balanced and $T$ is a non-basic trigraph of $\mathcal{F}$ with no balanced skew-partition. Then at least one of the following holds:
\begin{itemize}
\item There exists a biclique or complement biclique in $(T_0,w_0)$ of weight at least $ \D  w_0(T_0)$.
\item Any contraction $(T',w')$ of $(T,w)$ is a model of $(T_0,w_0)$ and $w'$ is balanced. Moreover, $T'$ is a trigraph of $\mathcal{F}$ with no balanced skew-partition.
\end{itemize}
\end{lemma}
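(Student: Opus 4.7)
The plan is as follows. Since $T$ is non-basic in $\mathcal{F}$ with no balanced skew-partition, Theorem~\ref{structure} supplies a $2$-join or complement $2$-join of $T$. The whole setup (weight, balance, biclique vs.~complement biclique, $2$-join vs.~complement $2$-join) is self-complementary, so I may assume that $T$ admits a $2$-join $(X_1,X_2)$ with split $(A_1,B_1,C_1,A_2,B_2,C_2)$, $X_1$ being the heavier side, and let $(T',w')$ be the resulting contraction. Write $W=w_0(T_0)=\wt(T)$. A preliminary inequality used repeatedly is
\[
\wr(X_1)\;\ge\;\wt(X_1)-\wc(T)-\wac(T)\;\ge\;\frac{W}{2}-\frac{7W}{55}\;=\;\frac{41W}{110},
\]
so in particular $\wr(X_1)\ge W/55$; the real team $\br(X_1)$ is thus a heavy side available for every biclique we will build in $T_0$.

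I first exhibit a partition map $\beta'$ witnessing that $(T',w')$ is a model of $(T_0,w_0)$: set $\beta'(v)=\beta(v)$ and $\beta'(uv)=\beta(uv)$ for vertices and switchable pairs lying in $X_1$, and aggregate for the markers---namely $\beta'_r(a_2)=\br(A_2)$, $\beta'_c(a_2)=\bc(A_2)$, $\beta'_{\comp c}(a_2)=\bac(A_2)$, and symmetrically for $b_2$ and (in the even case) $c_2$. For the marker switchable pairs $a_2c_2,b_2c_2$ (even case) and $a_2b_2$ (odd case), collect the $\beta$-teams of the corresponding crossing switchable pairs of $X_2$, and in the odd case additionally put $\br(C_2)$ into $\beta'_{\comp c}(a_2b_2)$. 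The four model conditions (partition, weight, strong adjacency, extra-condition) reduce to direct bookkeeping, using only (a) the $2$-join structure---strong edges between $X_1$ and $X_2$ are exactly $A_1A_2\cup B_1B_2$, no switchable pair crosses the cut, and $C_2$ is strongly anticomplete to $X_1$---and (b) the extra-condition already satisfied by $\beta$ on $(T,w)$. Theorem~\ref{l:stayBerge} then ensures that $T'\in\mathcal{F}$ still has no balanced skew-partition.

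It remains to show that either $w'$ is balanced, or $(T_0,w_0)$ admits a biclique or complement biclique of weight $\ge W/55$. Suppose condition~(1) of balance fails, say $\wr(A_2)>W/55$. Then $\br(A_2)$ is strongly complete to $\br(A_1)$ and strongly anticomplete to $\br(B_1\cup C_1)$ in $T_0$ (strong adjacency condition), and since $\wr(A_1)+\wr(B_1\cup C_1)=\wr(X_1)\ge 41W/110$, one of these two sets has $w_0$-weight $\ge 41W/220\ge W/55$, producing the required biclique or complement biclique; the case $c_2$ in an even $2$-join is immediate via the complement biclique $(\br(C_2),\br(X_1))$. Suppose condition~(2) fails at some marker entity $x$: then the union of $\bc$-teams (resp.~$\bac$-teams) collected inside $\beta'_c(x)$ (resp.~$\beta'_{\comp c}(x)$) has $w_0$-weight $>W/55$ and, by the extra-condition, is strongly complete (resp.~anticomplete) to $\br(X_1)$, so pairing with $\br(X_1)$ finishes; the contribution $\br(C_2)\subseteq\beta'_{\comp c}(a_2b_2)$ in the odd case is handled identically because $C_2$ is strongly anticomplete to $X_1$.

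The delicate case is condition~(3), that is, $\wc(T')+\wac(T')\le 7W/55$. In the even case this total equals $\wc(T)+\wac(T)$ and the inequality holds automatically; in the odd case it equals $\wc(T)+\wac(T)+\wr(C_2)$, so a failure gives $\wc(T)+\wac(T)+\wr(C_2)>7W/55$. If $\wr(C_2)\ge W/55$, the complement biclique $(\br(C_2),\br(X_1))$ wins; otherwise $\wc(T)+\wac(T)>6W/55$, so by symmetry one may assume $\wc(T)>3W/55$. The family $\mathcal{S}=\{\bc(v):v\in V(T)\}\cup\{\bc(e):e\in\sigma(T)\}$ consists of pairwise disjoint subsets of $V(T_0)$, any two distinct members of which are strongly complete in $T_0$ (extra-condition), each of $w_0$-weight at most $W/55$ (per-entity balance), and of total $w_0$-weight $>3W/55$. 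A greedy balancing (at each step, add the heaviest remaining team to the lighter group) splits $\mathcal{S}$ into two groups whose weights differ by at most $W/55$, hence each of weight $>W/55$; their unions form a biclique of weight $>W/55$ in $T_0$. This greedy step is the main obstacle: dispersed extra-weight must be turned into a single large biclique by exploiting simultaneously the pairwise strong completeness of distinct extra-teams and their individual boundedness by $W/55$.
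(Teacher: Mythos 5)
Your proposal is correct and follows essentially the same route as the paper: the natural aggregated partition map for the markers, the reduction of the balance conditions to finding a heavy real team on the $X_1$ side to pair against, and the final splitting of the pairwise strongly complete extra-complete teams (each bounded by $\frac{1}{55}\wt(T)$) into two groups of weight at least $\frac{1}{55}\wt(T)$ when condition (3) fails in the odd case. The only cosmetic deviations are that you pair overweight marker teams against $\br(A_1)$ versus $\br(B_1\cup C_1)$ where the paper uses $\max(\wr(A_1),\wr(B_1),\wr(C_1))\ge\frac{1}{3}(\frac{1}{2}-\frac{7}{55})\wt(T)$, and that the paper verifies the extra-condition for $\bacp(a_2b_2)$ in the odd case in more detail.
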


\begin{figure}[t]
\center
\subfigure[Case 1: even 2-join. ]{\label{fig: contraction preserve model even}\includegraphics[scale=0.9, page=1]{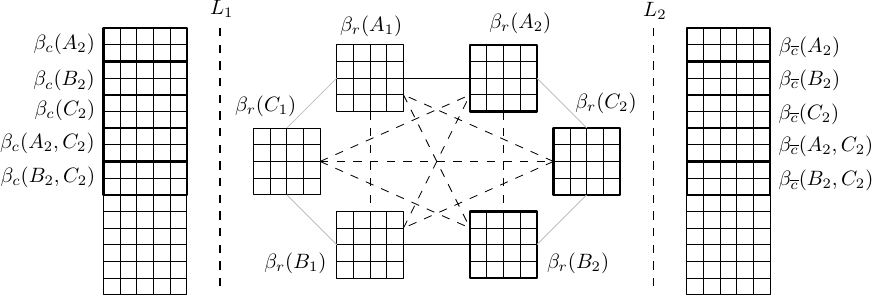}}
\subfigure[Case 2: odd 2-join]{\label{fig: contraction preserve model odd}\includegraphics[scale=0.9, page=2]{fig/Model2Join}}
\caption{Illustration for the proof of Lemma \ref{lem: contraction preserve model}. 
The little boxes show how $\beta$ partitions $V(T_0)$, certifying that $(T,w)$ is a model of $(T_0, w_0)$. Boxes with bold font show groups of teams that are merged together by $\beta'$, certifying that the contraction $(T',w')$ of $(T,w)$ is still a model of $(T_0,w_0)$. For case 2, red boxes highlight the most tricky part of the proof concerning the extra-complete and extra-anticomplete teams of the new switchable pair $a_2b_2$.
The extra-complete teams are depicted on the left of dotted line $L_1$, extra-anticomplete teams are depicted on the right of dotted line $L_2$, and real teams are in between. For a better drawing, adjacencies assumed for the extra-condition are implied but not depicted.
Grey lines indicate that there may or may not be some edges. Dashed lines link strongly anticomplete teams, and straight lines link strongly complete teams.}
\label{fig: contraction preserve model}
\end{figure}

 \begin{proof}
First of all, let us check that the second item is well-defined: by assumption, $T$ is a trigraph of $\mathcal{F}$ with no balanced skew-partition and is not basic, thus $T$ has a 2-join or a complement 2-join $(X_1,X_2)$ with split $(A_1, B_1, C_1,
A_2, B_2, C_2)$. Consequently, the contraction of $(T,w)$ with respect to this split is well-defined and $T'$ is the block of decomposition $T_{X_1}$ or $T_{X_2}$. By Theorem \ref{l:stayBerge}, $T'$ is a trigraph of $\mathcal{F}$ with no balanced skew-partition.
We assume that $\wt(X_1)\geq \wt(T)/2$ (consequently $T'=T_{X_1}$) and that $(X_1,X_2)$ is a 2-join (otherwise we exchange $T$ and $\comp{T}$).

\textbf{Case 1: $(X_1, X_2)$ is an even 2-join.}

We first prove that $(T',w')$ is a model of $(T_0,w_0)$. Since $(T,w)$ is a model of $(T_0,w_0)$, there exists a partition map $\beta$ that certifies it. Let us build $\beta'$ a partition map for $(T',w')$ in the following natural way (see Figure \ref{fig: contraction preserve model even}):
\begin{itemize}
\item For every $v\in X_1$, $\beta'(v)=\beta(v)$ and for every $u,v\in X_1$, $uv\in \sigma(T')$, $\beta'(uv)=\beta(uv)$.
\item $\beta'(a_2)=\beta(A_2)$, $\beta'(b_2)=\beta(B_2)$, $\beta'(c_2)=\beta(C_2)$.
\item $\beta'(a_2c_2)=\beta(A_2, C_2)$ and $\beta'(b_2c_2)=\beta(B_2,C_2)$.
\end{itemize}

It is quite easy to check  that the weight condition, the strong adjacency condition and the extra-condition hold. Let us explain here only some parts in detail: concerning the strong adjacency condition, observe that the strong adjacency or strong antiadjacency between $a_2,b_2$ and $c_2$ on one hand, and any $v_1\in X_1$ on the other hand, mimic the behavior of the 2-join, by definition of the block of decomposition. Moreover, since the 2-join is even, there is no edge between $A_2$ and $B_2$ in $T$, which explains the strong antiedge between $a_2$ and $b_2$.
As for the extra-condition, we can observe that  the new extra-complete (resp. extra-anticomplete) teams are obtained by merging former extra-complete (resp. extra-anticomplete) teams. Let us study an example: let $v\in \bcp(a_2c_2)$, then by definition there exists $ac\in \sigma(T)$ such that $a\in A_2, c\in C_2$ and $v\in \bc(ac)$. Since $(T,w)$ is a model of $(T_0,w_0)$, $v$ is strongly complete to every other extra-complete teams of $\beta$ except the one it belongs to (and thus to every extra-complete teams of $\beta'$ except $\bcp(a_2c_2)$), and $v$ is also strongly complete to every real team except maybe $\br(a)$ and $\br(c)$. But $a\in A_2$ and $c\in C_2$ so $\br(a)\subseteq \brp(a_2)$ and $\br(c)\subseteq \brp(c_2)$. Consequently, $v$ is strongly complete to every real teams, except maybe $\brp(a_2)$ and $\brp(c_2)$: this is what we require for a member of $\bcp(a_2c_2)$.

We now have to see if $w'$ is balanced. First of all, 
$$\wcp(T')+\wacp(T')=\wc(T)+\wac(T) \leq \D[7] \wt(T)=\D[7]\wtp(T')\ .$$
Moreover, observe that $\wt(X_1)=\wr(A_1)+\wr(B_1)+\wr(C_1)+\wc(X_1)+\wac(X_1)\ .$ 

\noindent But $\wt(X_1)\geq \wt(T)/2$ and $\wc(X_1)+\wac(X_1)\leq  \D[7] \wt(T)$ so 
$$\max(\wr(A_1),\wr(B_1),\wr(C_1))\geq \frac{1}{3}\left(\frac{1}{2} -\frac{7}{55} \right) \wt(T)\geq \D \wt(T)\ .$$ Since the other cases are handled similarly, we assume that $\wr(A_1)\geq \D \wt(T)$.
Each of $\brp(a_2)$, $\brp(b_2)$ and $\brp(c_2)$ is either strongly complete or strongly anticomplete to $\br(A_1)$ whose weight is $w_0(\br(A_1))=\wr(A_1)\geq \D \wt(T)$, so if $\max(\wrp(a_2), \wrp(b_2), \wrp(c_2))\geq \D \wt(T)$, we find a biclique or a complement biclique of large enough weight in $T_0$, and the first item holds.
Otherwise, observe that every extra-complete team among $\bcp(a_2)$, $\bcp(b_2)$, $\bcp(c_2)$, $\bcp(a_2c_2)$, $\bcp(b_2c_2)$ is strongly complete to all the real teams $\brp(x)$ for  $x\in X_1$, thus if one of them has weight $\geq \D \wt(T)$ in $(T_0,w_0)$, we find a biclique in $T_0$ of large enough weight, and the first item holds. Thus $\wcp(x)\leq \D \wt(T)$ for every $x\in \set{a_2, b_2, c_2}$ and $x\in \set{a_2c_2, b_2c_2}$.
 By similar arguments,  we also have 
 $\wacp(a_2)$, $\wacp(b_2)$, $\wacp(c_2)$, $\wacp(a_2c_2)$, $\wacp(b_2c_2) \leq \D \wt(T)$
 otherwise we find a large complement biclique in $T_0$ and conclude with the first item. 
 Hence $w'$ is balanced and we conclude with the second item.

\textbf{Case 2: $(X_1, X_2)$ is an odd 2-join.}

As in the previous case, we begin with proving that $(T',w')$ is a model of $(T_0,w_0)$. Since $(T,w)$ is a model of $(T_0,w_0)$, there exists a partition map $\beta$ that certifies it. Let us build $\beta'$ a partition map for $(T',w')$ in the following  way (see Figure \ref{fig: contraction preserve model odd}):
\begin{itemize}
\item For every $v\in X_1$, $\beta'(v)=\beta(v)$ and for every $u,v\in X_1$, $uv\in \sigma(T')$, $\beta'(uv)=\beta(uv)$.
\item $\beta'(a_2)=\beta(A_2)$, $\beta'(b_2)=\beta(B_2)$.
\item For the switchable pair $a_2b_2$, we follow the same approach as before for the weight function because we do not want to loose track from the teams of type $\bt(c)$ for $c\in C_2$ or $\bt(vc)$ for $c\in C_2$, $vc\in \sigma(T)$.   Formally, we define $\beta'(a_2b_2) =(\bcp(a_2b_2),\bacp(a_2b_2))$ where:
\begin{align*}
\bcp(a_2b_2)&= \bc(A_2,B_2)\cup \bc(A_2,C_2)\cup \bc(B_2,C_2)\cup \bc(C_2)\\
\bacp(a_2b_2)& = \bac(A_2,B_2)\cup \bac(A_2,C_2)\cup \bac(B_2,C_2)\cup \bac(C_2) \cup \br(C_2) \ .
\end{align*}
\end{itemize}

Once again, we easily see that the weight condition and the strong adjacency condition are ensured with the same arguments as in Case 1. As for the extra-condition, the only interesting case is concerned with $\bacp(a_2b_2)$: let $v\in \bacp(a_2b_2)\subseteq V(T_0)$. The goal is to prove that $v$ is strongly anticomplete to every other extra-anticomplete team of $\beta'$, and to every real team of $\beta'$ except maybe the real team of $a_2$ and the real team of $b_2$. 
By definition, $v$ must belong to one of the five subsets constituting $\bacp(a_2b_2)$. 
If $v\in \bac(A_2,C_2)$, then there exists $ac\in \sigma(T)$ such that $a\in A_2$, $c\in C_2$ and $v\in \bac(ac)$. Since $(T,w)$ is a model of $(T_0,w_0)$, $v$ is strongly anticomplete to every other extra-anticomplete team of $\beta$, thus of $\beta'$, and $v$ is also strongly anticomplete to every real teams except maybe $\br(a)$ and $\br(c)$. But $\br(a)\subseteq\brp(a_2)$ and $\br(c)\subseteq \br(C_2)\subseteq \bacp(a_2b_2)$ so $v$ is strongly anticomplete to every real team except maybe $\brp(a_2)$.
The cases $v\in \bac(B_2,C_2)$,  $v\in \bac(A_2,B_2)$ and $v\in \bac(C_2)$ are handled with similar arguments.
Finally if $v\in \br(C_2)$, then there exists $c\in C_2$ such that $v\in \br(c)$. By definition of a 2-join, $c$ is strongly anticomplete to $X_1$ in $T$, so since $(T,w)$ is a model of $(T_0,w_0)$,  $v$ is strongly anticomplete to every real team $\br(x)$ with $x\in X_1$, \emph{i.e.} to every real team of $\beta'$ except maybe $\brp(a_2)$ and $\brp(b_2)$.
Moreover, by the extra-condition on $(T,w)$, $\br(C_2)$ is strongly anticomplete to every extra-anticomplete team of $\beta$ except those included in $\bac(C_2)$, $\bac(A_2,C_2)$ or $\bac(B_2,C_2)$. But those three are all included in $\bacp(a_2b_2)$, so $v$ is strongly anticomplete to every extra-anticomplete team
 different from $\bacp(a_2b_2)$.

 Let us now check that $w'$ is balanced. With the same argument as in Case 1, we obtain that   $\max(\wr(A_1),\wr(B_1),\wr(C_1))\geq \D \wt(T)$ so $\max(\wr(a_2), \wr(b_2), \wr(C_2))\leq \D \wt(T)$, and similarly $\max(\wcp(x), \wacp(x))\leq \D \wt(T)$ for $x=a_2, b_2$ or $a_2b_2$.

 Finally, 
 we want to prove that $\wcp(T')+\wacp(T')\leq \D[7] \wtp(T')$.
 Assume not, then since
 $$\wcp(T')+\wacp(T')=\wc(T)+\wac(T)+\wr(C_2)  \quad \text{ and } \quad \wr(C_2)\leq \D \wt(T) \ ,$$
 we have $\wc(T)+\wac(T)\geq \D[6] \wt(T)$.
 Thus one of $\wc(T)$ or $\wac(T)$, say $\wc(T)$, is at least $\D[3] \wt(T)$. Since every extra-complete team $\bc(x)$ for $x\in V(T)$ or $x\in \sigma(T)$ has weight at most $ \D\wt(T)$, we can split $\bc(T)$ into two parts $(X,Y)$ such that no extra-complete team intersects both $X$ and $Y$, and such that both $w_0(X)$ and $ w_0(Y)$ are at least $ \D \wt(T)$. Since each extra-complete teams is strongly complete to every other extra-complete team, $(X,Y)$ is a biclique, and its weight is at least $\D w_0(T_0)$: the first item of the lemma holds.
 \end{proof}

Before going to the case of basic trigraphs, we need a technical lemma that will be useful to handle the line trigraph case. A \emph{multigraph} $G=(V,E)$ is a generalization of a graph where
 $E$ is a multiset of pairs of distinct vertices: there can be several edges between two distinct vertices. The number of edges is the cardinality of the multiset $E$.
  An edge $uv$ has two \defi{endpoints} $u$ and $v$. The \defi{degree} of  $v\in V(G)$ is $d(v)=|\{e\in E|v \text{ is an endpoint of } e\}|$.

\begin{lemma}
\label{partition edges bipartite}
Let $G$ be a bipartite multigraph $(A,B)$ with $m$ edges and with maximum degree less than $m/3$. Then there exist $E_1, E_2\subseteq E$ such that $|E_1|, |E_2|\geq m/48$ and if $e_1\in E_1, e_2 \in E_2$ then $e_1$ and $e_2$ do not have a common extremity.
\end{lemma}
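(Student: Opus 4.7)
The plan is to partition $A$ and $B$ independently into balanced halves, and then extract the two edge sets from a convenient ``diagonal'' of the resulting $2\times 2$ matrix of edge counts.

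First I will partition $A=A_1\cup A_2$ by the greedy longest-processing-time scheme: list the vertices of $A$ in non-increasing order of degree, and assign each in turn to the side of smaller current load, where the load of $A_i$ is $e(A_i,B)$. A straightforward induction shows that after every step the difference between the two loads is bounded by the maximum degree seen so far, hence by $d_{\max}<m/3$. Combined with $e(A_1,B)+e(A_2,B)=m$, this yields $e(A_i,B)>m/3$ for $i=1,2$. Running the same procedure on $B$ produces $B_1,B_2$ with $e(A,B_j)>m/3$ for $j=1,2$.

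Let $e_{ij}=|E(A_i,B_j)|$ for $i,j\in\{1,2\}$. The previous step guarantees that each of the four marginal sums satisfies $e_{i1}+e_{i2}>m/3$ and $e_{1j}+e_{2j}>m/3$. I will then claim that at least one of the two diagonals, either $\{e_{11},e_{22}\}$ or $\{e_{12},e_{21}\}$, consists of two entries both at least $m/48$. Taking $E_1$ and $E_2$ to be the two edge sets corresponding to such a good diagonal settles the lemma: an edge of $E(A_i,B_j)$ and an edge of $E(A_{3-i},B_{3-j})$ cannot share an endpoint, since the $A_i$'s and the $B_j$'s are pairwise disjoint.

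The claim itself is proved by contradiction. If both diagonals failed, one could pick a ``small'' entry (value strictly less than $m/48$) from each diagonal; the two small entries would inevitably lie in a common row or a common column (a quick case check over the four combinations confirms this), and the corresponding marginal sum would then be strictly below $2\cdot m/48=m/24<m/3$, contradicting the lower bounds established above. The only delicate point in the proof is the balancing step; once the marginal lower bounds are secured, the $2\times 2$ case analysis is routine.
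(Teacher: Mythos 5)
Your proof is correct, and it takes a genuinely different route from the paper's. The paper argues probabilistically: it takes a uniformly random bipartition $(U,U')$ of the whole vertex set $V(G)$, observes that the expected number of separated pairs of disjoint edges is $\gamma/8$ where $\gamma$ is the total number of disjoint edge pairs, and uses the degree bound to show $\gamma\geq m^2/6$; some partition then has $|E_1|\cdot|E_2|\geq m^2/48$, which forces both factors to be at least $m/48$ since each is at most $m$. You instead proceed deterministically: the greedy balancing of $A$ and of $B$ (whose imbalance invariant $|e(A_1,B)-e(A_2,B)|\leq d_{\max}$ is indeed immediate by induction, since adding $d(v)$ to the smaller load changes the gap to $|D-d(v)|\leq\max(D,d(v))$) gives all four marginals of the $2\times 2$ block matrix a value exceeding $m/3$, and the diagonal pigeonhole is then routine --- each row and each column contains exactly one entry of each diagonal, so two failing diagonals would force some marginal below $m/24$. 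Both arguments use the hypothesis $d_{\max}<m/3$ in an essential way, the paper to lower-bound $\gamma$ and you to control the greedy imbalance. Your approach is constructive (no averaging over exponentially many partitions) and in fact has considerable slack: since a small entry on one diagonal forces both entries of the other diagonal above $m/3-t$, your method would yield the stronger threshold $m/6$ with the same effort, whereas the paper's first-moment computation is shorter to write but naturally lands at $m/48$.
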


\begin{proof}
The \emph{score} $S(U,U')$ of a bipartition $(U,U')$ of $V(G)$ is defined as the number of unordered pairs of edges $\{ uv, u'v'\}\subseteq E$ such that $u,v\in U$ and $u',v'\in U'$ (\emph{i.e.} $uv$ is on one side of the partition and $u'v'$ is on the other side), that is to say $S(U,U')=|E\cap U^2| \cdot |E\cap U'^2|$. Let $\gamma$ be the number of unordered pairs of edges $e_1, e_2\in E$ such that $e_1$ and $e_2$ have no common endpoint. The expectation of $S(U,U')$ when $(U,U')$ is a random uniform partition of $V(G)$ is $\gamma/8$, so there exists a partition $(U,U')$ such that $S(U,U')\geq \gamma/8$. Assume now that $\gamma\geq m^2/6$ and let 
$E_1=\set{uv\in E \ | \ u,v\in U }$ and $E_2=\set{u'v'\in E \ | \ u',v'\in U' }$.
 Then $|E_1|, |E_2|\geq m/48$, otherwise
$$S(U,U')=|E_1|\cdot|E_2|<\frac{m^2}{48}=\frac{\gamma}{8} \ , $$ a contradiction. So $E_1$ and $E_2$ satisfy the requirements of the lemma. We finally have to prove that $\gamma\geq m^2/6$. For a given $e_1=uv\in E$, the number of edges different from $e_1$ which have a common endpoint with $e_1$ it at most $(d(u)+d(v)-2)\leq 2m/3 -2$. Consequently,
$$\gamma \geq \frac{1}{2}\sum_{e_1\in E} \left\vert \{e_2 \in E \  | \ e_1\cap e_2=\emptyset \}\right\vert \geq
\frac{1}{2} \cdot m \cdot \left( ( m-1 )-\left(\frac{2m}{3} -2\right) \right) \geq \frac{m^2}{6} \ .$$
\end{proof}

We can now give the proof for the case of basic trigraphs.

\begin{lemma}
\label{lem: basic SEH}
Let $(T,w)$ be a weighted trigraph such that $T$ is a basic trigraph and $w$ is balanced.
Then $T$ admits a biclique or a complement biclique $(X,Y)$ of weight $\min(\wr(X),\wr(Y))\geq \D \wt(T)$.
\end{lemma}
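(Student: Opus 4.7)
I will prove Lemma~\ref{lem: basic SEH} by case analysis on the type of $T$. Since the conclusion ``biclique or complement biclique'' is symmetric under complementation, it suffices to treat bipartite trigraphs, line trigraphs, and doubled trigraphs. The balanced hypothesis yields $\wr(T) = \wt(T) - \wc(T) - \wac(T) \geq \D[48]\wt(T)$, so the real weight dominates and the extra-complete / extra-anticomplete weights can essentially be ignored throughout.

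For the \emph{bipartite} case, write $V(T) = A \cup B$ with $A, B$ strong stable sets. By averaging, one of them, say $A$, satisfies $\wr(A) \geq \D[24]\wt(T)$. Since each vertex contributes at most $\D\wt(T)$ of real weight, a greedy balanced bipartition $(A_1, A_2)$ of $A$ fulfils $\min(\wr(A_1), \wr(A_2)) \geq \tfrac{1}{2}(\D[24] - \D)\wt(T) \geq \D\wt(T)$, and $(A_1, A_2)$ is a complement biclique because $A$ is a strong stable set.

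For the \emph{doubled} case, let $(X, Y)$ be a good partition; by averaging, $\wr(X) \geq \D[24]\wt(T)$ or $\wr(Y) \geq \D[24]\wt(T)$. Assume the first by symmetry. Each component of $T[X]$ has at most two vertices, hence real weight at most $\D[2]\wt(T)$, and distinct components are strongly anticomplete by the definition of connectedness via the full realization. Distributing the components greedily into two bins yields a bipartition $(X_1, X_2)$ of $X$ with $\min(\wr(X_1), \wr(X_2)) \geq \tfrac{1}{2}(\D[24] - \D[2])\wt(T) = \D[11]\wt(T) \geq \D\wt(T)$, and $(X_1, X_2)$ is a complement biclique. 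The case $\wr(Y) \geq \D[24]\wt(T)$ is symmetric, using anticomponents and yielding a biclique.

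The main obstacle is the \emph{line trigraph} case. Let $G$ be the bipartite graph whose line graph is the full realization of $T$; view each vertex of $T$ as an edge of $G$ and set $m = \wr(T) \geq \D[48]\wt(T)$. If some vertex $u$ of $G$ has weighted degree $d_w(u) := \sum_{e \ni u} \wr(e) \geq m/3$, then the star at $u$ is a clique in $T$ of real weight $\geq \D[16]\wt(T)$ and hence of size at least $16$; because every clique of size $\geq 3$ in a line trigraph is strong, a greedy balanced bipartition of this star is a biclique with both sides of real weight $\geq \tfrac{1}{2}(\D[16] - \D)\wt(T) \geq \D\wt(T)$. Otherwise, I would reprove Lemma~\ref{partition edges bipartite} in weighted form: for a uniform random bipartition $(U, U')$ of $V(G)$, consider the edge sets $E_1, E_2 \subseteq E$ with both endpoints in $U$ and in $U'$ respectively; the expected product satisfies $\mathrm{E}[\wr(E_1)\wr(E_2)] = \gamma_w/8$, where $\gamma_w = \sum_{\{e_1,e_2\}\text{ endpoint-disjoint}} \wr(e_1)\wr(e_2)$, and the identity $2\gamma_w = m^2 + \sum_e \wr(e)^2 - \sum_v d_w(v)^2$ combined with $\sum_v d_w(v)^2 \leq (m/3)\cdot 2m$ gives $\gamma_w \geq m^2/6$. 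Hence some bipartition yields $\wr(E_1)\wr(E_2) \geq m^2/48$, and since $\wr(E_1)+\wr(E_2) \leq m$ this forces $\wr(E_1), \wr(E_2) \geq m/48 \geq \D\wt(T)$. Endpoint-disjoint edges are strongly antiadjacent in a line trigraph, so $(E_1, E_2)$ is a complement biclique. The delicate point is verifying that the constant $1/48$ survives the passage to weighted edge counts, which the second-moment identity above ensures.
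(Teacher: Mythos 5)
Your proof is correct and follows essentially the same strategy as the paper's: discard the extra-complete and extra-anticomplete weight (which the balanced condition bounds by $\frac{7}{55}\wt(T)$), then split a heavy strong stable set, strong clique, or component/anticomponent class in the bipartite and doubled cases, and use the random-bipartition second-moment bound $\gamma \geq m^2/6$ for line trigraphs. The only cosmetic difference is that the paper handles the line-trigraph weights by duplicating edges into a multigraph and invoking its unweighted Lemma~\ref{partition edges bipartite}, whereas you re-derive the same estimate directly in weighted form.
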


\begin{proof}[Proof of Lemma \ref{lem: basic SEH}]
Let us transform the weight $w$ into a virgin weight $w_0$ defined as $w_0(v)=(\wr(v),0,0)$ for every vertex $v$ and $w_0(uv)=(0,0)$ for every $uv\in \sigma(T)$. In other words, all the non-real weight is deleted. Since $w$ is balanced, $\wc(T)+\wac(T)\leq \D[7] \wt(T)$ so 
$$w_0(T)=\wt(T)-(\wc(T)+\wac(T))\geq \left(1-\frac{7}{55}\right) \wt(T) \ .$$ Now it is enough to find a biclique or a complement biclique in $(T,w_0)$ with weight $\geq \frac{1}{48} \cdot w_0(T)$ since $ \frac{1}{48} \cdot w_0(T) \geq \D \wt(T)$. Observe that every vertex still has weight at most $\D \wt(T)\leq \frac{1}{48} \cdot w_0(T)$. Since the property is self-complementary, we have only three cases to examine.

If $T$ is a bipartite trigraph, then $V(T)$ can be partitioned into two strong stable sets. One of them has weight at least $\geq \frac{w_0(T)}{2}\geq \frac{1}{16} \cdot w_0(T)$. Moreover, each vertex has weight at most $\frac{1}{48} \cdot w_0(T)$ so we can split the stable set into two parts, each of weight $\geq \frac{1}{48} \cdot w_0(T)$.

If $T$ is a doubled trigraph, then observe that $V(T)$ can be partitioned into two strong stable sets (the first side of the good partition) and two strong cliques (the second side of the good partition). Hence, one of these strong stable sets or cliques has weight $\geq w_0(T)/4$, and, by the same argument as above, we can split it in order to obtain a biclique or a complement biclique of weight $\geq \frac{1}{48} \cdot  w_0(T)$.

It is slightly more complicated if $T$ is a line trigraph. If there exists a clique $K$ of weight
at least $\frac{1}{16} \cdot w_0(T)$, then it is a strong clique: indeed, by definition of a line trigraph, every clique of size at least three is a strong clique; moreover, a clique of size at most two has weight at most $\frac{1}{24} \cdot w_0(T)$. 
Then we can split $K$ as above and get a biclique of weight $\frac{1}{48} \cdot w_0(T)$.
Assume now that such a clique does not exist and 
let $F$ be the full realization of $T$ (the graph obtained from $T$ by replacing every switchable pair by an edge). Observe that a complement biclique in $F$ is also a complement biclique in $T$. By definition of a line trigraph, $F$ is the line graph of a bipartite graph $G$. Instead of keeping positive integer weight on the edges of $G$, we convert $G$ into a multigraph $G'$ by transforming  each edge $uv$ of weight $s$ into $s$ edges $uv$. The inequality $w_0(K)\leq 1/16 \cdot w_0(T)$ for every clique $K$ of $T$ implies that the maximum degree of a vertex in $G'$ is at most $1/16 \cdot w_0(T)$. Lemma \ref{partition edges bipartite} proves the existence of two subsets $E_1, E_2$ of edges of $G'$ such that $|E_1|, |E_2|\geq w(V)/48$ and if $e_1\in E_1, e_2 \in E_2$ then $e_1$ and $e_2$ do not have a common extremity. This corresponds to a complement biclique in $F$ and thus in $T$ of weight $\geq \frac{1}{48} \cdot w_0(T)$.
\end{proof}

We can now prove the main theorem of this section:

\begin{theorem}
\label{th: SEH trigraph sans BSP}
Let $T_0$ be a trigraph of $\mathcal{F}$ with no balanced skew-partition, equipped with a virgin balanced weight $w_0$. Then $T_0$ admits a biclique or a complement biclique of weight at least $\D w_0(T_0)$.

\end{theorem}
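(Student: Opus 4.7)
The plan is to proceed by induction on $|V(T)|$, where $(T,w)$ ranges over balanced models of $(T_0,w_0)$ lying in $\mathcal{F}$ and having no balanced skew-partition. Note first that $(T_0,w_0)$ is itself such a model via the trivial partition map $\beta(v)=(\{v\},\emptyset,\emptyset)$ for $v\in V(T_0)$ and $\beta(uv)=(\emptyset,\emptyset)$ for $uv\in \sigma(T_0)$, and that by the weight condition any such model satisfies $\wt(T)=w_0(T_0)$.

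For the inductive step, if $T$ is not basic, then Theorem \ref{structure} guarantees a (complement) $2$-join and Lemma \ref{lem: contraction preserve model} applies: either its first outcome already yields a biclique or complement biclique of weight at least $\D w_0(T_0)$ in $(T_0,w_0)$ and we are done, or we obtain a contraction $(T',w')$ that is again a balanced model of $(T_0,w_0)$ in $\mathcal{F}$ with no balanced skew-partition. By Lemma \ref{2joinform} we would have $|V(T')|<|V(T)|$ (the contracted side has size at least four, while at most three marker vertices are added), so the induction hypothesis applies to $(T',w')$. For the base case, when $T$ is basic, Lemma \ref{lem: basic SEH} delivers a biclique or complement biclique $(X,Y)$ of $T$ with $\min(\wr(X),\wr(Y))\geq \D \wt(T)=\D w_0(T_0)$.

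It then remains to lift $(X,Y)$ back to $T_0$. Using the partition map $\beta$ witnessing that $(T,w)$ is a model of $(T_0,w_0)$, I would set $X'=\br(X)$ and $Y'=\br(Y)$. The weight condition gives $w_0(X')=\wr(X)\geq \D w_0(T_0)$ and similarly for $Y'$, while the strong adjacency condition ensures that if $X$ is strongly complete (resp.\ strongly anticomplete) to $Y$ in $T$, then $X'$ is strongly complete (resp.\ strongly anticomplete) to $Y'$ in $T_0$. Hence $(X',Y')$ is the desired biclique or complement biclique in $(T_0,w_0)$. The principal difficulty of the whole argument is absorbed into Lemma \ref{lem: contraction preserve model}, which must simultaneously track the model structure and the balancedness of the weight across odd $2$-joins, where the real weight of $C_2$ has to be rerouted as extra-anticomplete weight on the switchable pair $a_2b_2$; with that lemma in hand, the theorem follows by this clean induction.
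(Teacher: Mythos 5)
Your proposal is correct and follows essentially the same route as the paper: iterate the contraction of Lemma \ref{lem: contraction preserve model} starting from the trivial model $(T_0,w_0)$ until either its first outcome fires or a basic trigraph is reached, then apply Lemma \ref{lem: basic SEH} and lift the resulting biclique or complement biclique back to $T_0$ via the real teams, using the weight and strong adjacency conditions. Your explicit termination remark via Lemma \ref{2joinform} is a small, correct addition that the paper leaves implicit.
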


In particular, this proves that the class of Berge graphs with no balanced skew-partition has the Strong \EH \ property, as announced in Theorem \ref{th: SEH berge no BSP} :


\begin{proof}[Proof of Theorem \ref{th: SEH berge no BSP}]
Let $T$ be a trigraph of $\mathcal{F}$ with no balanced skew-partition and $w_0$ be the virgin weight defined by $w_0(v)=(1,0,0)$ for every vertex $v\in V(T)$. Assume that $|V(T)|\geq 3$. The goal is to prove that $T$ admits a biclique or a complement biclique of size at least $|V(T)|/55$.
If $|V(T)|\geq 55$, then $w_0$ is balanced and $w_0(T)=|V(T)|$, so we apply Theorem \ref{th: SEH trigraph sans BSP}. Otherwise, since $|V(T)|\geq3$ and $T\in \mathcal{F}$, $T$ contains at least one strong edge or one strong antiedge: this gives a biclique or a complement biclique of size $1\geq \D |V(T)|$. 
\end{proof}

\begin{proof}[Proof of Theorem \ref{th: SEH trigraph sans BSP}]
Start with $(T,w)=(T_0,w_0)$ 
and iteratively contract $(T,w)$ with the help of Lemma \ref{lem: contraction preserve model} until either item (i) of the lemma occurs, which concludes the proof, or we get a basic trigraph $(T,w)$ which is a model of $(T_0, w_0)$ and where $w$ is balanced.
%
%
%
%
In the latter case, by Lemma \ref{lem: basic SEH}, $T$ admits a biclique or a complement biclique, say a biclique, of weight $\geq\D \wt(T)$. This means that there exists a pair $(X,Y)$ of disjoint subsets of vertices of $T$ such that $\wr(X), \wr(Y)\geq \D \wt(T)$ and $X$ is strongly complete to $Y$. Since $(T,w)$ is a model of $(T_0,w_0)$, we transform $(X,Y)$ into a biclique of large weight in $T_0$ as follows: let $\beta$ be the partition map for $(T,w)$ and let $X'=\br(X)\subseteq V(T_0)$ and $Y'=\br(Y)\subseteq V(T_0)$. 
By the \emph{strong adjacency condition} in the definition of a model, $X'$ is strongly complete to $Y'$ in $T_0$ since $X$ is strongly complete to $Y$ in $T$.
Moreover, by the \emph{weight condition}, we have $w_0(\br(X))=\wr(X)$ and $w_0(\br(Y))=\wr(Y)$. But then $w_0(X'), w_0(Y') \geq \D \wt(T)=\D w_0(T_0)$, which concludes the proof.
\end{proof}

\subsection{In the closure  $\overline{\mathcal{C}^{\leq k}}$ of $\mathcal{C}$ by generalized $k$-join}

In fact, the method of contraction of a $2$-join used in the previous subsection can easily be adapted to a generalized $k$-join. We only require that the basic class $\mathcal{C}$ of graphs is hereditary and has the Strong \EH \ property. We invite the reader to refer to Subsection \ref{sec: k-join} for the definitions of a generalized $k$-join and the classes $\mathcal{C}^{\leq k}$ and $\overline{\mathcal{C}^{\leq k}}$. The proof is even much easier than for Berge trigraphs with no balanced skew-partition because there is no problematic case such as the odd 2-join, where no vertex keeps track of the deleted  part $C_2$. Consequently, there is no need to introduce extra-complete and extra-anticomplete weight, and from now on, we simply work with non-negative integer weight on the vertices. A biclique  (resp. complement biclique) in $T$ is still a pair $(X,Y)$ of subsets of vertices such that $X$ is strongly complete (resp. strongly anticomplete) to $Y$. Its weight is defined as $\min(w(X), w(Y))$.

We now define the contraction of a weighted trigraph $(T,w)$ containing a generalized $k$-join of $T_1$ and $T_2$. We follow the notation introduced in the definition of the generalized $k$-join, in particular $V(T)$ is partitioned into $(A_1, \ldots , A_r, B_1, \ldots, B_s)$. Without loss of generality, assume that $w(\cup_{j=1}^{r}A_j)\geq w(\cup_{i=1}^{s}B_i)$. Then the \defi{contraction} of $T$ is the weighted trigraph $(T',w')$ with $T'=T_1$ and $w'$ defined by $w'(v)=w(v)$ if $v\in  \cup_{j=1}^{r}A_j$, and $w'(b_i)=w(B_i)$ for $1\leq i\leq s$. 

Finally, the definition of \defi{model} is also much simpler in this setting. Given a weighted trigraph $(T_0, w_0)$, we say that $(T,w)$ is a model of $(T_0, w_0)$ if the following conditions hold:

\begin{itemize}
  \item \emph{The partition condition:} there exists a \defi{partition map} $\beta$ which assigns to every vertex $v\in V(T)$ a subset $\beta(v)\subseteq V(T_0)$ of vertices of $T_0$, called the \defi{team} of $v$.
  	Moreover, any two teams are disjoint and the union of all teams is $V(T_0)$. Intuitively, the team of $v$ will contain all the vertices of $V(T_0)$ that have been contracted to $v$.
  	Similarly as before, for a subset $U\subseteq V(T)$ of vertices, we define
$\beta(U)$ to be the union of $\beta(u)$ over all $u\in U$.
  \item \emph{The weight condition:} 
 $w(T)=w_0(T_0)$ and for all  $v\in V(T)$, $w(v)=w_0(\beta(v))$.
  \item \emph{The strong adjacency condition:} if two vertices $u$ and $v$ are strongly adjacent (resp. strongly antiadjacent)  in $T$, then $\beta(u)$ and $\beta(v)$ are strongly complete (resp. strongly anticomplete) in $T_0$. 
  \end{itemize}
  
Here are two last definitions before giving the proof.  Given $0<c<1/2$ and a trigraph $T$, a weight function $w:V(T) \mapsto \mathbb{N}$ is \defi{$c$-balanced} if for every vertex $v\in V(T)$, $w(v)\leq c \cdot w(T)$. A hereditary class  $\mathcal{C}$ of graphs  is said \defi{$c$-good} if the following holds: 
  for every $G\in \mathcal{C}$ with at least 2 vertices and for every $c$-balanced weight function $w$ on $V(G)$, $G$ admits a biclique or a complement biclique of weight $\geq c \cdot w(G)$.
We are now ready to obtain the following result and its corollary:

\begin{theorem}\label{th: patates k-join}
  Let $k\geq 1$, $0<c<1/2$ and assume that $\mathcal{C}$ is a $ck$-good class of graphs.
Then for every $T_0 \in \overline{\mathcal{C}^{\leq k}}$ containing at least one strong edge or one strong antiedge, and for every $c$-balanced weight function $w_0$,
the weighted trigraph $(T_0,w_0)$
has a biclique or a complement biclique  of weight at least $c \cdot w_0(T_0)$.
\end{theorem}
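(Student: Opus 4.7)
The plan is to prove the theorem by induction on the construction depth of $T_0$ in $\overline{\mathcal{C}^{\leq k}}$. For the base case, $T_0 \in \mathcal{C}^{\leq k}$ comes with a partition $X_1, \ldots, X_r$ (with $|X_i| \leq k$) and a realization $G \in \mathcal{C}$; the hypothesis that $T_0$ has a strong edge or antiedge forces $r \geq 2$. For each $i$, pick $v_i^* \in X_i$ maximizing $w_0$, so that $w_0(v_i^*) \geq w_0(X_i)/k$ and $V^* = \{v_1^*, \ldots, v_r^*\}$ satisfies $w_0(V^*) \geq w_0(T_0)/k$. Combining $c$-balance with this lower bound gives $w_0(v_i^*) \leq c\cdot w_0(T_0) \leq ck\cdot w_0(V^*)$, so the restriction of $w_0$ to $V^*$ is $ck$-balanced on $G[V^*] \in \mathcal{C}$ (using heredity). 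Applying $ck$-goodness yields a biclique or complement biclique $(X', Y')$ in $G[V^*]$ of weight at least $ck\cdot w_0(V^*) \geq c\cdot w_0(T_0)$. Since every pair of vertices in $V^*$ lies across distinct parts $X_i$ and is therefore not a switchable pair of $T_0$, the adjacencies in $G[V^*]$ and $T_0[V^*]$ coincide, so $(X', Y')$ is the desired biclique or complement biclique of $T_0$.

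For the inductive step, write $T_0$ as the generalized $k$-join of $T_1$ and $T_2$ with partition $(A_1, \ldots, A_r, B_1, \ldots, B_s)$ of $V(T_0)$, and consider \emph{both} possible contractions simultaneously: $(T_1, w')$ with $w'(v) = w_0(v)$ on $\bigcup_j A_j$ and $w'(b_i) = w_0(B_i)$, and $(T_2, w'')$ defined symmetrically. Both have total weight $w_0(T_0)$, and the partition maps sending each vertex of $\bigcup_j A_j$ (resp.\ $\bigcup_i B_i$) to itself and each marker $b_i$ (resp.\ $a_j$) to $B_i$ (resp.\ $A_j$) realize each as a model of $(T_0, w_0)$. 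If $w'$ is $c$-balanced, apply the induction hypothesis to $T_1$, which lies in $\overline{\mathcal{C}^{\leq k}}$ with strictly smaller construction depth and contains strong edges or antiedges between the $A_j$'s and $B$, to obtain a biclique or complement biclique $(X, Y)$ in $T_1$ of weight at least $c\cdot w_0(T_0)$; lift it to $(\beta(X), \beta(Y))$ in $T_0$. The case where $w''$ is $c$-balanced is symmetric. Otherwise both contractions fail $c$-balance, yielding indices $j, i$ with $w_0(A_j), w_0(B_i) > c\cdot w_0(T_0)$; by the defining property of the generalized $k$-join, $A_j$ is either strongly complete or strongly anticomplete to $B_i$ in $T_0$, so $(A_j, B_i)$ is immediately a biclique or complement biclique of weight $\min(w_0(A_j), w_0(B_i)) > c\cdot w_0(T_0)$. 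The coexistence of two such heavy parts is compatible with $w_0(A_j) + w_0(B_i) \leq w_0(T_0)$ precisely because $c < 1/2$.

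The main obstacle is verifying correctness of the lifting in the inductive step. Since $B = \{b_1, \ldots, b_s\}$ consists entirely of switchable pairs in $T_1$, no two $b_i$ and $b_{i'}$ are strongly adjacent; hence a biclique $(X, Y)$ of $T_1$ cannot place marker vertices on both sides, and all of them fall on one side. Then the defining property of the generalized $k$-join, that $b_i$ is strongly complete (resp.\ strongly anticomplete) to $A_j$ in $T_1$ if and only if $B_i$ is strongly complete (resp.\ strongly anticomplete) to $A_j$ in $T_0$, transfers strong completeness from the markers to the full sets $B_i$ in the lifted pair. Combined with the weight preservation $w_0(\beta(X)) = w'(X)$ coming from the model structure, this guarantees that $(\beta(X), \beta(Y))$ is a biclique (resp.\ complement biclique) in $T_0$ of the same weight as $(X, Y)$, completing the induction.
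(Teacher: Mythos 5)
Your proof is correct and takes essentially the same route as the paper: the base case reproduces Lemma~\ref{lem: SEH basic k-join} (keep one maximum-weight vertex per part, use heredity and $ck$-goodness, and note that cross-part adjacencies of the realization agree with the strong adjacencies of $T_0$), while the inductive step is the contraction-and-model argument of Lemma~\ref{lem: contraction k-join} phrased recursively instead of iteratively. The only cosmetic difference is that you test both contractions for $c$-balance and win directly when both fail, whereas the paper contracts the heavier side and extracts a heavy part $A_{j_0}$ with $w(A_{j_0})\geq w(T)/(2k)>c\cdot w(T)$ by pigeonhole; the two case analyses are equivalent.
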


\begin{coro} 
 Let $k\geq 1$, $0<c<1/2$ and $\mathcal{C}$ be a $ck$-good class of graphs. Let $(T_0,w_0)$ be a weighted trigraph such that $w_0(v)=1$ for every $v\in V(T_0)$ and $T_0 \in \overline{\mathcal{C}^{\leq k}}$. Then $(T_0, w_0)$ admits a biclique or a complement biclique of size $c \cdot |V(T_0)|$, provided that $T_0$ has at least one strong edge or one strong antiedge.
\end{coro}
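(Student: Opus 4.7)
The plan is to deduce the corollary directly from Theorem \ref{th: patates k-join}, splitting into cases according to whether the unit weight $w_0$ is $c$-balanced or not.

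First I would observe that since $w_0(v)=1$ for every $v\in V(T_0)$, we have $w_0(T_0)=|V(T_0)|$, and for any disjoint subsets $X,Y\subseteq V(T_0)$ the weight $\min(w_0(X),w_0(Y))$ of a (complement) biclique $(X,Y)$ coincides with its size $\min(|X|,|Y|)$. Hence producing a biclique or complement biclique of weight at least $c\cdot w_0(T_0)$ is exactly the same as producing one of size at least $c\cdot |V(T_0)|$.

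Next I would split on the size of $T_0$. If $|V(T_0)|\geq 1/c$, then for every vertex $v$ we have $w_0(v)=1\leq c\cdot |V(T_0)|=c\cdot w_0(T_0)$, so $w_0$ is $c$-balanced. Combined with the hypothesis that $T_0\in \overline{\mathcal{C}^{\leq k}}$ contains at least one strong edge or one strong antiedge, this lets us apply Theorem \ref{th: patates k-join} to obtain a biclique or complement biclique of weight at least $c\cdot w_0(T_0)=c\cdot |V(T_0)|$, which, by the observation above, is a (complement) biclique of the required size.

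If instead $|V(T_0)|<1/c$, then $c\cdot |V(T_0)|<1$, and the conclusion is essentially trivial: by assumption $T_0$ contains a strong edge $uv$ or a strong antiedge $uv$, which yields respectively the biclique $(\{u\},\{v\})$ or the complement biclique $(\{u\},\{v\})$ of size $1>c\cdot |V(T_0)|$. There is no real obstacle here: the only subtlety is the boundary case where $w_0$ fails to be $c$-balanced, and this is exactly when the small-trigraph hypothesis on the existence of a strong edge or antiedge takes over to provide a trivial (complement) biclique of size~$1$.
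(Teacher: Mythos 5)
Your proposal is correct and follows essentially the same two-case argument as the paper: when $|V(T_0)|\geq 1/c$ the unit weight is $c$-balanced and Theorem \ref{th: patates k-join} applies, and when $|V(T_0)|<1/c$ a single strong edge or strong antiedge already gives a (complement) biclique of size $1\geq c\cdot |V(T_0)|$. No gaps.
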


\begin{proof}
If $V(T_0)<1/c$, then one strong edge or one strong antiedge suffices to form a biclique or a complement biclique of size $1\geq c \cdot |V(T_0)|$. Otherwise, $w_0$ is $c$-balanced so we apply Theorem \ref{th: patates k-join}.
\end{proof}

To begin with, we need a counterpart of Lemma \ref{lem: contraction preserve model} to prove that the contraction of a model is still a model:

\begin{lemma}
\label{lem: contraction k-join}
Let $\mathcal{C}$ be a class of graphs, $k\geq 1$, and $0<c<1/2k$.
Let $(T_0,w_0)$ be a weighted trigraph such that $T_0\in \overline{\mathcal{C}^{\leq k}}$ and $w_0$ is $c$-balanced. Then if $(T,w)$ is a model of $(T_0,w_0)$ with $T\in \overline{\mathcal{C}^{\leq k}}$ but $T \notin \mathcal{C}^{\leq k}$ and if $w$ is $c$-balanced, at least one of the following holds:

\begin{itemize}
\item There exists a biclique or a complement biclique in $T_0$ of weight $\geq c\cdot w_0(T_0)$.
\item The contraction $(T',w')$ of $(T,w)$ is also a model of $(T_0, w_0)$; moreover  $T'\in \overline{\mathcal{C}^{\leq k}}$ and $w'$ is $c$-balanced.
\end{itemize}

\end{lemma}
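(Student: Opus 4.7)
The plan is to adapt the template of Lemma \ref{lem: contraction preserve model} to this much simpler weighted setting where there are no extra-complete or extra-anticomplete weights. Since $T \in \overline{\mathcal{C}^{\leq k}} \setminus \mathcal{C}^{\leq k}$, $T$ is obtained as a generalized $k$-join of two trigraphs $T_1, T_2 \in \overline{\mathcal{C}^{\leq k}}$, with partitions $V(T_1) = A_1 \cup \cdots \cup A_r \cup \{b_1, \ldots, b_s\}$ and $V(T_2) = B_1 \cup \cdots \cup B_s \cup \{a_1, \ldots, a_r\}$, and $V(T) = \bigcup_j A_j \cup \bigcup_i B_i$. Up to exchanging $T_1$ and $T_2$, I may assume $w(\bigcup_j A_j) \geq w(\bigcup_i B_i)$, so $w(\bigcup_j A_j) \geq w(T)/2$, and then $T' = T_1$.

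First I would check that $(T', w')$ is still a model of $(T_0, w_0)$. Starting from a partition map $\beta$ witnessing $(T, w)$, I define $\beta'(v) = \beta(v)$ for $v \in \bigcup_j A_j$ and $\beta'(b_i) = \bigcup_{v \in B_i} \beta(v)$. The partition condition is immediate since teams of $(T, w)$ are merely regrouped, and the weight condition follows from $w'(b_i) = w(B_i) = \sum_{v \in B_i} w_0(\beta(v)) = w_0(\beta'(b_i))$. The key verification is the strong adjacency condition for an edge between a marker $b_i$ and some $u \in A_j$ in $T_1$: the definition of the generalized $k$-join forces $b_i$ to be strongly complete or strongly anticomplete to $A_j$ in $T_1$ exactly when $B_i$ has the same relation to $A_j$ in $T$, and then the model property of $(T, w)$ lifts that adjacency to $\beta(B_i) = \beta'(b_i)$ and $\beta(u) = \beta'(u)$ in $T_0$. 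Adjacencies inside $\bigcup_j A_j$ are preserved directly, while pairs of marker vertices are all switchable and require nothing. The inclusion $T' = T_1 \in \overline{\mathcal{C}^{\leq k}}$ is built into the definition of the closure.

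The main content is in checking $c$-balancedness of $w'$. Every original vertex $v \in \bigcup_j A_j$ keeps its weight $w'(v) = w(v) \leq c \cdot w(T) = c \cdot w'(T')$, so $w'$ can fail to be $c$-balanced only through a marker $b_{i_0}$ with $w(B_{i_0}) > c \cdot w(T)$. Assume this occurs. I partition $\{1, \ldots, r\}$ into $J^+ \cup J^-$ according to whether $b_{i_0}$ is strongly complete or strongly anticomplete to $A_j$ in $T_1$, and put $A^+ = \bigcup_{j \in J^+} A_j$, $A^- = \bigcup_{j \in J^-} A_j$. By the definition of the generalized $k$-join, $B_{i_0}$ is then strongly complete to $A^+$ and strongly anticomplete to $A^-$ in $T$, with $w(A^+) + w(A^-) = w(\bigcup_j A_j) \geq w(T)/2$.

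It remains to argue that $\max(w(A^+), w(A^-)) \geq c \cdot w(T)$. If one of $J^+, J^-$ is empty then this maximum equals $w(\bigcup_j A_j) \geq w(T)/2 > c \cdot w(T)$ since $c < 1/(2k) \leq 1/2$. Otherwise $r \geq 2$, which forces $k \geq 2$ and hence $c < 1/(2k) \leq 1/4$, and then $\max(w(A^+), w(A^-)) \geq w(\bigcup_j A_j)/2 \geq w(T)/4 > c \cdot w(T)$. In either case, if $w(A^+) \geq c \cdot w(T)$ then the strong adjacency condition of $(T, w)$ turns $(\beta(A^+), \beta(B_{i_0}))$ into a biclique in $T_0$ with both sides of $w_0$-weight at least $c \cdot w_0(T_0)$; symmetrically $w(A^-) \geq c \cdot w(T)$ yields a complement biclique of the same weight. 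Thus either the first item of the lemma holds, or every marker satisfies $w(B_i) \leq c \cdot w(T)$ and the second item holds. The main obstacle is really the bookkeeping for the strong adjacency condition on $\beta'$, which crucially exploits the fact that the generalized $k$-join fixes the $B$-to-$A_j$ adjacency pattern uniformly across $B$.
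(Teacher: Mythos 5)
Your proposal is correct and follows essentially the same route as the paper: contract onto $T_1$, transfer the partition map by setting $\beta'(b_i)=\beta(B_i)$, and show that a marker $b_{i_0}$ violating $c$-balancedness yields a biclique or complement biclique that lifts to $T_0$. The only (immaterial) difference is in locating the heavy $A$-side partner: the paper pigeonholes a single part $A_{j_0}$ with $w(A_{j_0})\geq \frac{1}{2k}w(T)$ and pairs it with $B_{i_0}$, whereas you split $\bigcup_j A_j$ into the parts complete versus anticomplete to $B_{i_0}$ and take the heavier half — both arguments use $c<1/2k$ and give the same conclusion.
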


\begin{proof} Since $T\notin \mathcal{C}^{\leq k}$,  $T$ is the generalized $k$-join between two trigraphs $T_1$ and $T_2$. Following the same notation as in the definition of a $k$-join, we assume that $V(T_1)$ is partitioned into $(A_1, \ldots , A_r, \set{b_1, \ldots, b_s})$ and $V(T_2)$ into $(\set{a_1, \ldots,  a_r}, B_1, \ldots, B_s)$ with $r,s\leq k$.
Without loss of generality, we can assume that $w(\cup_{j=1}^{r}A_j)\geq w(\cup_{i=1}^{s}B_i)$. 
Since $r\leq k$,  there exists $j_0$ such that \mbox{$w(A_{j_0})\geq \frac{1}{2k} \cdot w(T)$}. Now if there exists $i_0$ such that $w(B_{i_0})\geq c \cdot w(T)$, 
then $(A_{j_0}, B_{i_0})$ is a biclique or a complement biclique, by definition of a generalized $k$-join, and its weight is $\geq c \cdot w(T)=c \cdot w_0(T_0)$, thus item (i) holds. 
Otherwise, the goal is to prove that the contraction $(T', w')$ of $(T,w)$ is also a model of $(T_0, w_0)$, where $T'=T_1\in \overline{\mathcal{C}^{\leq k}}$ and $w'$ defined as above by $w'(v)=w(v)$ if $v\in  \cup_{j=1}^{r}A_j$, and $w'(b_i)=w(B_i)$ for $1\leq i\leq s$. 
Observe that $w'(T')=w(T)$ and that $w'$ is $c$-balanced. 
Moreover, let $\beta$ be the partition map certifying that  $(T,w)$ is a model of $(T_0, w_0)$.
 We can easily see that $(T', w')$ is a model of $(T_0,w_0)$ by defining $\beta'(v)=\beta(v)$ if $v\in  \cup_{j=1}^{r}A_j$, and $\beta'(b_i)=\beta(B_i)$ for every $1\leq i \leq s$. We can check that all the conditions are ensured. This concludes the proof.
\end{proof}

For the \emph{basic} case, we need to adapt our assumption on $\mathcal{C}$ to make it work on $\mathcal{C}^{\leq k}$:

\begin{lemma}
\label{lem: SEH basic k-join}
Let $k\geq 1$, $0<c<1/2$ and $\mathcal{C}$ be a $ck$-good class of graphs.
Let $(T,w)$ be a weighted trigraph such that $T\in \mathcal{C}^{\leq k}$, $w$ is $c$-balanced and $T$ contains at least one strong edge or one strong antiedge. Then $T$ admits a biclique or a complement biclique of weight $c \cdot w(T)$.
\end{lemma}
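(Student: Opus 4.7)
My plan is to reduce to the $ck$-goodness of $\mathcal{C}$ applied to a realization of $T$, by concentrating all the weight on a single ``representative'' vertex of each part $X_i$; this way, any biclique produced by $\mathcal{C}$'s EH will automatically use at most one vertex per part, and hence lift cleanly to $T$.

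Let $X_1, \ldots, X_r$ be the partition of $V(T)$ witnessing $T \in \mathcal{C}^{\leq k}$, and fix a realization $G \in \mathcal{C}$ of $T$, so that $V(G) = V(T)$. For each $i$ I would pick a vertex $r_i \in X_i$ of maximum weight, ensuring $w(r_i) \geq w(X_i)/|X_i| \geq w(X_i)/k$, and then define a weight $\hat{w}$ on $V(G)$ by $\hat{w}(r_i) = w(X_i)$ and $\hat{w}(v) = 0$ for every non-representative $v$. One checks that $\hat{w}(V(G)) = \sum_i w(X_i) = w(T)$ and that $\hat{w}(r_i) = w(X_i) \leq |X_i| \cdot c \cdot w(T) \leq ck \cdot w(T)$, so $\hat{w}$ is $ck$-balanced on $G$. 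The hypothesis that $T$ contains at least one strong edge or strong antiedge forces $r \geq 2$ (otherwise $V(T) = X_1$ would consist of a single switchable component and $T$ would carry no strong adjacency), hence $|V(G)| \geq 2$, and the $ck$-goodness of $\mathcal{C}$ yields a biclique or complement biclique $(X, Y)$ in $G$ of $\hat{w}$-weight at least $ck \cdot w(T)$. Since non-representatives carry zero $\hat{w}$-weight, I may discard them from both sides of the pair without losing any weight or breaking the biclique property, so one may assume $X, Y \subseteq \{r_1, \ldots, r_r\}$ and $X \cap Y = \emptyset$.

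The lift to $T$ is then immediate: for any $r_i \in X$ and $r_j \in Y$ we have $i \neq j$, so $r_i r_j$ is not a switchable pair of $T$ (switchable pairs being by definition confined to single parts $X_\ell$), and thus adjacency of $r_i$ and $r_j$ in $G$ coincides with strong adjacency in $T$. This shows that $(X, Y)$ is a genuine biclique (resp.\ complement biclique) of $T$. For the weight, the choice of $r_i$ as the heaviest vertex of $X_i$ gives $w(X) = \sum_{r_i \in X} w(r_i) \geq \sum_{r_i \in X} w(X_i)/k = \hat{w}(X)/k \geq c \cdot w(T)$, and symmetrically $w(Y) \geq c \cdot w(T)$, as required. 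The only delicate point of the argument is verifying that $\hat{w}$ is $ck$-balanced; this is exactly where the bound $|X_i| \leq k$ is used, and it is precisely what allows the factor $k$ between the hypothesis ($ck$-good) and the conclusion (biclique of weight $c \cdot w(T)$) to be absorbed at the lifting step.
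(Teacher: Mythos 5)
Your proof is correct and follows essentially the same strategy as the paper's: select the heaviest vertex of each part $X_i$ as a representative, apply the $ck$-goodness of $\mathcal{C}$ to a graph of $\mathcal{C}$ on these representatives, and lift the resulting (complement) biclique back to $T$, absorbing the factor $k$ via $w(r_i)\geq w(X_i)/k$. The only cosmetic difference is bookkeeping: the paper deletes the non-representatives (using heredity of $\mathcal{C}$) and lets the total weight drop to $\geq w(T)/k$, whereas you keep the full realization and concentrate each part's weight on its representative — both are valid.
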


\begin{proof} 
For every switchable component of $T$, select the vertex with the largest weight and delete the others. We obtain a graph $G\in \mathcal{C}$ and define $w_G(v)=w(v)$ on its vertices. Observe that $w_G(G)\geq w(T)/k$ since every switchable component has size $\leq k$, and that  $w_G(v)=w(v)\leq c \cdot w(T)\leq ck \cdot w_G(G)$ for every $v\in V(G)$. 
Moreover, $G$ has at least 2 vertices since $T$ has at least two different switchable components.
Since $\mathcal{C}$ is $ck$-good, there exists a biclique or complement biclique$(V_1, V_2)$
 in $G$  such that $w_G(V_1), w_G(V_2)\geq ck \cdot w(G)$.
 Then $(V_1, V_2)$ is also a biclique or complement biclique in $T$ with the same weight 
 $\geq ck \cdot w_G(G)\geq c \cdot w(T)$ .
\end{proof}

\begin{proof}[Proof of Theorem \ref{th: patates k-join}]
Let $(T_0,w_0)$ be a weighted trigraph such that  $T_0 \in \overline{\mathcal{C}^{\leq k}}$ has at least one strong edge or one strong antiedge, and such that  $w_0$ is $c$-balanced. 
  Start with $(T,w)=(T_0,w_0)$ and keep contracting $(T,w)$ while
   $T \notin \mathcal{C}^{\leq k}$.
  By Lemma \ref{lem: contraction k-join}, 
at each step we know that  $T \in \overline{\mathcal{C}^{\leq k}}$, $(T,w)$ is a model of $(T_0, w_0)$ and $w$ is $c$-balanced, or we find  a biclique or a complement biclique of weight $c \cdot w_0(T_0)$ in $(T_0,w_0)$, in which case we can directly conclude.
 In the former case, we stop when $T\in \mathcal{C}^{\leq k}$.
By definition of a contraction, $T$ has at least one strong edge or one strong antiedge. Since $\mathcal{C}$ is $ck$-good, apply Lemma \ref{lem: SEH basic k-join} to get a biclique or complement biclique $(V_1, V_2)$ in $T$ of weight at least $c \cdot w(T)$.  Let $\beta$ be a partition map  certifying that $(T,w)$ is a model of $(T_0, w_0)$. Then $(\beta(V_1), \beta(V_2))$ is a biclique or complement biclique in $T_0$ according to the strong adjacency condition. 
We can now conclude by the weight condition:
$$\min(w_0(\beta(V_1)), w_0(\beta(V_2)))= \min(w(V_1), w(V_2))\geq c\cdot w(T)=c \cdot w_0(T_0) \ .$$
\end{proof}

\bibliographystyle{plain}
\bibliography{bibli_skew}

\end{document}